\documentclass[a4paper,11pt]{article}
 \usepackage[T1]{fontenc}
 \usepackage[normalem]{ulem}
 \usepackage{verbatim}
 \usepackage{graphicx}
 \usepackage{amsmath}
 \usepackage{amssymb}
\usepackage{geometry}
\usepackage{amsthm}
\usepackage[utf8]{inputenc}

\newcommand{\R}{\mathbb{R}}
\newcommand{\N}{\mathbb{N}}
\newcommand{\Z}{\mathbb{Z}}

\newcommand{\cC}{\mathcal{C}}

\newcommand{\cQ}{\mathcal{Q}}
\newcommand{\cR}{\mathcal{R}}

\newcommand{\tr}{{\rm Tr}}

\newcommand{\dps}{\displaystyle }

\newtheorem{thm}{Theorem}[section]  
\newtheorem{prop}{Proposition}[section]
\newtheorem{lem}{Lemma}[section]
\newtheorem{remark}{Remark}[section]

\bibliographystyle{plain}

\title{Local defects are always neutral in the 
\\ Thomas-Fermi-von Weiszäcker theory of crystals}
\author{Eric Cancès and Virginie Ehrlacher \\ \\
{\small Universit\'e Paris-Est, CERMICS, Project-team Micmac, INRIA-Ecole des Ponts,} \\
{\small 6 \& 8 avenue Blaise Pascal, 77455 Marne-la-Vall\'ee Cedex 2, France} 
}

\begin{document}

\maketitle

\begin{abstract}
{The aim of this article is to propose a mathematical model describing the electronic structure of crystals with local defects in the framework of the Thomas-Fermi-von Weizsäcker (TFW) theory. The approach follows the same lines as that used in  {\it E. Canc\`es, A. Deleurence and M. Lewin, Commun. Math. Phys., 281 (2008), pp. 129--177} for the reduced Hartree-Fock model, and is based on thermodynamic limit arguments. We prove in particular that it is not possible to model charged defects within the TFW theory of crystals. We finally derive some additional properties of the TFW ground state electronic density of a crystal with a local defect, in the special case when the host crystal is modelled by a homogeneous medium.
}
\end{abstract}


\section{Introduction}

The modelling and simulation of the electronic structure of crystals is a prominent topic in solid-state physics, materials science and 
nano-electronics~\cite{Kittel,Pisani,Stoneham}. Besides its importance for the applications, it is an interesting playground for mathematicians 
for it gives rise to many interesting mathematical and numerical questions.

There are two reasons why the modelling and simulation of the electronic structure of crystals is a difficult task. First, the number of particles in a crystal 
is infinite, and second, the Coulomb interaction is long-range. Of course, a real crystal contains a finite number of electrons and nuclei, but in order to 
understand and compute the macroscopic properties of a crystal from first principles, it is in fact easier, or at least not more complicated, to consider that 
we are dealing with an infinite system. 

\medskip

The first mathematical studies of the electronic structure of crystals were concerned with the so-called thermodynamic limit problem for perfect crystals. 
This problem can be stated as follows. Starting from a given electronic structure model for finite molecular systems, find out an electronic structure model 
for perfect crystals, such that when a cluster grows and ``converges'' (in some sense, see~\cite{Cat98}) to some $\cR$-periodic perfect crystal, the ground state 
electronic density of the cluster converges to the $\cR$-periodic ground state electronic density of the perfect crystal.

For Thomas-Fermi like (orbital-free) models, it is not difficult to guess what should be the corresponding models for perfect crystals. On the other hand, solving 
the thermodynamic limit problem, that is proving the convergence property discussed above, is a much more difficult task. This program was carried out for the Thomas-Fermi (TF) model in \cite{LiebSimon} and for the 
Thomas-Fermi-von Weizsäcker (TFW) model in \cite{Cat98}. Note that these two models are strictly convex in the density, and that the uniqueness of the ground 
state density is an essential ingredient of the proof. The thermodynamic limit problem for perfect crystals remains open for the Thomas-Fermi-Dirac-von 
Weisäcker model, and more generally for  nonconvex orbital-free models.

The case of Hartree-Fock and Kohn-Sham like models is more difficult. In these models, the electronic state is described in terms of electronic 
density matrices. For a finite system, the ground state density matrix is a non-negative trace-class self-adjoint operator, with trace 
$N$, the number of electrons in the system. For infinite systems, the ground state density matrix is no longer trace-class, which significantly complicates 
the mathematical arguments. Yet, perfect crystals being periodic, it is possible to make use of Bloch-Floquet theory and guess the structure of the 
periodic Hartree-Fock and Kohn-Sham models. These models are widely used in solid-state physics and materials science. Here also, the thermodynamic limit problem 
seems out of reach with state-of-the-art mathematical tools, except in the special case of the restricted Hartree-Fock (rHF) model, also called the Hartree model 
in the physics literature. Thoroughly using the strict convexity of the rHF energy functional with respect to the electronic density, Catto, Le Bris and Lions were 
able to solve the thermodynamic limit problem for the rHF model~\cite{CatBriLio-01}. 

Very little is known about the modelling of perfect crystals within the framework of the $N$-body Schrödinger model. To the best of our knowledge, the 
only available results \cite{Fefferman,HaiLewSol_thermo-08} state that the energy per unit volume is well defined in the thermodynamic limit. So far, the Schrödinger model 
for periodic crystals is still an unknown mathematical object.

\medskip

The mathematical analysis of the electronic structure of crystals with defects has been initiated in \cite{CDL1} for the rHF model. This work is based on a 
formally simple idea, whose rigorous implementation however requires some effort. This idea is very similar to that used in \cite{ChaIra-89,HaiLewSer-08,HaiLewSerSol-07} 
to properly define a no-photon quantum electrodynamical (QED) model for atoms and molecules. Loosely speaking, it consists in considering the defect (the atom or the 
molecule in QED) as a quasiparticle embedded in a well-characterized background (a perfect crystal in our case, the polarized vacuum in QED), and to build a variational 
model allowing to compute the ground state of the quasiparticle. 

In \cite{CDL1}, such a variational model is obtained by passing to the thermodynamic limit in the difference between the ground state density matrices obtained 
respectively with and without the defect. In order to avoid additional technical difficulties, the thermodynamic limit argument in~\cite{CDL1} is not carried out on 
clusters (as in \cite{Cat98,CatBriLio-01}), but on supercells of increasing sizes. Recall that the supercell model is the current state-of-the-art method to compute 
the electronic structure of a crystal with a local defect. In this approach, the defect and as many atoms of the host crystal as the available computer resources can 
accomodate, are put in a large, usually cubic, box, called the supercell, and Born-von-Karman periodic boundary conditions are imposed to the single particle orbitals 
(and consequently to the electronic density). The limitations of the supercell methods are well-known: first, it gives rise to spurious interactions between the defect 
and its periodic images, and second, it requires that the total charge contained in the supercell is neutral (otherwise, the energy per unit volume would be infinite). In 
the case of charged defects, the extra amount of charge must be compensated in one way or another, for instance by adding to the total physical charge distribution of the 
system a uniformly charged background (called a jellium). It is well-known that this procedure generates unphysical screening effects. Other charge compensation methods have 
been proposed, but none of them is completely satisfactory. Note that the above mentioned sources of error vanish in the thermodynamic limit, when the size of the supercell 
goes to infinity: both the interaction between a defect and its periodic images and the density of the jellium go to zero in the thermodynamic limit.

The variational model for the defect, considered as a quasiparticle, obtained in \cite{CDL1} has a quite unusual mathematical structure. The rHF ground state density 
matrix of the crystal in the presence of the defect can be written as
$$
\gamma = \gamma^0_{\rm per} + Q
$$
where $\gamma^0_{\rm per}$ is the density matrix of the host perfect crystal (an orthogonal projector on $L^2(\R^3)$ with infinite rank which commutes with the translations 
of the lattice) and $Q$ a self-adjoint Hilbert-Schmidt operator on $L^2(\R^3)$. Although $Q$ is not trace-class in general~\cite{CL}, it is possible to give a sense 
to its generalized trace
$$
\tr_0(Q) := \tr(Q^{++})+\tr(Q^{--}) \quad \mbox{where} \quad 
Q^{++} := (1-\gamma^0_{\rm per}) Q (1-\gamma^0_{\rm per}) \mbox{ and } Q^{--} := \gamma^0_{\rm per}Q\gamma^0_{\rm per}
$$
(as $\gamma^0_{\rm per}$ is an orthogonal projector, $\tr=\tr_0$ on the space of the trace-class operators on $L^2(\R^3)$), as well as to its density $\rho_Q$. 
The latter is defined in a weak sense
$$
\forall W \in C^\infty_{\rm c}(\R^3), \quad \tr_0(QW)=\int_{\R^3} \rho_Q W.
$$
The function $\rho_Q$ is not in $L^1(\R^3)$ in general, but only in $L^2(\R^3) \cap \cC$, where $\cC$ is the Coulomb space defined by~(\ref{eq:defC}). 
An important consequence of these results is that
\begin{itemize}
\item in general, the electronic charge of the defect can be defined neither as $\tr(Q)$ nor as $\int_{\R^3} \rho$
\item it may happen that $\rho_Q \in L^1(\R^3)$ but $\tr_0(Q) \neq \int_{\R^3} \rho_Q$ (while we would have 
$\rho_Q \in L^1(\R^3)$ and $\tr_0(Q)=\tr(Q)=\int_{\R^3}\rho_Q$ if $Q$ were a trace-class operator). In this case, $\tr_0(Q)$ and $\int_{\R^3} \rho_Q$ can be 
interpreted respectively as the {\em bare} and {\em renormalized} electronic charges of the defect~\cite{CL}.  
\end{itemize}
The reason why, in general, $Q$ is not trace-class and $\rho_Q$ is not an integrable function, is a consequence of both the infinite number of particles 
and the long-range of the Coulomb interaction.

Note that, still in the rHF setting, the dynamical version of this variational model is nothing but the random phase approximation (RPA), widely used in 
solid-state physics. The well-posedness of the nonlinear RPA dynamics, as well as of each term of the Dyson expansion with respect to the external potential, 
is proved in~\cite{CancesStoltz}.

\medskip

As far as we know, the mathematical study of the electronic structure of crystals with local defects has not been completed for the Thomas-Fermi-von Weizsäcker 
model~\cite{Lieb}. This is the purpose of the present work. The article is organized as follows. In Section~\ref{sec:TFW}, we present the periodic TFW model used in 
condensed phase calculations. After recalling the mathematical structure of the TFW model for perfect crystals (Section~\ref{sec:perfect}), we propose a variational TFW model for crystals with local defects (Section~\ref{sec:TFWdef}). We prove that this model is well-posed and 
that the nuclear charge of the defect is fully screened, in a sense that will be precisely defined. In Section~\ref{sec:TL}, we provide a mathematical justification of 
the model introduced in Section~\ref{sec:defect} based on bulk limit arguments. In Section~\ref{sec:HEG}, we focus on the special case when the host crystal is a homogeneous 
medium, that is when both the nuclear and electronic densities of the host crystal are uniform (and opposite one another to prevent Coulomb blow-up). The technical parts of the 
proofs are gathered in Section~\ref{sec:proofs}. 

\medskip

Note that the screening effect has already been studied in the context of the Thomas-Fermi model in~\cite{LiebSimon}, in the case when the host crystal is a homogeneous medium.

\section{The periodic Thomas-Fermi-von Weiszäcker model}
\label{sec:TFW}

In this section, we describe the Thomas-Fermi-von Weiszäcker (TFW) model with Born-von Karman periodic boundary conditions, used to perform calculations in 
the condensed phase. In Section~\ref{sec:TL}, we will use this periodic model to pass to the thermodynamic limit and construct a rigorously founded TFW model for crystals 
with local defects. 
 
\medskip

Let $\cR$ be a periodic lattice of $\R^3$, $\cR^*$ the associated reciprocal lattice, and $\Gamma$ the simulation cell. If for instance $\cR = a\Z^3$ 
(cubic lattice of size $a$), then $\cR^\ast=\frac{2\pi}a\Z^3$ and possible choices for $\Gamma$ are $\Gamma=(0,a]^3$ or $\Gamma=(-\frac a2,\frac a2]^3$.  Let also $\Gamma^*$ be the first Brillouin zone of the lattice $\cR$ (or in other words, the Wigner-Seitz cell of the reciprocal lattice $\cR^*$).

\medskip

We introduce the usual $\cR$-periodic $L^p$ spaces defined by
$$
L^p_{\rm per}(\Gamma) := \left\{ v \in L^p_{\rm loc}(\mathbb{R}^3) \; |\; v \; \mathcal{R}\mbox{-periodic} \right\},
$$
and endow them with the norms
$$
\|v\|_{L^p_{\rm per}(\Gamma)} := \left( \int_\Gamma |v|^p \right)^{1/p} \quad \mbox{for } 1 \le p < \infty \quad \mbox{and} \quad \|v\|_{L^\infty_{\rm per}(\Gamma)} := \mbox{ess-sup}|v|.
$$
In particular,
$$ 
\|v\|_{L^2_{\rm per}(\Gamma)}=(v,v)_{L^2_{\rm per}(\Gamma)}^{1/2} \quad \mbox{where} \quad 
\left( v, w \right)_{L^2_{\rm per}(\Gamma)} := \int_{\Gamma} v w.
$$
Any function $v\in L^2_{\rm per}(\Gamma)$ can be expanded in Fourier modes as 
$$ 
v(x) = \sum_{k\in\mathcal{R}^*} c_k(v) \frac{e^{ik\cdot x}}{|\Gamma|^{1/2}} \quad \mbox{where} \quad c_k(v) = \frac{1}{|\Gamma|^{1/2}}\int_{\Gamma} v(x) e^{-ik\cdot x}\,dx.
$$
The convergence of the above series holds in $L^2_{\rm per}(\Gamma, \mathbb{C})$, the space of locally square integrable $\cR$-periodic $\mathbb{C}$-valued functions.

For each $s \in \R$, the $\cR$-periodic Sobolev space of index $s$ is defined as
$$
H^s_{\rm per}(\Gamma) := \left\{ v(x) = \sum_{k \in \cR^\ast} c_k(v) \frac{e^{ik\cdot x}}{|\Gamma|^{1/2}} \; |\; \sum_{k \in \cR^\ast} (1+|k|^2)^s|c_k(v)|^2 < \infty, \; \forall k \in \cR^\ast, \; c_{-k}=\overline{c_k} \right\},
$$
and endowed with the inner product
$$
(v,w)_{H^s_{\rm per}(\Gamma)} := \sum_{k\in\mathcal{R}^*}(1+|k|^2)^s \overline{c_k(v)}c_k(w).
$$
The condition $\forall k \in \cR^\ast, \; c_{-k}=\overline{c_k}$ implies that the functions of $H^s_{\rm per}(\Gamma)$ are real-valued. Recall that $H^0_{\rm per}(\Gamma)=L^2_{\rm per}(\Gamma)$, $(\cdot,\cdot)_{H^0_{\rm per}(\Gamma)}=(\cdot,\cdot)_{L^2_{\rm per}(\Gamma)}$, 
$$
H^1_{\rm per}(\Gamma)=\left\{ v \in L^2_{\rm per}(\Gamma) \; | \, \nabla v \in 
\left(L^2_{\rm per}(\Gamma)\right)^3 \right\}, \quad
(v,w)_{H^1_{\rm per}(\Gamma)}=\int_\Gamma vw+ \int_\Gamma \nabla v \cdot \nabla w,
$$
and $(H^{-\sigma}_{\rm per}(\Gamma))'=H^{\sigma}_{\rm per}(\Gamma)$.

\medskip

We also introduce the $\mathcal{R}$-periodic Coulomb kernel $G_\cR$ defined as the unique function of $L^2_{\rm per}(\Gamma)$ solution of the elliptic problem
$$\left\{
\begin{array}{l}
\dps -\Delta G_\cR = 4\pi\left( \sum_{k\in \mathcal{R}} \delta_k - |\Gamma|^{-1} \right)  \\
\dps G_\cR \; \mathcal{R}\mbox{-periodic}, \; \min_{\mathbb{R}^3} G_\cR = 0.\\
\end{array}
\right.
$$
It is easy to check that
$$
G_\cR(x) = \frac{1}{|\Gamma|} \int_\Gamma G_\cR + \sum_{k\in \mathcal{R}^* \setminus \{0\}} \frac{4\pi}{|k|^2}\frac{e^{ik\cdot x}}{|\Gamma|}.
$$
The $\cR$-periodic Coulomb energy is then defined for all $f$ and $g$ in $L^2_{\rm per}(\Gamma)$ by 
\begin{eqnarray*}
D_\cR(f,g) &=& \int_{\Gamma} \int_{\Gamma} G_\cR(x-y)f(x)g(y)\,dx\,dy \\ &=&
\left(\int_\Gamma G_\cR\right)\overline{c_0(f)} c_0(g) + \sum_{k\in \mathcal{R}^* \setminus \{0\}} \frac{4\pi}{|k|^2} \overline{c_k(f)}{c_k(g)} 
\\
&=& \int_{\Gamma} (G_\cR \star_\cR f)(y)g(y) \, dy = \int_{\Gamma} (G_\cR \star_\cR g)(x) f(x) \, dx,
\end{eqnarray*}
where $\star_\cR$ denotes the $\cR$-periodic convolution product: 
$$
\forall (f,g) \in L^2_{\rm per}(\Gamma) \times L^2_{\rm per}(\Gamma), \quad
(f \star_\cR g)(x) = \int_{\Gamma} f(x-y) g(y)\,dy = \int_{\Gamma } f(y) g(x-y) \,dy.
$$

\medskip

Let $\rho^{\rm nuc}$ be a function of $L^2_{\rm per}(\Gamma)$ modelling a $\cR$-periodic nuclear charge distribution (or the effective charge distribution of a pseudopotential describing a $\cR$-periodic distribution of nuclei and core electrons). The corresponding $\cR$-periodic TFW energy functional is defined on $H^1_{\rm per}(\Gamma)$ and reads
\begin{equation} \label{eq:TFWenergyper}
E_\cR^{\rm TFW}(\rho^{\rm nuc}, v) = C_{\rm W} \int_{\Gamma} |\nabla v|^2 + C_{\rm TF} \int_{\Gamma} |v|^{10/3} + \frac{1}{2} D_\cR(\rho^{\rm nuc}-v^2, \rho^{\rm nuc}-v^2),
\end{equation}
where 
$$
C_{\rm TF}=\frac{10}{3} (3\pi^2)^{2/3} \quad \mbox{(Thomas-Fermi constant)} \quad \mbox{ and } \quad 
C_{\rm W} > 0
$$ 
(several values for $C_{\rm W}$ have been proposed in the literature, see e.g. \cite{DreizlerGross}). From a physical viewpoint, $\rho=v^2$ represents the electronic density (or the electronic density of the valence electrons if the core electrons are already incorporated into $\rho^{\rm nuc}$). The first two terms of $E_\cR^{\rm TFW}(\rho^{\rm nuc}, v)$ model the kinetic energy per simulation cell and the third term the Coulomb energy of the total $\cR$-periodic charge distribution $\rho^{\rm tot}=\rho^{\rm nuc}-v^2$.

The electronic ground state with $Q$ electrons in the simulation cell is obtained by solving the minimization problem
\begin{equation}
\label{eq:TFWgen}
 I_\cR(\rho^{\rm nuc},Q) = \inf \left\{ E_\cR^{\rm TFW}(\rho^{\rm nuc}, v), \; v \in H^1_{\rm per}(\Gamma),\; \int_{\Gamma} v^2 = Q \right\}.
\end{equation}

For the sake of simplicity, we assume that the nuclear charge density is in $L^2_{\rm per}(\Gamma)$. This allows us to gather all the Coulomb interactions in a single, non-negative term (the third term in the right hand side of (\ref{eq:TFWenergyper})). On the other hand, this excludes point-like charges represented by Dirac measures. As often in this field, it is however easy to extend our analysis to point-like nuclei, by splitting the Dirac measure $\delta_0$ as $\delta_0=(\delta_0-\phi)+\phi$ where $\phi$ is a radial function of $C^\infty_c(\R^3)$ such that $\int_{\R^3}\phi=1$ and $\mbox{Supp}(\phi)$ small enough.

The following result is classical. We will however provide a proof of it in Section~\ref{sec:proofs} for the sake of completeness.

\medskip

\begin{prop} \label{prop:EU} Let $\rho^{\rm nuc} \in L^2_{\rm per}(\Gamma)$ and $Q \ge 0$. 
  \begin{enumerate}
  \item Problem (\ref{eq:TFWgen}) has a minimizer $u$ such that $u\in H^4_{\rm per}(\Gamma) \hookrightarrow C^{2}(\R^3) \cap L^{\infty}(\R^3)$ and $u > 0$ in $\R^3$. The function $u$ satisfies the Euler equation 
\begin{equation} \label{eq:Euler}
-C_W \Delta u +\frac{5}{3} C_{\rm TF} u^{7/3} + \left( G_\cR \star_\cR (u^2 -\rho^{\rm nuc})\right) u = \epsilon_{\rm F} u,
\end{equation}
where $\epsilon_{\rm F}$ is the Lagrange multiplier of the constraint $\int_\Gamma u^2 = Q$.
  \item Problem (\ref{eq:TFWgen}) has exactly two minimizers: $u$ and $-u$.
  \end{enumerate}
\end{prop}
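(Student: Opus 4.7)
The plan is to attack existence, regularity/positivity and uniqueness in three separate stages, in that order.

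\textbf{Existence.} First I would observe that on the constraint set $\{v\in H^1_{\rm per}(\Gamma) : \int_\Gamma v^2 = Q\}$, the functional $E_\cR^{\rm TFW}(\rho^{\rm nuc},\cdot)$ is coercive: the $L^2$ mass is fixed, the Coulomb and Thomas-Fermi terms are non-negative, so $C_W \int_\Gamma|\nabla v|^2$ controls the $H^1_{\rm per}$ norm of any minimizing sequence $(v_n)$. Extract a subsequence with $v_n\rightharpoonup u$ in $H^1_{\rm per}(\Gamma)$, and use Rellich-type compactness to get $v_n\to u$ strongly in $L^p_{\rm per}(\Gamma)$ for all $p<6$. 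The mass constraint passes to the limit, the Thomas-Fermi term by strong convergence in $L^{10/3}$, the Dirichlet term by weak lower semicontinuity, and the Coulomb term by continuity of $D_\cR$ on $L^2_{\rm per}\times L^2_{\rm per}$ applied to $v_n^2\to u^2$ in $L^2_{\rm per}$ (since $u_n \in L^6$ gives $u_n^2\in L^3\subset L^2$ on the bounded cell). This yields a minimizer $u$. Writing the Euler-Lagrange equation associated with the smooth constraint $\int_\Gamma v^2=Q$ produces \eqref{eq:Euler}, with $\epsilon_{\rm F}$ the Lagrange multiplier.

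\textbf{Positivity and regularity.} Since $|\nabla|v|\,|=|\nabla v|$ a.e.\ and $\int |v|^{10/3}$, $\int v^2$, $\int G_\cR\star_\cR v^2\cdot v^2$ all depend only on $|v|$, the function $|u|$ is also a minimizer and therefore also solves \eqref{eq:Euler}. Write the Euler equation in the form $-C_W\Delta|u| + V|u|=0$ with $V:=\tfrac{5}{3}C_{\rm TF}|u|^{4/3} + G_\cR\star_\cR(u^2-\rho^{\rm nuc}) - \epsilon_{\rm F}$; I would then run a standard elliptic bootstrap on this equation. Starting from $u\in H^1_{\rm per}\hookrightarrow L^6_{\rm per}$ gives $u^2\in L^3_{\rm per}$, hence $G_\cR\star_\cR(u^2-\rho^{\rm nuc})\in H^2_{\rm per}\hookrightarrow L^\infty$ on $\Gamma$; plugging back into \eqref{eq:Euler} gives $\Delta u\in L^{18/7}_{\rm per}$, hence $u\in W^{2,18/7}\hookrightarrow L^\infty$ (as $18/7>3/2$), and iterating yields $u\in H^4_{\rm per}\hookrightarrow C^2(\R^3)\cap L^\infty(\R^3)$. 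With $V\in L^\infty_{\rm loc}$ and $|u|\in C^2$, $|u|\ge 0$, $|u|\not\equiv 0$, the strong maximum principle (Harnack inequality for $-C_W\Delta +V_+$) forces $|u|>0$ everywhere in $\R^3$. I expect this positivity step to be the most delicate, since it requires both the fact that $|u|$ remains a minimizer (an admissibility check in $H^1$) and the strong maximum principle applied in a setting where $V$ has no definite sign.

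\textbf{Uniqueness up to sign.} Here I would change variables to the density $\rho=v^2$. The kinetic term rewrites as $C_W\int_\Gamma \frac{|\nabla \rho|^2}{4\rho}$, which is convex in $\rho$ because $(a,b)\mapsto b^2/a$ is jointly convex on $(0,\infty)\times\R^3$; the Coulomb term $\tfrac12 D_\cR(\rho^{\rm nuc}-\rho,\rho^{\rm nuc}-\rho)$ is convex in $\rho$ by non-negativity of $D_\cR$; and the Thomas-Fermi term $C_{\rm TF}\int_\Gamma \rho^{5/3}$ is strictly convex in $\rho$ since $t\mapsto t^{5/3}$ is strictly convex on $[0,\infty)$. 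Hence the functional $J(\rho)=E^{\rm TFW}_\cR(\rho^{\rm nuc},\sqrt{\rho})$ is strictly convex on the convex set $\{\rho\ge 0,\ \sqrt\rho\in H^1_{\rm per},\ \int_\Gamma\rho=Q\}$, so its minimizer is unique; equivalently, $|u|$ is unique. Finally, if $v$ is any minimizer, $v^2=u^2>0$ everywhere and $v$ is continuous, so the continuous function $v/|u|$ takes values in $\{\pm 1\}$ and must be constant, giving $v=u$ or $v=-u$.
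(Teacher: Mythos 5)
Your proof is correct and follows essentially the same route as the paper's: the direct method in $H^1_{\rm per}(\Gamma)$ for existence, replacing $u$ by $|u|$ and combining an elliptic bootstrap with Harnack's inequality for regularity and strict positivity, and strict convexity of the energy in the variable $\rho=v^2$ on the convex set $\{\rho\ge 0,\ \sqrt{\rho}\in H^1_{\rm per}(\Gamma),\ \int_\Gamma\rho=Q\}$ for uniqueness up to sign. The only cosmetic difference is that you pass to the limit in the Coulomb term by continuity of $D_\cR$ on $L^2_{\rm per}\times L^2_{\rm per}$ (using $v_n^2\to u^2$ in $L^2_{\rm per}$), whereas the paper simply takes a liminf; both are valid.
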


\medskip

As a consequence of Proposition~\ref{prop:EU}, the ground state electronic density is always uniquely defined in the framework of the periodic TFW model.  

\section{The Thomas-Fermi-von Weiszäcker model for crystals}
\label{sec:defect}

We now focus on the special case of crystals. More precisely, we consider two kind of systems:
\begin{itemize}
\item a reference $\cR_1$-periodic perfect crystal with nuclear distribution 
$$
\rho^{\rm nuc}_{\rm per} \in L^2_{\rm per}(\Gamma_1),
$$
where $\Gamma_1$ is a unit cell for $\cR_1$;
\item a perturbation of the previous system characterized by the nuclear distribution 
\begin{equation} \label{eq:nuclearCD}
\rho^{\rm nuc} = \rho^{\rm nuc}_{\rm per} + \nu \quad \mbox{with} \quad 
\nu \in {\mathcal C},
\end{equation}
${\mathcal C}$ denoting the Coulomb space defined by (\ref{eq:defC}).
\end{itemize}

\subsection{Reference perfect crystal}
\label{sec:perfect}

It is shown in \cite{Cat98} that the ground state electronic density $\rho^0_{\rm per}$ of a crystal with nuclear charge distribution $\rho^{\rm nuc}_{\rm per} \in L^2_{\rm per}(\Gamma_1)$ can be identified by a thermodynamic limit argument. It is given by $\rho^0_{\rm per} = |u^0_{\rm per}|^2$ where $u^0_{\rm per} \ge 0$ is obtained by solving the minimization problem
\begin{equation}
\label{eq:TFWgen1}
 I_{\cR_1}(\rho^{\rm nuc}_{\rm per},Z) = \inf \left\{E_{\cR_1}^{\rm TFW}(\rho^{\rm nuc}_{\rm per}, v), \; v \in H^1_{\rm per}(\Gamma_1),\; \int_{\Gamma_1} v^2 = Z \right\},
\end{equation}
where
\begin{equation} \label{eq:neutral}
Z = \int_{\Gamma_1} \rho^{\rm nuc}_{\rm per}.
\end{equation}
Note that problem (\ref{eq:TFWgen1}) has a unique solution (up to the sign) for any value of $Z$. The correct value of $Z$ given by (\ref{eq:neutral}) is obtained in \cite{Cat98} by a thermodynamic limit argument. As expected, this value implies the charge neutrality condition 
\begin{equation} \label{eq:neutral2}
\int_{\Gamma_1} (\rho^{\rm nuc}_{\rm per}-\rho^0_{\rm per}) = 0.
\end{equation}
The unique non-negative solution $u^0_{\rm per}$ to (\ref{eq:TFWgen1})-(\ref{eq:neutral}) satisfies the Euler equation
\begin{equation} \label{eq:Eulerperfect}
-C_{\rm W}\Delta u^0_{\rm per} + 
 \frac{5}{3}C_{\rm TF}(\rho^0_{\rm per})^{2/3} u^0_{\rm per} + \left( G_{\cR_1} \star_{\cR_1} (\rho^0_{\rm per}-\rho^{\rm nuc}_{\rm per}) \right) u^0_{\rm per} =  \epsilon^0_{\rm F} u^0_{\rm per},
\end{equation}
where $\epsilon^0_{\rm F}$, the Lagrange multiplier of the charge constraint, which is uniquely defined, is called the Fermi level of the crystal. From (\ref{eq:neutral2}), we infer that  the Coulomb potential $V^0_{\rm per}=G_{\cR_1} \star_{\cR_1} (\rho^0_{\rm per}-\rho^{\rm nuc}_{\rm per})$ is the unique solution in $H^1_{\rm per}(\Gamma_1)$ to the $\cR_1$-periodic Poisson problem
$$
\left\{
\begin{array}{c}
\dps -\Delta V^0_{\rm per}  = \frac{4\pi}{|\Gamma_1|} \left( \rho^0_{\rm per} - \rho^{\rm nuc}_{\rm per} \right), \\
\dps V^0_{\rm per} \; \cR_1\mbox{-periodic}, \; \int_{\Gamma_1} V^0_{\rm per} = 0.\\
\end{array}\right.
$$
By elliptic regularity, $V^0_{\rm per} \in H^2_{\rm per}(\Gamma_1) \hookrightarrow C^0(\R^3) \cap L^\infty(\R^3)$. Using Proposition~\ref{prop:EU}, we obtain that $u^0_{\rm per} \in C^{2}(\R^3) \cap L^{\infty}(\R^3)$, and that $u^0_{\rm per}>0$ in $\mathbb{R}^3$. We thus have the following bounds, that will be useful in our analysis:
\begin{equation} \label{eq:boundu0per}
\exists 0 < m \le M < +\infty \quad \mbox{s.t.} \quad \forall x \in \R^3, \quad m \le u^0_{\rm per}(x) \le M.
\end{equation}
Let us denote by $H^0_{\rm per}$ the periodic Schrödinger operator on $L^2(\R^3)$ with domain $H^2(\R^3)$ and form domain $H^1(\R^3)$ defined by
$$
\forall v \in H^2(\R^3), \quad H^0_{\rm per}v = 
-C_{\rm W}\Delta v +  \frac{5}{3}C_{\rm TF}(\rho^0_{\rm per})^{2/3} v + V^0_{\rm per} v.
$$
It is classical (see e.g. \cite{ReeSim4}) that $H^0_{\rm per}$ is self-adjoint and bounded from below, and that its spectrum is purely absolutely continuous and made of a union of bands. For convenience, we will use the abuse of notation consisting in denoting by $H^0_{\rm per}v$ the distribution
$$
H^0_{\rm per}v := 
-C_{\rm W}\Delta v +  \frac{5}{3}C_{\rm TF}(\rho^0_{\rm per})^{2/3}v + V^0_{\rm per} v,
$$
which is well-defined for any $v \in L^1_{\rm loc}(\R^3)$, and belongs to $H^{-1}(\R^3)$ if $v \in H^1(\R^3)$ and to $H^{-1}_{\rm per}(\Gamma)$ if $v \in H^1_{\rm per}(\Gamma)$. We can thus rewrite equation (\ref{eq:Eulerperfect}) under the form
\begin{equation}
 \label{eq:uper}
H^0_{\rm per} u^0_{\rm per} = \epsilon^0_F u^0_{\rm per}.
\end{equation}
Using the fact that $u^0_{\rm per } > 0$ in $\R^3$, it is easy to see that $\epsilon^0_{\rm F}$ is in fact the minimum of the spectrum of the periodic Schrödinger operator $H^0_{\rm per}$ (that is the bottom of the lowest energy band). As a consequence,
\begin{equation} \label{eq:boundH0per-epsilon0F}
\forall v \in H^1(\R^3), \quad \langle (H^0_{\rm per}-\epsilon^0_{\rm F})v,v \rangle_{H^{-1}(\R^3),H^1(\R^3)} \ge 0.
\end{equation}

\subsection{Crystals with local defects} 
\label{sec:TFWdef}

We now consider a crystal with a local defect whose nuclear charge distribution is given by (\ref{eq:nuclearCD}). It is convenient to describe the TFW electronic state of this system by a function $v$ related to the electronic density $\rho$ by the relation
$$
v = \sqrt{\rho}-u^0_{\rm per}.
$$
We denote by $\mathcal{C}$ the Coulomb space defined as 
\begin{equation} \label{eq:defC}
\mathcal{C} := \left\{ f \in \mathcal{S}'(\mathbb{R}^3) \; | \; \widehat f \in L^1_{\rm loc}(\R^3), \; |\cdot|^{-1} \widehat f(\cdot) \in L^2(\mathbb{R}^3) \right\},
\end{equation}
where $\widehat{f}$ is the Fourier transform of $f$, normalized in such a way that $\|\widehat{f}\|_{L^2(\R^3)} = \|f\|_{L^2(\R^3)}$ for all $f \in L^2(\R^3)$. Endowed with the inner product
$$
D(f,g) := 4 \pi \int_{\mathbb{R}^3} \frac{\overline{\widehat{f}(k)} \, \widehat{g}(k)}{|k|^2} \, dk,
$$
${\mathcal C}$ is a Hilbert space. It holds $L^{6/5}(\R^3) \subset {\mathcal C}$ and
$$
\forall (f,g) \in L^{6/5}(\R^3) \times L^{6/5}(\R^3), \quad
D(f,g) = \int_{\R^3}\int_{\R^3} \frac{f(x) \, g(x')}{|x-x'|} \, dx \, dx'.
$$
Denoting by 
$$
E^{\rm TFW}(\rho^{\rm nuc},w)=C_{\rm W} \int_{\R^3} |\nabla w|^2+ C_{\rm TF} \int_{\R^3} |w|^{10/3}
+ \frac 12 D(\rho^{\rm nuc}-w^2,\rho^{\rm nuc}-w^2)
$$
the TFW energy functional of a finite molecular system {\it in vacuo} with nuclear charge $\rho^{\rm nuc}$, we can formally define the relative energy (with respect to the perfect crystal) of the system with nuclear charge density $\rho^{\rm nuc}_{\rm per}+\nu$ and electronic density $\rho=(u^0_{\rm per}+v)^2$ as
\begin{eqnarray}
&& \!\!\!\!\!\!\!\!\!\!\!\!\!\!\!\!\!\!\!
 E^{\rm TFW}(\rho^{\rm nuc}_{\rm per}+\nu,u^0_{\rm per}+v)-E^{\rm TFW}(\rho^{\rm nuc}_{\rm per},u^0_{\rm per}) \nonumber \\
&=& \langle (H^0_{\rm per}-\epsilon^0_{\rm F})v,v \rangle +  C_{\rm TF} \int_{\R^3} \left( |u^0_{\rm per}+v|^{10/3} - |u^0_{\rm per}|^{10/3} - \frac 53|u^0_{\rm per}|^{4/3}(2u^0_{\rm per}v+v^2) \right) \nonumber \\
&& + \frac 12 D\left(2u^0_{\rm per}v+v^2-\nu,2u^0_{\rm per}v+v^2-\nu\right)
- \int_{\R^3} \nu V^0_{\rm per} + \epsilon^0_{\rm F} q, \label{eq:formal}
\end{eqnarray}
where
\begin{equation} \label{eq:defq}
q= \int_{\R^3} \left( |u^0_{\rm per}+v|^2 - |u^0_{\rm per}|^2\right).
\end{equation}
Of course, the left-hand side of (\ref{eq:formal}) is a formal expression since it is the difference of two quantities taking the value plus infinity. On the other hand, the right-hand side of (\ref{eq:formal}) is mathematically well-defined as soon as $q$ is a fixed real number and $v \in {\mathcal Q}_+$, where
$$
\cQ_+ := \left\{ v\in H^1(\mathbb{R}^3) \; | \; v \geq -u^0_{\rm per} , \; u^0_{\rm per}v \in \mathcal{C} \right\}.
$$
The set ${\mathcal Q}_+$ is a closed convex subset of the Hilbert space
$$
{\mathcal Q} := \left\{v \in H^1(\R^3) \; | \; u^0_{\rm per}v \in {\mathcal C} \right\},
$$
endowed with the inner product defined by
$$
(v,w)_{\mathcal Q} := (v,w)_{H^1(\R^3)} + D(u^0_{\rm per}v,u^0_{\rm per}w).
$$

This formal analysis leads us to propose the following model, which will be justified in the following section by means of thermodynamic limit arguments: the ground state electronic density of the perturbed crystal characterized by the nuclear charge density (\ref{eq:nuclearCD}) is given by
$$
\rho_\nu = (u^0_{\rm per}+v_\nu)^2,
$$
where $v_\nu$ is a minimizer of
\begin{equation} \label{eq:minpbInu}
I^\nu = \inf \left\{ {\mathcal E}^\nu(v), \; v \in {\mathcal Q}_+ \right\}
\end{equation}
with
\begin{eqnarray}
{\mathcal E}^\nu (v) &:=& \langle (H^0_{\rm per}-\epsilon^0_{\rm F})v,v \rangle_{H^{-1}(\R^3),H^1(\R^3)} \nonumber \\ && 
+ C_{\rm TF} \int_{\R^3} \left( |u^0_{\rm per}+v|^{10/3} - |u^0_{\rm per}|^{10/3} - \frac 53 |u^0_{\rm per}|^{4/3}(2u^0_{\rm per}v+v^2) \right) \nonumber \\
&& + \frac 12 D\left(2u^0_{\rm per}v+v^2-\nu,2u^0_{\rm per}v+v^2-\nu\right).
\label{eq:defEnu}
\end{eqnarray}

\medskip

The following result, whose proof is postponed until Section~\ref{sec:proofs}, shows that our model is well-posed.

\medskip

\begin{thm}
\label{Th:defect}
Let $\nu\in {\mathcal C}$. Then,
\begin{enumerate}
 \item Problem (\ref{eq:minpbInu}) has a unique minimizer $v_\nu$, and there exists a positive constant $C_0 > 0$ such that
\begin{equation} \label{eq:boundvnu}
\forall \nu \in {\mathcal C}, \quad \|v_\nu\|_{\mathcal Q} \le C_0 \left( \|\nu\|_{\mathcal C} + \|\nu\|_{\mathcal C}^2 \right).
\end{equation}
The function $v_\nu$ satisfies the Euler equation
\begin{eqnarray}
&& (H^0_{\rm per} - \epsilon^0_F) v_\nu + \frac{5}{3}C_{\rm TF} \left( |u^0_{\rm per}+v_\nu|^{7/3} - |u^0_{\rm per}|^{7/3} - |u^0_{\rm per}|^{4/3}v_\nu \right) \nonumber \\
&& \qquad\qquad\qquad + \left((2u^0_{\rm per}v_\nu + v_\nu^2 - \nu)\star |\cdot|^{-1}\right)(u^0_{\rm per}+v_\nu) = 0. 
\label{eq:eulerpb}
\end{eqnarray}

\item Let us denote by $\rho^0_\nu = \nu-(2u^0_{\rm per}v_\nu + v_\nu^2)$ the total density of charge of the defect and by 
$\Phi^0_\nu = \rho^0_\nu \star |\cdot|^{-1}$ the Coulomb potential generated by $\rho^0_\nu$. It holds $v_\nu \in H^2(\R^3)$, $\Phi^0_\nu \in L^2(\R^3)$ and
\begin{equation} \label{eq:LebesgueValue}
\lim_{r \rightarrow 0} \frac{1}{|B_r|} \int_{B_r} |\widehat{\rho^0_\nu}(k)| \, dk = 0.
\end{equation}
\item
Any minimizing sequence $(v_n)_{n \in \N}$ for (\ref{eq:minpbInu}) converges to $v_\nu$ weakly in $H^1(\R^3)$ and strongly in $L^p_{\rm loc}(\R^3)$ for $1~\le~p~<~6$. Besides, $(u^0_{\rm per}v_n)_{n \in \N}$ converges to $u^0_{\rm per}v_\nu$ weakly in~$\cC$.

For any $q \in \R$, there exists a minimizing sequence $(v_n)_{n \in \N}$ for (\ref{eq:minpbInu}) consisting of functions of $\cQ_+ \cap L^1(\R^3)$ such that
\begin{equation} \label{eq:CCmin}
\forall n \in \N, \quad \int_{\R^3} \left( |u^0_{\rm per}+v_n|^2 - |u^0_{\rm per}|^2\right)=q.
\end{equation}
\end{enumerate}
\end{thm}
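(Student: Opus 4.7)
The plan is to use the direct method of the calculus of variations for parts 1 and 3, then to bootstrap the Euler equation for part 2. A recurring idea is to read $\mathcal{E}^\nu(v)$ in terms of the density $\rho = (u^0_{\rm per}+v)^2$, where the TFW convexity structure becomes transparent.

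For part 1, I first check coercivity. Each of the three pieces of $\mathcal{E}^\nu$ in (\ref{eq:defEnu}) is non-negative: the $H^0_{\rm per}-\epsilon^0_{\rm F}$ quadratic form by (\ref{eq:boundH0per-epsilon0F}), the Thomas--Fermi integrand because it is the Bregman divergence of the strictly convex function $t\mapsto t^{5/3}$ evaluated at $(u^0_{\rm per}+v)^2$ and $(u^0_{\rm per})^2$, and the Coulomb piece trivially. Expanding the Coulomb square, using Young's inequality $|D(f,\nu)|\le\tfrac14 D(f,f)+\|\nu\|_\mathcal{C}^2$, and invoking $(u^0_{\rm per})^{4/3}\ge m^{4/3}$ from (\ref{eq:boundu0per}) to extract a positive $L^2$-contribution from the Bregman divergence at small $v$, yields a bound of the shape
\[
\mathcal{E}^\nu(v) \ge c\,\|v\|_\mathcal{Q}^2 + c\,\|v\|_{L^{10/3}(\R^3)}^{10/3} - C\,\|\nu\|_\mathcal{C}^2.
\]
A minimizer is then produced by the direct method: a minimizing sequence is $\mathcal{Q}$-bounded, its weak limits stay in the weakly closed convex set $\mathcal{Q}_+$, and each term of $\mathcal{E}^\nu$ is weakly lower semi-continuous (Fatou on the pointwise non-negative Bregman integrand). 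For uniqueness I would note that $\mathcal{E}^\nu(v)$ coincides, up to terms affine in $\rho$, with the TFW energy of $w=u^0_{\rm per}+v\ge 0$ with source $\rho^{\rm nuc}_{\rm per}+\nu$; this energy is strictly convex in $\rho$ (von Weizsäcker, Thomas--Fermi and Coulomb parts are each convex in $\rho$, and the TF piece strictly so), so two minimizers must share the same density, hence the same $w=\sqrt{\rho}$, hence the same $v$. The Euler equation (\ref{eq:eulerpb}) then comes from Gâteaux differentiation on $\mathcal{Q}_+$ once a maximum-principle/Harnack step shows $u^0_{\rm per}+v_\nu>0$ pointwise, so the constraint is inactive; the bound (\ref{eq:boundvnu}) follows by combining $\mathcal{E}^\nu(v_\nu)\le\mathcal{E}^\nu(0)=\tfrac12 D(\nu,\nu)$ with the coercivity estimate.

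For part 2, I rewrite (\ref{eq:eulerpb}) as an elliptic equation $-C_\mathrm{W}\Delta v_\nu = R(v_\nu)+\Phi^0_\nu(u^0_{\rm per}+v_\nu)$ and verify that the right-hand side is in $L^2(\R^3)$: the TF nonlinearity behaves like $v_\nu^2$ near $0$ and like $v_\nu^{7/3}$ at infinity (both controlled by $v_\nu\in H^1\cap L^\infty_{\rm loc}$), while $\Phi^0_\nu\in L^6(\R^3)$ follows from $\rho^0_\nu\in\mathcal{C}$ via Sobolev's $\dot H^1\hookrightarrow L^6$; a standard bootstrap then gives $v_\nu\in H^2(\R^3)$. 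To upgrade to $\Phi^0_\nu\in L^2(\R^3)$, I divide the Euler equation pointwise by $u^0_{\rm per}+v_\nu$, which is uniformly bounded below by a strictly positive constant (strict positivity on compact sets by Harnack applied to the elliptic equation satisfied by $w_\nu=u^0_{\rm per}+v_\nu$, combined with $|v_\nu(x)|\to 0$ at infinity since $H^2(\R^3)\hookrightarrow C^0_0(\R^3)$, and with the lower bound $m$ of (\ref{eq:boundu0per})); this isolates $\Phi^0_\nu$ as a sum of $L^2$ functions. Then $|k|^{-2}\widehat{\rho^0_\nu}\in L^2(\R^3)$, and (\ref{eq:LebesgueValue}) drops out of Cauchy--Schwarz:
\[
\frac{1}{|B_r|}\int_{B_r}|\widehat{\rho^0_\nu}(k)|\,dk\le C\, r^{1/2}\left(\int_{B_r}\frac{|\widehat{\rho^0_\nu}(k)|^2}{|k|^4}\,dk\right)^{1/2}\xrightarrow[r\to 0]{}0.
\]
This is the main obstacle of the proof: it is the analytic incarnation of the defect-neutrality phenomenon advertised in the title, and it works only because the strict positivity of $u^0_{\rm per}+v_\nu$ lets one trade the infrared singularity of the Coulomb kernel for the uniform lower bound on $u^0_{\rm per}$, a mechanism unavailable in the reduced Hartree--Fock setting of \cite{CDL1}.

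For part 3, weak $H^1$-convergence of any minimizing sequence follows from the coercivity estimate combined with the uniqueness of the minimizer (every weakly convergent subsequence has limit $v_\nu$); Rellich--Kondrachov then gives strong $L^p_{\rm loc}$-convergence for $p<6$, and boundedness in $\mathcal{Q}$ yields weak-$\mathcal{C}$ convergence of $u^0_{\rm per}v_n$. To produce a minimizing sequence in $L^1(\R^3)$ satisfying (\ref{eq:CCmin}), I use a truncate-and-correct construction: first set $v_n^{(0)}=\chi_n v_\nu\in H^1\cap L^1$ with $\chi_n$ a smooth cut-off equal to $1$ on $B_n$, verify $\mathcal{E}^\nu(v_n^{(0)})\to I^\nu$ by dominated convergence for the TF and kinetic parts plus weak-$\mathcal{C}$ continuity for the Coulomb part, then add a small correction $\xi_n\in C^\infty_c(\R^3)$ supported in a remote annulus (where $u^0_{\rm per}\ge m$) chosen so that $v_n=v_n^{(0)}+\xi_n$ satisfies (\ref{eq:CCmin}) with $\|\xi_n\|_\mathcal{Q}\to 0$, so that $\mathcal{E}^\nu(v_n)\to I^\nu$ as well.
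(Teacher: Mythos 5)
Your parts 1 and 3 follow essentially the paper's own route (direct method with the coercivity/weak l.s.c. lemma, strict convexity in $\rho=(u^0_{\rm per}+v)^2$ for uniqueness, Harnack to deactivate the constraint, truncation plus a remote low-density bump carrying the missing charge for (\ref{eq:CCmin})); the only quibble there is that your displayed coercivity bound $\mathcal{E}^\nu(v)\ge c\|v\|_{\mathcal Q}^2-C\|\nu\|_{\mathcal C}^2$ is too strong as stated: the $\mathcal C$-part of the $\mathcal Q$-norm is only controlled through $D(v^2,v^2)\lesssim\|v\|_{H^1}^4$, which is exactly why (\ref{eq:boundvnu}) contains the quadratic term $\|\nu\|_{\mathcal C}^2$ rather than being linear in $\|\nu\|_{\mathcal C}$.

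The genuine gap is in part 2, which is the heart of the theorem. Your argument is circular. From $\rho^0_\nu\in\mathcal C$ you only get $\Phi^0_\nu\in L^6(\R^3)$, so the right-hand side of $-C_{\rm W}\Delta v_\nu=R(v_\nu)+\Phi^0_\nu(u^0_{\rm per}+v_\nu)$ lies in $L^2+L^6$ (the term $\Phi^0_\nu u^0_{\rm per}$ is only $L^6$, since $u^0_{\rm per}\ge m$), which gives $H^2_{\rm loc}$ regularity and decay of $v_\nu$ at infinity but \emph{not} $v_\nu\in H^2(\R^3)$ globally; in fact, because $u^0_{\rm per}$ is bounded above and below, $\Delta v_\nu\in L^2(\R^3)$ is essentially \emph{equivalent} to $\Phi^0_\nu\in L^2(\R^3)$. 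Consequently, when you divide the Euler equation by $u^0_{\rm per}+v_\nu$ to ``isolate $\Phi^0_\nu$ as a sum of $L^2$ functions'', the numerator contains $\Delta v_\nu$, whose global $L^2$ membership is precisely what you were trying to prove; the division returns only the $L^6$-type information you put in and cannot improve the infrared (low-$|k|$) behaviour of $\widehat{\rho^0_\nu}/|k|^2$, which is what (\ref{eq:LebesgueValue}) is about. Local elliptic regularity and pointwise positivity of $u^0_{\rm per}+v_\nu$ are simply not enough here. The paper's mechanism is different and genuinely global: it adds to the Euler equation the Poisson equation $-\Delta V_\nu=4\pi(2u^0_{\rm per}v_\nu+v_\nu^2)$ so that $W_\nu=v_\nu+V_\nu$ solves the \emph{screened} problem $-\Delta W_\nu+u^0_{\rm per}W_\nu=\widetilde f+(\nu\star|\cdot|^{-1})u^0_{\rm per}$ with $\widetilde f\in L^2$, splits off the $H^2$ part $w_\nu$, and then for the remainder $\widetilde w_\nu$ uses the identity $4\pi\int_{\R^3}\widetilde\rho_\nu^2/u^0_{\rm per}=D(\nu-\widetilde\rho_\nu,\widetilde\rho_\nu)<\infty$ to conclude $\widetilde\rho_\nu\in L^2$ and hence $\Phi^0_\nu\in L^2$; only \emph{after} that does one read off $v_\nu\in H^2(\R^3)$ from the Euler equation. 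So the uniform lower bound $m$ enters through the zeroth-order coefficient of the screened operator $-\Delta+u^0_{\rm per}$, not through a pointwise division of the Euler equation; you would need to add an argument of this type (or an equivalent fixed-point/duality argument in $\mathcal C'$) to close part 2.
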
 

\medskip

We conclude this present section with some physical considerations regarding the charge of the defect.

\medskip

\begin{remark}
Let $\nu \in L^1(\R^3) \cap L^2(\R^3)$. Assuming that $v_\nu \in L^1(\R^3) \cap L^2(\R^3)$ (a property satisfied at least in the special case of a homogeneous host crystal, see Section~\ref{sec:HEG}), then $\widehat{\rho^0_\nu} \in C^0(\R^3)$ and (\ref{eq:LebesgueValue}) simply means that the continuous function $\widehat \rho^0_\nu$ vanishes at $k=0$, or equivalently that
\begin{equation} \label{eq:neutralityrho}
\int_{\R^3} \rho^0_\nu = 0.
\end{equation}
The property (\ref{eq:LebesgueValue}) means that $0$ is a Lebesgue point of $\widehat{\rho^0_\nu}$ and that the Lebesgue value of $\widehat{\rho^0_\nu}$ at $0$ is equal to zero. It can therefore be interpreted as a weak form of the neutrality condition (\ref{eq:neutralityrho}), also valid when $\rho^0_\nu \notin L^1(\R^3)$. The fourth statement of Theorem~\ref{Th:defect} implies that there is no way to model a charge defect within the TFW theory: loosely speaking, if we try to put too many (or not enough) electrons in the system, the electronic density will relax to $(u^0_{\rm per}+v_\nu)^2$ and the remaining (or missing) $q-\int_{\R^3} \nu$ electrons will escape to (or come from) infinity with an energy $\epsilon^0_{\rm F}$. 
\end{remark}

\subsection{Thermodynamic limit}
\label{sec:TL}

The purpose of this section is to provide a mathematical justification of the model (\ref{eq:minpbInu}). Consider a crystal with a local defect characterized by the nuclear charge distribution 
\begin{equation}
\rho^{\rm nuc}=\rho^{\rm nuc}_{\rm per}+\nu \quad \mbox{with} \quad \nu \in L^1(\R^3) \cap L^2(\R^3).
\end{equation}
In numerical simulations, the TFW ground state electronic density of such a system is usually computed with the supercell method. For a given $L \in \N$ large enough, the supercell model of size $L$ is the periodic TFW model (\ref{eq:TFWgen}) with 
\begin{equation} \label{eq:supercellL}
\cR=\cR_L:=L\cR_1, \quad \Gamma = \Gamma_L:=L\Gamma_1, \quad \rho^{\rm nuc}=\rho^{\rm nuc}_{\rm per}+\nu_L, \quad
Q= Z \, L^3 + q,
\end{equation}
where
$$
\nu_L(x) = \sum_{z\in \mathcal{R}_L} (\chi_{\Gamma_L} \nu)(x-z),
$$
$\chi_{\Gamma_L}: \mathbb{R}^3 \rightarrow \mathbb{R}$ denoting the characteristic function of the simulation cell $\Gamma_L$. Note that $\nu_L$ is the unique $\cR_L$-periodic function such that $\nu_L|_{\Gamma_L}=\nu|_{\Gamma_L}$. In practice, $L$ is chosen as large as possible (given the computational means available) to limit the error originating from the artificial Born-von Karman periodic boundary conditions.

\medskip

It is important to note that $u^0_{\rm per}$ is the unique minimizer (up to the sign) of the supercell model of size $L$ for $\rho^{\rm nuc}=\rho^{\rm nuc}_{\rm per}$ and $Q=ZL^3$, whatever $L \in \N^\ast$. Reasoning as in the previous section, we introduce the energy functional
\begin{eqnarray}
{\mathcal E}^{\nu}_L (v_L) &:=& \langle (H^0_{\rm per}-\epsilon^0_{\rm F})v_L,v_L \rangle_{H^{-1}_{\rm per}(\Gamma_L),H^1_{\rm per}(\Gamma_L)}  \nonumber \\ &&
+ C_{\rm TF} \int_{\Gamma_L} \left( |u^0_{\rm per}+v_L|^{10/3} - |u^0_{\rm per}|^{10/3} - \frac 53 |u^0_{\rm per}|^{4/3}(2u^0_{\rm per}v_L+v_L^2) \right) \nonumber \\
&& + \frac 12 D_{\cR_L}\left(2u^0_{\rm per}v_L+v_L^2-\nu_L,2u^0_{\rm per}v_L+v_L^2-\nu_L\right),
\label{eq:defEnuL}
\end{eqnarray}
which is such that
\begin{equation}
E^{\rm TFW}_{\cR_L}(\rho^{\rm nuc}_{\rm per}+\nu_L,u^0_{\rm per}+v_L)-E^{\rm TFW}_{\cR_L}(\rho^{\rm nuc}_{\rm per},u^0_{\rm per}) = {\mathcal E}^{\nu}_L (v_L)
- \int_{\Gamma_L} \nu_L V^0_{\rm per} + \epsilon^0_{\rm F} q, \label{eq:diffTFW}
\end{equation}
with
\begin{equation} \label{eq:defqsupercell}
q=\int_{\Gamma_L} \left( |u^0_{\rm per}+v_L|^2 - |u^0_{\rm per}|^2\right)
= \int_{\Gamma_L} (2u^0_{\rm per}v_L+v_L^2).
\end{equation}
While (\ref{eq:formal}) and (\ref{eq:defq}) are formal expressions, (\ref{eq:diffTFW}) and (\ref{eq:defqsupercell}) are well-defined mathematical expressions. The ground state electronic density of the supercell model for the data defined by (\ref{eq:supercellL}) is therefore obtained as
$$
\rho^{0,\nu,q}_L = (u^0_{\rm per}+v_{\nu,q,L})^2
$$
where $v_{\nu,q,L}$ is a minimizer of 
\begin{equation} \label{eq:minTFWdefL}
I^{\nu,q}_L = \inf \left\{ {\mathcal E}^\nu_{L}(v_L), \; v_L \in \cQ_{+,L}, \; \int_{\Gamma_L} (2u^0_{\rm per}v_L + v_L^2) = q \right\},
\end{equation}
$\cQ_{+,L}$ denoting the convex set
$$
\cQ_{+,L}  =  \left\{ v_L\in H^1_{\rm per}(\Gamma_L) \; | \; v_L \geq -u^0_{\rm per} \right\}.
$$
We also introduce the minimization problem
\begin{equation} \label{eq:minTFWdefLf}
I^{\nu}_L = \inf \left\{ {\mathcal E}^\nu_{L}(v_L), \; v_L \in \cQ_{+,L}  \right\},
\end{equation}
in which we do not impose {\it a priori} the electronic charge in the supercell.

\medskip

\begin{thm}
\label{Th:thlim}
Let $\nu \in L^1(\R^3) \cap L^2(\R^3)$.
\begin{enumerate}
 \item {\em Thermodynamic limit with charge constraint.} For each $q\in\mathbb{R}$ and each $L \in \N^\ast$, the minimization problem (\ref{eq:minTFWdefL}) has a unique minimizer $v_{\nu,q,L}$. For each $q \in \R$, the sequence $(v_{\nu,q,L})_{L\in\mathbb{N}^*}$ converges, weakly in $H^1_{\rm loc}(\mathbb{R}^3)$, and strongly in $L^p_{\rm loc}(\mathbb{R}^3)$ for all $1\leq p < 6$, towards $v_\nu$, the unique solution to problem (\ref{eq:minpbInu}). For each  $q\in\mathbb{R}$ and each $L \in \N^\ast$, $v_{\nu,q,L}$ satisfies the Euler equation
 \begin{eqnarray}
&& \!\!\!\!\!\!\!\!\!\!\!\!\!\!\!\!\!\!\!\!\!
(H^0_{\rm per} - \epsilon^0_F) v_{\nu,q,L} + \frac{5}{3}C_{\rm TF} \left( |u^0_{\rm per}+v_{\nu,q,L}|^{7/3} - |u^0_{\rm per}|^{7/3} - |u^0_{\rm per}|^{4/3}v_{\nu,q,L} \right) \nonumber \\
&&  \!\!\!\!\!\!\!\!\!\!\!\!\!\!\!\!\!\!\!\!\! 
+ \left((2u^0_{\rm per}v_{\nu,q,L} + v_{\nu,q,L}^2 - \nu_L)\star_{\cR_L} G_{\cR_L} \right)(u^0_{\rm per}+v_{\nu,q,L}) = \mu_{\nu,q,L}(u^0_{\rm per}+v_{\nu,q,L}), 
\label{eq:eulerpbL}
\end{eqnarray}
where $\mu_{\nu,q,L} \in \R$ is the Lagrange multiplier of the constraint $\int_{\Gamma_L} (2u^0_{\rm per}v_{\nu,q,L}+v_{\nu,q,L}^2) = q$, and it holds $\dps \lim_{L \to \infty} \mu_{\nu,q,L} = 0$ for each $q \in \R$.

\item {\em Thermodynamic limit without charge constraint.} For each $L \in \N^\ast$, the minimization problem (\ref{eq:minTFWdefLf}) has a unique minimizer $v_{\nu,L}$. It holds
 \begin{eqnarray}
&& \!\!\!\!\!\!\!\!\!\!\!\!\!\!\!\!\!\!\!\!\!
(H^0_{\rm per} - \epsilon^0_F) v_{\nu,L} + \frac{5}{3} C_{\rm TF} \left( |u^0_{\rm per}+v_{\nu,L}|^{7/3} - |u^0_{\rm per}|^{7/3} - |u^0_{\rm per}|^{4/3}v_{\nu,L} \right) \nonumber \\
&&  \!\!\!\!\!\!\!\!\!\!\!\!\!\!\!\!\!\!\!\!\! 
+ \left((2u^0_{\rm per}v_{\nu,L} + v_{\nu,L}^2 - \nu_L)\star_{\cR_L} G_{\cR_L} \right)(u^0_{\rm per}+v_{\nu,L}) = 0.
\label{eq:eulerpbL2}
\end{eqnarray}
The sequence $(v_{\nu,L})_{L\in \mathbb{N}^*}$ also converges to $v_\nu$, weakly in $H^1_{\rm loc}(\mathbb{R}^3)$, and strongly in $L^p_{\rm loc}(\mathbb{R}^3)$ for all $1\leq p < 6$. Besides,
$$
\int_{\Gamma_L} \left(\nu_L - (2u^0_{\rm per}v_{\nu,L} + v_{\nu,L}^2)\right) \mathop{\longrightarrow}_{L \to \infty} 0.
$$
\end{enumerate}
\end{thm}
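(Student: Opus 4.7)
My plan is to reduce the two supercell problems to periodic TFW problems covered by Proposition~\ref{prop:EU}, build a uniform upper bound on the relative energies by periodizing $v_\nu$, extract local compactness and identify the limit as $v_\nu$ via the uniqueness statement in Theorem~\ref{Th:defect}, and finally extract the asymptotic neutrality as an additional step in the unconstrained case. Via the key identity (\ref{eq:diffTFW}) and the bijection $w_L = u^0_{\rm per} + v_L \ge 0$, the constrained problem (\ref{eq:minTFWdefL}) is equivalent to the periodic TFW problem (\ref{eq:TFWgen}) on $\Gamma_L$ with $\rho^{\rm nuc} = \rho^{\rm nuc}_{\rm per}+\nu_L$ and $Q = ZL^3+q$ (nonnegative for $L$ large). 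Proposition~\ref{prop:EU} then produces a unique positive minimizer $w_L$, yielding $v_{\nu,q,L} = w_L - u^0_{\rm per} \in \cQ_{+,L}$, and subtracting the Euler equation (\ref{eq:uper}) for $u^0_{\rm per}$ from that of $w_L$ gives (\ref{eq:eulerpbL}). For the unconstrained problem (\ref{eq:minTFWdefLf}), the same Proposition (with $Q$ free, equivalently strict convexity of the functional in the density and coercivity from the kinetic term) yields a unique minimizer $v_{\nu,L}$ satisfying the Lagrange-multiplier-free equation (\ref{eq:eulerpbL2}).

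For the uniform bound, I would fix $\eta \in C^\infty_c(\R^3)$ with $\eta \equiv 1$ on $\Gamma_{1/2}$ and $\mathrm{supp}(\eta)\subset \Gamma_1$, set $\eta_L(x):=\eta(x/L)$, and take $\widetilde v_L$ to be the $\cR_L$-periodization of $\eta_L v_\nu + c_L$, with $c_L\in\R$ adjusted to enforce $\int_{\Gamma_L}(2u^0_{\rm per}\widetilde v_L + \widetilde v_L^2) = q$; the Lebesgue-value property (\ref{eq:LebesgueValue}) ensures $|c_L| \to 0$ fast enough. Dominated convergence for the TFW nonlinearities and $G_{\cR_L}\to |\cdot|^{-1}$ on compacts for the Coulomb term yield $\mathcal{E}^\nu_L(\widetilde v_L) \to \mathcal{E}^\nu(v_\nu) = I^\nu$, hence $\limsup_L I^{\nu,q}_L \le I^\nu$. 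Combining this upper bound with a Poincaré-type coercivity argument that exploits $u^0_{\rm per} \ge m > 0$ (as in the proof of Theorem~\ref{Th:defect}), one obtains uniform-in-$L$ bounds on $\|v_{\nu,q,L}\|_{H^1(K)}$ for every compact $K \subset \R^3$ and on $\|u^0_{\rm per}v_{\nu,q,L}\|_{\cC}$.

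Local compactness then provides a subsequence with $v_{\nu,q,L} \rightharpoonup v_*$ weakly in $H^1_{\rm loc}$, strongly in $L^p_{\rm loc}$ for $1 \le p < 6$, and $u^0_{\rm per}v_{\nu,q,L} \rightharpoonup u^0_{\rm per}v_*$ weakly in $\cC$. Weak lower semi-continuity together with $\limsup_L I^{\nu,q}_L \le I^\nu$ yields $\mathcal{E}^\nu(v_*)\le I^\nu$, so the uniqueness of the minimizer in Theorem~\ref{Th:defect} forces $v_* = v_\nu$. Testing (\ref{eq:eulerpbL}) against $\varphi \in C^\infty_c(\R^3)$ and passing to the limit --- the nonlinear TFW contribution handled by strong $L^p_{\rm loc}$ convergence, the Coulomb convolution by $G_{\cR_L}\to|\cdot|^{-1}$ combined with weak $\cC$-convergence of the defect density --- produces the equation for $v_\nu$ with an additional term $\mu_*(u^0_{\rm per}+v_\nu)$, where $\mu_*$ is a subsequential limit of $\mu_{\nu,q,L}$; comparison with (\ref{eq:eulerpb}) forces $\mu_*(u^0_{\rm per}+v_\nu) = 0$, and since $u^0_{\rm per}+v_\nu > 0$, we get $\mu_* = 0$. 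Uniqueness of the limit then promotes the subsequential convergence to the full sequence.

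The unconstrained problem (\ref{eq:minTFWdefLf}) is handled by the same argument (with no Lagrange multiplier from the outset), giving $v_{\nu,L} \to v_\nu$. The asymptotic neutrality $\int_{\Gamma_L}(\nu_L-(2u^0_{\rm per}v_{\nu,L}+v_{\nu,L}^2)) \to 0$ requires a separate argument: I would combine the local convergence with a uniform-in-$L$ decay estimate for $v_{\nu,L}$ away from the support of $\nu_L$, obtained by bootstrapping (\ref{eq:eulerpbL2}) using the Yukawa-type behaviour of the linearized TFW operator --- the positive mass $\frac{5}{3}C_{\rm TF}|u^0_{\rm per}|^{4/3} \ge c\, m^{4/3} > 0$ provides a strict spectral gap above $\epsilon^0_F$. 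The main difficulty will be precisely this last step: producing uniform decay estimates that survive the $L \to \infty$ limit and are strong enough for the supercell charge integral to vanish, despite the long-range periodic Coulomb kernel $G_{\cR_L}$.
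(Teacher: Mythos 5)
Your overall skeleton (reduction to Proposition~\ref{prop:EU}, an energy upper bound via trial states, local compactness, identification of the limit through uniqueness of $v_\nu$) is the right one, but three steps conceal genuine gaps. First, the upper bound: your trial state $\eta_L v_\nu + c_L$ with a constant shift enforcing the charge constraint is not under control. The size of $c_L$ is governed by $q-\int_{\Gamma_L}\bigl(2u^0_{\rm per}\eta_L v_\nu+(\eta_L v_\nu)^2\bigr)$, and since $2u^0_{\rm per}v_\nu+v_\nu^2$ is in general \emph{not} in $L^1(\R^3)$ (only $u^0_{\rm per}v_\nu\in\cC$, $v_\nu\in L^2$), this sharp-cutoff box integral may grow like $L^{3/2}$; the Lebesgue-point property (\ref{eq:LebesgueValue}) does not control it. With $c_L\sim L^{-3/2}$ the constant shift produces $O(1)$ (non-vanishing) errors in the Thomas--Fermi and Coulomb terms, and in any case the convergence of $D_{\cR_L}$ to $D$ for a trial density spread over all of $\Gamma_L$ is not covered by ``$G_{\cR_L}\to|\cdot|^{-1}$ on compacts'' (Lemma~\ref{lem:limitDL} requires $L^1\cap L^2$ data). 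The paper avoids all of this: Lemma~\ref{lem:density} adjusts the charge by adding a small bump $t_m\chi_m u^0_{\rm per}$ translated far away (so its energy and its Coulomb interaction with the main part vanish), producing a \emph{fixed} compactly supported trial function of prescribed charge, which is then periodized; and the a priori $H^1_{\rm per}(\Gamma_L)$ bound needed before extraction is obtained from the trivial constant competitor $z_L=-\alpha+\sqrt{\alpha^2+q/|\Gamma_L|}$. Second, the lower bound: invoking ``weak lower semi-continuity'' glosses over the only delicate term, $\langle(H^0_{\rm per}-\epsilon^0_{\rm F})v_L,v_L\rangle$ on the growing torus. Its integrand is not pointwise nonnegative (only the quadratic form is), so no Fatou/local argument applies; the paper handles it by writing the form as $\|i_L^\ast(H^0_{\rm per}-\epsilon^0_{\rm F})^{1/2}v_L\|_{L^2}^2$ and proving, via Bloch--Floquet theory (Lemma~\ref{lem:weakiL}), that the weak $L^2$ limit of these functions is $(H^0_{\rm per}-\epsilon^0_{\rm F})^{1/2}u_\nu$. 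This is a missing idea, not a routine detail. (Also, your identification $\mu_\ast=0$ presupposes that $(\mu_{\nu,q,L})$ is bounded, which you never establish; the paper instead tests (\ref{eq:eulerpbL}) against $u^0_{\rm per}$ and gets $\mu_{\nu,q,L}\,(ZL^3+O(1))=O(L^{5/2})$ directly.)

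Third, and most seriously, the asymptotic neutrality in the unconstrained case. Your plan --- uniform decay estimates for $v_{\nu,L}$ away from the defect, strong enough to pass the supercell charge integral to the limit --- would in effect require proving that $\rho^0_\nu=\nu-(2u^0_{\rm per}v_\nu+v_\nu^2)$ is integrable with $\int_{\R^3}\rho^0_\nu=0$, since the ``local'' part of the integral over any fixed ball does not vanish by itself; but this strong neutrality is precisely what is \emph{not} available in general (the paper only proves the weak Lebesgue-point version (\ref{eq:LebesgueValue}), and full integrability only for the homogeneous host crystal, Theorem~\ref{Jellium}). So the step you flag as ``the main difficulty'' is not merely hard, it is the wrong route. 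The paper's argument is short and entirely different: take $u^0_{\rm per}$ as a test function in (\ref{eq:eulerpbL2}) and isolate the zero Fourier mode of $G_{\cR_L}$ in $D_{\cR_L}\bigl(2u^0_{\rm per}v_{\nu,L}+v_{\nu,L}^2-\nu_L,(u^0_{\rm per})^2\bigr)$, which carries the coefficient $g_1ZL^2$ multiplying exactly the total charge discrepancy, while all remaining terms are bounded by $CL^{3/2}$ thanks to the uniform $H^1_{\rm per}(\Gamma_L)$ and $D_{\cR_L}$ bounds; dividing by $g_1ZL^2$ gives the neutrality without any decay estimate or integrability of the limiting density.
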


\subsection{The special case of homogeneous host crystals} 
\label{sec:HEG}

In this section, we address the special case when the host crystal is a homogeneous medium completely characterized by the positive real number $\alpha$ such that 
\begin{equation} \label{eq:jellium}
\forall x \in \R^3, \quad 
\rho^{\rm nuc}_{\rm per}(x) =\rho^0_{\rm per}(x) = \alpha^2 \quad \mbox{and} \quad u^0_{\rm per}(x) = \alpha.
\end{equation}
In this case, analytical expressions for the linear response can be derived, leading to the following result.

\medskip

\begin{thm}
 \label{Jellium} Assume that (\ref{eq:jellium}) holds. For each $\nu\in\mathcal{C}$, the unique solution $v_\nu$ to (\ref{eq:minpbInu}) can be expanded as 
\begin{equation} \label{eq:vnu}
v_\nu = g \star \nu + \widetilde r_2(\nu)
\end{equation}
where $g\in L^1(\mathbb{R}^3)$ is characterized by its Fourier transform
$$
\widehat g(k) = \frac{1}{(2\pi)^{3/2}} \, \frac{4\pi\alpha}{C_{\rm W}|k|^4+\frac{20}9 C_{\rm TF}\alpha^{4/3} |k|^2+8\pi\alpha^2},
$$
and where $\widetilde r_2(\nu) \in L^1(\R^3)$. For each $\nu\in L^1(\mathbb{R}^3)\cap\mathcal{C}$, it holds $v_\nu \in L^1(\mathbb{R}^3) \cap L^2(\R^3)$ and
$$
\int_{\mathbb{R}^3} (\nu-(2 u^0_{\rm per}v_\nu + v_\nu^2)) = 0.
$$
\end{thm}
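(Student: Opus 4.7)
The plan is to exploit the translation invariance of the homogeneous host crystal to diagonalize the linearized Euler equation via Fourier analysis, identify $g$ as its Green's function, and then estimate $\widetilde r_2(\nu) := v_\nu - g\star\nu$ as a convolution that stays in $L^1(\R^3)$.

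Under~(\ref{eq:jellium}) one has $V^0_{\rm per}\equiv 0$, $\epsilon^0_{\rm F}=\tfrac{5}{3}C_{\rm TF}\alpha^{4/3}$, and $H^0_{\rm per}-\epsilon^0_{\rm F}=-C_{\rm W}\Delta$, so the Euler equation~(\ref{eq:eulerpb}) reduces to
\[
-C_{\rm W}\Delta v_\nu + \tfrac{20}{9}C_{\rm TF}\alpha^{4/3}\,v_\nu + 2\alpha^2\bigl(v_\nu\star|\cdot|^{-1}\bigr) - \alpha\bigl(\nu\star|\cdot|^{-1}\bigr) + N(v_\nu) = 0,
\]
where the nonlinear remainder splits as $N(v_\nu)=F_1(v_\nu)-v_\nu\Phi^0_\nu+\alpha\bigl(v_\nu^2\star|\cdot|^{-1}\bigr)$ with $F_1(t):=\tfrac{5}{3}C_{\rm TF}\bigl[(\alpha+t)^{7/3}-\alpha^{7/3}-\tfrac{7}{3}\alpha^{4/3}t\bigr]=O(t^2)$. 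Setting $P(t):=C_{\rm W}t^2+\tfrac{20}{9}C_{\rm TF}\alpha^{4/3}t+8\pi\alpha^2$, Fourier transformation yields
\[
\widehat{v_\nu}(k) = \frac{4\pi\alpha}{P(|k|^2)}\,\widehat{\nu}(k) \;-\; \frac{|k|^2}{P(|k|^2)}\,\widehat{N(v_\nu)}(k),
\]
whose first summand is exactly $\widehat{g\star\nu}$ with $\widehat g$ as stated in the theorem.

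Next I verify that both multipliers $1/P(|k|^2)$ and $|k|^2/P(|k|^2)$ are Fourier transforms of $L^1(\R^3)$ kernels. Since $P(0)>0$ and $P'(0)>0$, the roots $t_1,t_2$ of $P$ have strictly negative real part, so a partial-fraction expansion writes each of these multipliers as a complex linear combination of Yukawa symbols $(|k|^2+m_i^2)^{-1}$ with $\mathrm{Re}(m_i)>0$, whose inverse Fourier transforms are proportional to $e^{-m_i|x|}/|x|\in L^1(\R^3)$; in particular $g\in L^1(\R^3)$. To obtain $\widetilde r_2(\nu)\in L^1(\R^3)$, I treat the three pieces of $N(v_\nu)$ separately. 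The pointwise remainder satisfies $|F_1(v_\nu)|\le C v_\nu^2$ because $v_\nu\in H^2(\R^3)\hookrightarrow L^\infty(\R^3)$, so $F_1(v_\nu)\in L^1(\R^3)$ since $v_\nu\in L^2(\R^3)$; the mixed term $v_\nu\Phi^0_\nu$ is in $L^1(\R^3)$ by Cauchy--Schwarz thanks to $\Phi^0_\nu\in L^2$ from Theorem~\ref{Th:defect}. Convolving these two with the $L^1$ kernel of symbol $|k|^2/P(|k|^2)$ stays in $L^1$ by Young. For the purely Coulombic piece $\alpha(v_\nu^2\star|\cdot|^{-1})$, which is \emph{not} integrable on its own, the $|k|^2$ factor of the multiplier cancels the $|k|^{-2}$ of the Coulomb kernel:
\[
-\frac{|k|^2}{P(|k|^2)}\,\widehat{\alpha\bigl(v_\nu^2\star|\cdot|^{-1}\bigr)}(k) = -\frac{4\pi\alpha}{P(|k|^2)}\,\widehat{v_\nu^2}(k),
\]
which inverse-transforms to a convolution of $v_\nu^2\in L^1(\R^3)$ with the $L^1$ kernel of symbol $1/P(|k|^2)$, hence is again in $L^1(\R^3)$. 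Summing the three contributions, $\widetilde r_2(\nu)\in L^1(\R^3)$.

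Finally, when $\nu\in L^1(\R^3)\cap\cC$, Young's inequality gives $g\star\nu\in L^1(\R^3)$, so $v_\nu=g\star\nu+\widetilde r_2(\nu)\in L^1(\R^3)$; combined with $v_\nu\in H^2(\R^3)\subset L^2(\R^3)$ from Theorem~\ref{Th:defect}, this yields $v_\nu\in L^1(\R^3)\cap L^2(\R^3)$. Then $\rho^0_\nu=\nu-2\alpha v_\nu-v_\nu^2$ belongs to $L^1(\R^3)$, its Fourier transform is continuous at the origin, and (\ref{eq:LebesgueValue}) forces $\widehat{\rho^0_\nu}(0)=0$, i.e.\ $\int_{\R^3}\rho^0_\nu=0$. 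The main technical obstacle throughout is the $L^1$-estimate on $\widetilde r_2(\nu)$: the decisive point is the Fourier cancellation that promotes the non-integrable Coulombic term $\alpha(v_\nu^2\star|\cdot|^{-1})$ into an $L^1$ contribution, without which one would only control $\widetilde r_2(\nu)$ in $L^2\cap L^6$ and the neutrality identity could not be extracted directly from Theorem~\ref{Th:defect}.
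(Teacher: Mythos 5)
Your proof is correct and essentially identical to the paper's: you linearize the Euler equation around $u^0_{\rm per}=\alpha$ in exactly the same way (your $N(v_\nu)$ is precisely the paper's $-\kappa_\nu+\alpha(v_\nu^2\star|\cdot|^{-1})$), exploit the same Fourier cancellation to obtain $\widetilde r_2(\nu)=-g\star v_\nu^2+h\star\kappa_\nu$, and conclude with the same $L^1$/$L^2$ bounds on the remainder (Taylor term via $L^\infty\cap L^2$, $v_\nu\Phi^0_\nu$ via Cauchy--Schwarz) and Young's inequality, the only deviation being that you prove $g,h\in L^1(\R^3)$ by a partial-fraction (Yukawa) decomposition whereas the paper estimates the radial inverse Fourier integral directly to get $|f(x)|\le C/(|x|^2+|x|^5)$. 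Both verifications are valid; just add a word in your Yukawa argument for the degenerate case where $P$ has a double root, in which the symbol $(|k|^2+m^2)^{-2}$ appears and its kernel, proportional to $e^{-m|x|}$, is still in $L^1(\R^3)$.
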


\medskip

The first term in the right hand side of (\ref{eq:vnu}) is in fact the linear component of the application $\nu \mapsto v_\nu$. The second term gathers the higher order contributions.

\medskip

\begin{proof} In the special case under consideration, the Euler equation (\ref{eq:eulerpb}) also reads
\begin{equation} \label{eq:EulerJ}
-C_{\rm W}\Delta v_\nu + \frac{20}{9} C_{\rm TF}\alpha^{4/3}v_\nu + 2 \alpha^2 \left(v_\nu \star |\cdot|^{-1} \right) =  \alpha \left(\nu \star |\cdot|^{-1} \right)  -   \alpha \left(v_\nu^2 \star |\cdot|^{-1} \right) + \kappa_\nu, 
\end{equation}
where
$$
\kappa_\nu = - \frac{5}{3}C_{\rm TF} \left(|\alpha+v_\nu|^{7/3} - \alpha^{7/3} - \frac{7}{3} \alpha^{4/3} v_\nu \right) + \left((\nu - 2\alpha v_\nu-v_\nu^2) \star |\cdot|^{-1} \right)v_\nu.
$$
We therefore obtain (\ref{eq:vnu}) with
$$
\widetilde r_2(\nu) = - g \star v_\nu^2 + h \star \kappa_\nu,
$$
the convolution kernel $h$ being defined through its Fourier transform as
$$
\widehat h(k)  = \frac{1}{(2\pi)^{3/2}} \, \frac{|k|^2}{C_{\rm W}|k|^4+\frac{20}9 C_{\rm TF}\alpha^{4/3} |k|^2+8\pi\alpha^2}.
$$
It follows from the second statement of Theorem~\ref{Th:defect} and Lemma~\ref{lem:convex} below that $\kappa_\nu \in L^1(\R^3)$. The proof will therefore be complete as soon as we have proven that $g \in L^1(\R^3)$ and $h \in L^1(\R^3)$. In fact, we will prove that any even-tempered distribution $f \in {\mathcal S}'(\R^3)$ whose Fourier transform is a function of the form
$$
\widehat{f}(k) = \frac{q(|k|)}{|k|r(|k|)}
$$
where $q$ and $r$ are polynomials of the real variable satisfying $\mbox{deg}(q)<\mbox{deg}(r)$, $r > 0$ in $\R_+$, $q(0)=0$, $q''(0)=0$ and $r'(0)=0$ (which is the case for both $g$ and $h$), is in $L^1(\R^3)$. Indeed, $\widehat f \in L^2(\R^3)$ and a simple calculation shows that 
\begin{eqnarray*}
f(x) & = & \sqrt{\frac{2}\pi} \, \frac{1}{|x|} \int_0^{+\infty} \left(\frac{q}{r}\right)(t) \, \sin(|x|t) \, dt \\
& = & \sqrt{\frac{2}\pi} \, \frac{1}{|x|^2} \int_0^{+\infty} \frac{d}{dt} \left( \frac{q}{r} \right)(t) \,  \cos(|x|t) \, dt \\
& = & \sqrt{\frac{2}\pi} \, \frac{1}{|x|^5} \int_0^{+\infty} \frac{d^4}{dt^4} \left( \frac{q}{r} \right)(t) \,  \sin(|x|t) \, dt.
\end{eqnarray*}
Therefore, there exists $C\in \R_+$ such that
$$
|f(x)| \leq \frac{C}{|x|^2+|x|^5}
$$
almost everywhere in $\R^3$, which proves that $f \in L^1(\R^3)$. 
\end{proof}

\medskip

\begin{remark} For a generic $\nu\in\mathcal{C}$, the function $v_\nu$, hence the density $2\alpha v_\nu +v_\nu^2$, are {\em  not} in $L^1(\mathbb{R}^3)$. This follows from the fact that the nonlinear contribution $\widetilde r_2(\nu)$ is always in $L^1(\R^3)$, while the linear contribution $g \star \nu$ is not necessarily in $L^1(\R^3)$ since its Fourier transform
$$
\widehat{(g \star \nu)}(k) =  \frac{4\pi \alpha}{C_{\rm W}|k|^4 + \frac{20}{9}C_{\rm TF}\alpha^{4/3} |k|^2 + 8\pi\alpha^2} \widehat{\nu}(k)
$$
is not necessarily in $L^{\infty}(\mathbb{R}^3)$.
\end{remark}

\section{Proofs}
\label{sec:proofs}

This section is devoted to the proofs of Proposition~\ref{prop:EU}, Theorem~\ref{Th:defect} and Theorem~\ref{Th:thlim}.

\medskip

In the sequel, we set
$$
C_{\rm TF} =1 \quad \mbox{and} \quad C_{\rm W} = 1 \qquad 
\mbox{(in order to simplify the notation)}.
$$

\subsection{Preliminary results}

We first state and prove a few useful lemmas. Some of these results are simple, or well-known, but we nevertheless prove them here for the sake of self-containment.

\medskip

\begin{lem} \label{lem:convex}
For all $0 < m \le M < \infty$ and all $\gamma \ge 2$, there exists $C \in \R_+$ such that for all $m \le a \le M$ and all $b \ge -a$,
\begin{equation}
(\gamma-1)a^{\gamma-2} b^2 \le (a+b)^{\gamma}-a^{\gamma}-\gamma a^{\gamma-1}b  \le  C \left(1+|b|^{\gamma-2}\right) b^2.
\label{eq:ineq103}
\end{equation}
\end{lem}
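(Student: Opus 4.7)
The plan is to work from the Taylor integral representation of
$$\Phi(b) := (a+b)^{\gamma} - a^{\gamma} - \gamma a^{\gamma-1} b.$$
Since the second derivative of $t \mapsto (a+t)^{\gamma}$ equals $\gamma(\gamma-1)(a+t)^{\gamma-2}$, we have the identity
$$\Phi(b) = \gamma(\gamma-1) \int_0^b (b-s)(a+s)^{\gamma-2}\,ds,$$
valid for every $b \ge -a$ (which guarantees $a+s \ge 0$ along the path of integration). Both bounds of~(\ref{eq:ineq103}) will be extracted from this identity.

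For the upper bound, when $s$ lies between $0$ and $b$ we have $0 \le a+s \le M+|b|$, hence $(a+s)^{\gamma-2} \le (M+|b|)^{\gamma-2}$. The elementary estimate $(M+|b|)^{\gamma-2} \le C_{M,\gamma}(1+|b|^{\gamma-2})$ and the computation $\int_0^{|b|} (|b|-s)\,ds = b^2/2$ then produce the right-hand side with a constant depending only on $M$ and $\gamma$.

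For the lower bound, the case $b \ge 0$ is immediate: on $[0,b]$ we have $(a+s)^{\gamma-2} \ge a^{\gamma-2}$ since $\gamma \ge 2$, and so $\Phi(b) \ge \frac{\gamma(\gamma-1)}{2} a^{\gamma-2} b^2 \ge (\gamma-1)a^{\gamma-2} b^2$. The delicate case is $-a \le b < 0$ with $\gamma > 2$: there $(a+s)^{\gamma-2} < a^{\gamma-2}$ on the interval of integration, and the naive estimate goes the wrong way. My plan is then to introduce the auxiliary function
$$h(b) := \Phi(b) - (\gamma-1)a^{\gamma-2} b^2$$
and prove $h \ge 0$ on $[-a, 0]$ by a convexity argument. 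A direct computation yields $h(-a) = h(0) = 0$, $h'(0) = 0$, and
$$h''(b) = (\gamma-1)\bigl[\gamma(a+b)^{\gamma-2} - 2a^{\gamma-2}\bigr],$$
which for $\gamma > 2$ is strictly increasing in $b$ and changes sign exactly once on $(-a,0)$, at some point $b^{\ast}$. Thus $h$ is concave on $[-a, b^{\ast}]$ and convex on $[b^{\ast}, 0]$. On the convex piece $h'(0)=0$ forces $h$ to be non-increasing on $[b^{\ast},0]$, so $h \ge h(0) = 0$ there and in particular $h(b^{\ast}) \ge 0$. On the concave piece $h(-a) = 0$ and $h(b^{\ast}) \ge 0$, so by concavity $h$ lies above the chord joining these endpoints, again giving $h \ge 0$. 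The case $\gamma = 2$ is trivial since $\Phi(b) = b^2$ identically.

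The main obstacle is precisely this lower bound when $b \in [-a, 0)$ and $\gamma > 2$: the Taylor integrand is too small to yield the inequality directly, and one must exploit the two zeros $h(-a) = h(0) = 0$ together with the single sign change of $h''$. All other parts of the argument are quantitative but routine.
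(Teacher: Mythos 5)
Your proof is correct, and it reaches the same Taylor integral identity the paper starts from (your representation is the paper's, after the substitution $s=tb$); the upper bound is then handled exactly as in the paper. Where you genuinely diverge is the lower bound for $-a\le b<0$ with $\gamma>2$: the paper disposes of this case in one line by observing that $a+tb\ge a(1-t)$ for all $t\in[0,1]$ whenever $b\ge -a$, so that
\begin{equation*}
\int_0^1 (1-t)(a+tb)^{\gamma-2}\,dt \;\ge\; a^{\gamma-2}\int_0^1 (1-t)^{\gamma-1}\,dt \;=\; \frac{a^{\gamma-2}}{\gamma},
\end{equation*}
which, multiplied by $\gamma(\gamma-1)b^2$, yields the left inequality uniformly in the sign of $b$. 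Your alternative---studying $h(b)=\Phi(b)-(\gamma-1)a^{\gamma-2}b^2$ on $[-a,0]$, verifying $h(-a)=h(0)=h'(0)=0$, and exploiting the unique sign change of $h''$ (indeed $h''(-a)=-2(\gamma-1)a^{\gamma-2}<0<(\gamma-1)(\gamma-2)a^{\gamma-2}=h''(0)$ and $h''$ is increasing) to get $h\ge 0$ by convexity on $[b^\ast,0]$ and concavity on $[-a,b^\ast]$---is valid as written; the monotonicity argument from $h'(0)=0$ on the convex piece and the chord argument on the concave piece are both sound. What your route buys is a transparent proof that the constant $\gamma-1$ is optimal, since $h$ vanishes at $b=-a$; what the paper's route buys is brevity, as the single pointwise estimate $a+tb\ge a(1-t)$ inside the integral removes the need for any case distinction on the sign of $b$.
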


\medskip

\begin{proof} Let $\phi(t) = (a+tb)^\gamma$. It holds for all $t \in (0,1)$, $\phi'(t)=\gamma (a+tb)^{\gamma-1} b$ and $\phi''(t) = \gamma(\gamma-1)(a+tb)^{\gamma-2}b^2$. Using the identity
$$
\phi(1)-\phi(0)-\phi'(0) = \int_0^1 (1-t) \phi''(t) \, dt,
$$
we get
\begin{eqnarray*}
(a+b)^{\gamma}-a^{\gamma}-\gamma a^{\gamma-1}b &=& \gamma(\gamma-1) b^2
\int_0^1 (1-t) (a+tb)^{\gamma-2} \, dt .
\end{eqnarray*}
We obtain (\ref{eq:ineq103}) using the fact that for all $t \in [0,1]$, $a(1-t) \le a+tb \le M+|b|$.
\end{proof}

\medskip

\begin{lem}\label{lem:density}
Let $\nu \in {\mathcal C}$ and $v \in {\mathcal Q}_+ \cap H^2(\R^3)$ such that $v > -u^0_{\rm per}$ in $\R^3$. For all $\epsilon > 0$ and $q \in \R$, there exists $v_\epsilon \in {\mathcal Q}_+ \cap C^2_c(\R^3)$ such that
$$
\int_{\R^3} (2u^0_{\rm per}v_\epsilon+v_\epsilon^2)=q \quad \mbox{and} \quad 
|{\mathcal E}^\nu(v_\epsilon)-{\mathcal E}^\nu(v)|\le \epsilon.
$$ 
\end{lem}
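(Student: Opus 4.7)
The plan is to approximate $v$ in two stages: first by a smooth compactly supported function $v_{R,\delta}$ with ${\mathcal E}^\nu(v_{R,\delta}) \to {\mathcal E}^\nu(v)$, then by adding a small far-away perturbation $w$ that enforces $\int(2u^0_{\rm per}v_\epsilon + v_\epsilon^2) = q$. The difficulty concentrates in the second stage, since the charge discrepancy $q - \int(2u^0_{\rm per}v_{R,\delta} + v_{R,\delta}^2)$ may be large (even unbounded in $R,\delta$ if $v \notin L^1(\R^3)$), yet must contribute negligibly to the energy.

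\textbf{Stage 1 (truncation and mollification).} Since $v \in H^2(\R^3) \hookrightarrow C_0(\R^3)$ and $u^0_{\rm per} \geq m > 0$ by~(\ref{eq:boundu0per}), the assumption $v > -u^0_{\rm per}$ upgrades to a uniform margin $v + u^0_{\rm per} \geq \mu$ on $\R^3$ for some $\mu \in (0,m]$. Let $\chi_R \in C^\infty_c(\R^3;[0,1])$ equal $1$ on $B_R$ and vanish outside $B_{R+1}$, and let $\zeta_\delta$ be a standard mollifier of width $\delta$. Set $v_{R,\delta} := (\chi_R v) \star \zeta_\delta \in C^\infty_c(\R^3)$. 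The identity $\chi_R v + u^0_{\rm per} = \chi_R(v + u^0_{\rm per}) + (1-\chi_R)u^0_{\rm per} \geq \mu$, together with the uniform continuity of $u^0_{\rm per}$, forces $v_{R,\delta} + u^0_{\rm per} \geq \mu/2 > 0$ for $\delta$ small enough, so $v_{R,\delta} \in C^\infty_c(\R^3) \cap {\mathcal Q}_+$. Routine arguments---Lemma~\ref{lem:convex} to majorize the Thomas-Fermi integrand by $C(v^2 + |v|^{10/3})$, continuity of $D$ on ${\mathcal C}$, and standard truncation/mollification convergences $v_{R,\delta} \to v$ in $H^1 \cap L^{10/3}$ and $u^0_{\rm per}v_{R,\delta} \to u^0_{\rm per}v$ in ${\mathcal C}$---yield ${\mathcal E}^\nu(v_{R,\delta}) \to {\mathcal E}^\nu(v)$ as $R \to \infty$ and $\delta \to 0$.

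\textbf{Stage 2 (charge correction).} Let $q_{R,\delta} := \int_{\R^3}(2u^0_{\rm per}v_{R,\delta} + v_{R,\delta}^2)$, which is finite by compact support. Fix once and for all $\psi \in C^\infty_c(\R^3;[0,1])$, $\psi \geq 0$, $\psi \not\equiv 0$, with $\mbox{supp}\,\psi \subset B(0,1)$, and set $A := \int u^0_{\rm per}\psi > 0$, $B := \int\psi^2 > 0$. For integers $N$ and distances $D$ to be chosen, pick sites $y_1,\dots,y_N \in \R^3$ among the ${\mathcal R}_1$-translates of a single base point (so that $\int u^0_{\rm per}\psi(\cdot - y_i) \equiv A$ by periodicity of $u^0_{\rm per}$), pairwise at distance $\geq D$ and each at distance $\geq R+2$ from the origin. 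Let $\alpha$ be the smallest-modulus root of $N(2A\alpha + B\alpha^2) = q - q_{R,\delta}$; for $N$ large it exists and satisfies $|\alpha| = O(|q - q_{R,\delta}|/N)$. Set $w := \alpha \sum_{i=1}^N \psi(\cdot - y_i)$. Then $\mbox{supp}\,w$ is disjoint from $\mbox{supp}\,v_{R,\delta}$, $|\alpha| \leq m$ for $N$ large ensures $w \geq -u^0_{\rm per}$, and the total charge of $v_\epsilon := v_{R,\delta} + w$ equals $q$ by construction.

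\textbf{Energy control of $w$ and conclusion.} Disjointness of supports kills all cross terms between $w$ and $v_{R,\delta}$ in the quadratic-form and TF pieces, so that
\[
{\mathcal E}^\nu(v_{R,\delta}+w) - {\mathcal E}^\nu(v_{R,\delta}) = \langle (H^0_{\rm per}-\epsilon^0_F)w,w\rangle + \mbox{TF}(w) + D(\rho_{R,\delta}-\nu,\rho_w) + \tfrac{1}{2}D(\rho_w,\rho_w),
\]
where $\rho_{R,\delta} := 2u^0_{\rm per}v_{R,\delta}+v_{R,\delta}^2$, $\rho_w := 2u^0_{\rm per}w + w^2$, and $\mbox{TF}(w)$ denotes the Thomas-Fermi integrand evaluated at $w$. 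The first two terms are $O(N\alpha^2 + N|\alpha|^{10/3}) \to 0$ by Lemma~\ref{lem:convex} and Sobolev embedding. Decomposing $\rho_w = \sum_i \rho_{w,i}$ along the disjoint bumps, the diagonal Coulomb energies sum to $\lesssim N\alpha^2 \to 0$ by Hardy-Littlewood-Sobolev on each localized $\rho_{w,i}$, while the off-diagonal ones are bounded by $\alpha^2 \sum_{i \ne j}|y_i - y_j|^{-1}$ (using $\|\rho_{w,i}\|_{L^1} = O(|\alpha|)$ and the classical long-range bound for Coulomb interaction of well-separated charges), which is driven to zero by taking $D$ large. Cauchy-Schwarz in ${\mathcal C}$ bounds $D(\rho_{R,\delta}-\nu,\rho_w)$ by $\sqrt{D(\rho_{R,\delta}-\nu,\rho_{R,\delta}-\nu)\cdot D(\rho_w,\rho_w)}$, whose first factor is bounded and second factor vanishes. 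Choosing $R$ large, $\delta$ small, then $N$ large, then $D$ large yields the desired $v_\epsilon$. The main obstacle is controlling the Coulomb self-energy of $w$ when the charge budget $q - q_{R,\delta}$ is unbounded in $R,\delta$; the remedy is precisely the diluted multi-bump construction with vanishing amplitude $\alpha$ and growing inter-bump distance $D$.
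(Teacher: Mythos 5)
Your Stage 2 is sound and is a genuinely different implementation of the same idea as the paper: where the paper corrects the charge with a single spread-out bump $t_m\chi(\cdot/m)u^0_{\rm per}$ of amplitude $O(m^{-3})$ (fixed total charge $q_1$ diluted over a ball of radius $m$, so its Coulomb energy scales like $q_1^2/m$) and then kills the cross Coulomb term by translating the bump far away, you dilute the same charge over $N$ far-separated unit bumps of amplitude $O(1/N)$ placed on $\mathcal{R}_1$-translates (a nice touch, since periodicity of $u^0_{\rm per}$ makes the charge of each bump identical), and you dispose of the cross term by Cauchy--Schwarz in $\mathcal{C}$ once $D(\rho_w,\rho_w)\to 0$. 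The bookkeeping (existence and size of $\alpha$, the constraint $w\ge -u^0_{\rm per}$, locality of the quadratic-form and Thomas--Fermi cross terms, diagonal versus off-diagonal Coulomb sums, order of the choices $N$ then $D$) is correct.

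The gap is in Stage 1, precisely at the point you dismiss as routine: the claim that $u^0_{\rm per}v_{R,\delta}\to u^0_{\rm per}v$ in $\mathcal{C}$. Convergence of $v_{R,\delta}$ to $v$ in $H^1\cap L^{10/3}$ gives no control of the $\mathcal{C}$-norm of $u^0_{\rm per}(v_{R,\delta}-v)$, because $\mathcal{C}$ is the homogeneous space $\dot H^{-1}$ and the tail $(1-\chi_R)u^0_{\rm per}v$ need not lie in $L^{6/5}$ (nor in $L^1$), so no Lebesgue-space embedding into $\mathcal{C}$ applies; the danger is low-frequency (net charge) content of the tail that is only cancelled by the inner part of $v$. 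This is exactly why the paper's proof first replaces $u^0_{\rm per}v$ by its Fourier restriction to an annulus $\overline{B}_{1/\eta}\setminus B_\eta$ (which converges in $\mathcal{C}$ by dominated convergence on the Fourier side) and only then truncates in space: for a function whose Fourier transform is an $L^2$ function supported away from the origin, the spatial cutoff is harmless in $\mathcal{C}$. One can avoid the Fourier step, but only with an actual argument and with dilated cutoffs $\chi(\cdot/R)$: writing $w=u^0_{\rm per}v$, $\Phi=w\star|\cdot|^{-1}$, and $4\pi(1-\chi_R)w=-\Delta\bigl((1-\chi_R)\Phi\bigr)+2\nabla\chi_R\cdot\nabla\Phi+\Phi\Delta\chi_R$, the commutator terms are controlled by $\|\nabla\Phi\|_{L^2(|x|\sim R)}$ and $\|\Phi\|_{L^6(|x|\sim R)}$ because $\|\nabla\chi(\cdot/R)\|_{L^3}$ and $\|\Delta\chi(\cdot/R)\|_{L^{3/2}}$ are bounded uniformly in $R$. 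With your cutoff (transition layer of width $1$ at radius $R$) these norms grow like $R^{2/3}$ and $R^{1/3}$ and the estimate does not close, so as written the step is unjustified. The fix is easy (take $\chi_R=\chi(\cdot/R)$ and include the potential-based commutator argument, or copy the paper's Fourier-annulus preprocessing), but it must be spelled out: it is the only place in the lemma where the Coulomb-space topology, rather than soft locality, is at stake.
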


\medskip

\begin{proof} Let $\epsilon > 0$.
As the functions of $H^2(\R^3)$ are continuous and decay to zero at infinity, there exists $\delta > 0$ such that 
\begin{equation}\label{eq:vueta}
\forall x \in \R^3, \quad v(x) \ge -u^0_{\rm per}(x)+\delta.
\end{equation}
For all $R>0$, let $B_R$ be the ball of $\R^3$ centered at zero and of radius $R$. For $\eta > 0$, we define
$$
v^\eta = (u^0_{\rm per})^{-1} {\mathcal F}^{-1}\left(\chi_{\overline{B}_{1/\eta}\setminus B_\eta} {\mathcal F}(u^0_{\rm per}v) \right),
$$
where ${\mathcal F}$ is the Fourier transform and ${\mathcal F}^{-1}$ the inverse Fourier transform. Clearly, $v^\eta \in H^4(\R^3) \hookrightarrow C^2(\R^3)$ and $u^0_{\rm per}v^\eta \in {\mathcal C}$. In addition, when $\eta$ goes to zero, $(v^\eta)_{\eta >0}$ converges to $v$ in $H^2(\R^3)$, hence in $L^\infty(\R^3)$, and $(u^0_{\rm per}v^\eta)_{\eta > 0}$ converges to $u^0_{\rm per}v$ in ${\mathcal C}$. The function ${\mathcal E}^\nu$ being continuous on ${\mathcal Q}$, this implies that there exists some $\eta_0 > 0$ such that 
$$
v^{\eta_0} \in {\mathcal Q}_+ \cap C^2(\R^3) \quad \mbox{and} \quad
|{\mathcal E}^\nu(v^{\eta_0})-{\mathcal E}^\nu(v)| \le \epsilon/4.
$$
Let $\chi$ be a function of $C^\infty_c(\R^3)$ supported in $B_2$, such that $0 \le \chi(\cdot) \le 1$ and $\chi=1$ in $B_1$. 
For $n \in \N^\ast$, we denote by $\chi_n(\cdot)=\chi(n^{-1}\cdot)$ and by $v^{\eta_0,n}=\chi_n v^{\eta_0}$. For each $n \in \N^\ast$, $v^{\eta_0,n} \in {\mathcal Q}_+ \cap C^2_c(\R^3)$ and the sequence $(v^{\eta_0,n})_{n \in \N^\ast}$ converges to $v^{\eta_0}$ in ${\mathcal Q}$ when $n$ goes to infinity. Hence, we can find some $n_0 > 0$ such that 
$$
v^{\eta_0,n_0} \in {\mathcal Q}_+ \cap C^2_c(\R^3) \quad \mbox{and} \quad
|{\mathcal E}^\nu(v^{\eta_0,n_0})-{\mathcal E}^\nu(v^{\eta_0})| \le \epsilon/4.
$$
Let 
$$
q_0 = \int_{\R^3} (2u^0_{\rm per}v^{\eta_0,n_0}+(v^{\eta_0,n_0})^2) \quad \mbox{and} \quad q_1=q-q_0.
$$
If $q_1=0$, $v^\epsilon=v^{\eta_0,n_0}$ fulfills the conditions of Lemma~\ref{lem:density}. Otherwise, we introduce for $m$ large enough the function $v_m$ defined as $v_m=t_m \chi_m u^0_{\rm per}$ where $t_m$ is the larger of the two real numbers such that 
$$
\int_{\R^3} (2u^0_{\rm per}v_m+v_m^2) = 2 t_m \int_{\R^3} \chi_m \rho^0_{\rm per} + t_m^2 \int_{\R^3} \chi_m^2 \rho^0_{\rm per} =  q_1.
$$
A simple calculation shows that $t_m \dps \mathop{\sim}_{m \to \infty} \frac{1}{2}q_1  |\Gamma_1|Z^{-1} \left( \int_{\R^3} \chi \right)^{-1} m^{-3}$, and that
$$
\lim_{m \to \infty} {\mathcal E}^0(v_m) = 0,
$$
so that there exists $m_0 \in \N^\ast$ such that $v_m \in {\mathcal Q}_+ \cap C^2_c(\R^3)$ and $0 \le {\mathcal E}^0(v_{m_0}) \le \epsilon/4$. Let us finally choose some $R_1 \in {\mathcal R}_1 \setminus \left\{0\right\}$ and introduce the sequence of functions $(v^{\eta_0,n_0}_{m_0,p})_{p \in \N}$ defined by 
$$
v^{\eta_0,n_0}_{m_0,p}(\cdot)=v^{\eta_0,n_0}(\cdot)+v_{m_0}(\cdot-pR_1).
$$
For $p$ large enough, $v^{\eta_0,n_0}_{m_0,p}$ belongs to ${\mathcal Q}_+ \cap C^2_{\rm c}(\R^3)$ and satisfies 
$$
\int_{\R^3} (2u^0_{\rm per}v^{\eta_0,n_0}_{m_0,p}+(v^{\eta_0,n_0}_{m_0,p})^2) = q.
$$
Besides,
\begin{eqnarray*}
&& |{\mathcal E}^\nu(v^{\eta_0,n_0}_{m_0,p})-{\mathcal E}^\nu(v^{\eta_0,n_0})| \\
&& \quad =
\left| {\mathcal E}^0(v_{m_0}) +D(2u^0_{\rm per}v^{\eta_0,n_0}+(v^{\eta_0,n_0})^2-\nu,(2u^0_{\rm per}v_{m_0}+v_{m_0}^2)(\cdot-pR_1)) \right| \\ && \quad \le 
\epsilon/4 + \left|D(2u^0_{\rm per}v^{\eta_0,n_0}+(v^{\eta_0,n_0})^2-\nu,(2u^0_{\rm per}v_{m_0}+v_{m_0}^2)(\cdot-pR_1)) \right|.
\end{eqnarray*}
As 
$$
\lim_{p \to \infty}D(2u^0_{\rm per}v^{\eta_0,n_0}+(v^{\eta_0,n_0})^2-\nu,(2u^0_{\rm per}v_{m_0}+v_{m_0}^2)(\cdot-pR_1)) = 0,
$$
there exists some $p_0 \in \N$ such that 
$$
\left|D(2u^0_{\rm per}v^{\eta_0,n_0}+(v^{\eta_0,n_0})^2-\nu,(2u^0_{\rm per}v_{m_0}+v_{m_0}^2)(\cdot-pR_1)) \right| \le \epsilon /4.
$$
Setting $v^\epsilon=v^{\eta_0,n_0}_{m_0,p_0}$, we get the desired result.
\end{proof}

\medskip

The next four lemmas are useful to pass to the thermodynamic limit in the Coulomb term (Lemmas~\ref{lem:boundDL},~\ref{lem:limitDL} and~\ref{lem:CVDL}) and in the kinetic energy term (Lemma~\ref{lem:weakiL}). 

\medskip

\begin{lem} \label{lem:boundDL}
There exists a constant $C \in \R_+$ such that for all $L \in \N^\ast$,
\begin{eqnarray*}
\forall  \rho_L \in L^1_{\rm per}(\Gamma_L) \cap L^{6/5}_{\rm per}(\Gamma_L), & \; & D_{\cR_L}(\rho_L,\rho_L) \le C \left( \|\rho_L\|_{L^1_{\rm per}(\Gamma_L)}^2+\|\rho_L\|_{L^{6/5}_{\rm per}(\Gamma_L)}^2 \right), \\
\forall  v_L \in H^1_{\rm per}(\Gamma_L), & \; & D_{\cR_L}(v_L^2,v_L^2) \le C \|v_L\|_{H^1_{\rm per}(\Gamma_L)}^4.
\end{eqnarray*}
\end{lem}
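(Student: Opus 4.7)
\medskip

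\noindent\textbf{Proof plan.}
My strategy is to establish both inequalities with a constant independent of $L$ by first reducing to the case $L=1$ via a scaling argument, and then handling that case by the standard splitting of the periodic Coulomb kernel into a singular Coulombic part (treated by Hardy-Littlewood-Sobolev) and a smooth bounded remainder. The main technical difficulty throughout is the uniformity of the constants in $L$, which the scaling argument cleanly isolates into a single $L=1$ Coulomb estimate.

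Set $\tilde\rho(y) := \rho_L(Ly)$; this is $\cR_1$-periodic. Using $\cR_L^\ast = L^{-1}\cR_1^\ast$, $|\Gamma_L| = L^3|\Gamma_1|$, and the identity $G_{\cR_L}(x) = L^{-1}G_{\cR_1}(x/L)$ (which follows by checking that this scaled function solves the defining $\cR_L$-periodic Poisson problem with $\min = 0$, then invoking uniqueness), one computes
$$
D_{\cR_L}(\rho_L,\rho_L)=L^{5}D_{\cR_1}(\tilde\rho,\tilde\rho),\quad
\|\rho_L\|_{L^1_{\rm per}(\Gamma_L)}=L^{3}\|\tilde\rho\|_{L^1_{\rm per}(\Gamma_1)},\quad
\|\rho_L\|_{L^{6/5}_{\rm per}(\Gamma_L)}=L^{5/2}\|\tilde\rho\|_{L^{6/5}_{\rm per}(\Gamma_1)}.
$$
If the first inequality holds for $L=1$ with constant $C_1$, substitution yields $D_{\cR_L}(\rho_L,\rho_L)\le C_1(L^{-1}\|\rho_L\|_{L^1_{\rm per}}^2+\|\rho_L\|_{L^{6/5}_{\rm per}}^2)$, and since $L\ge 1$ the factor $L^{-1}$ is absorbed. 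For the $L=1$ case, pick a radial cutoff $\zeta\in C^\infty_{\rm c}(\R^3)$ with $\zeta\equiv 1$ near the origin and support in a ball $B_{r_0}$ small enough that the translates $\{B_{r_0}+R\}_{R\in\cR_1}$ are pairwise disjoint, and decompose $G_{\cR_1}(x) = \sum_{R\in\cR_1}\zeta(x-R)/|x-R| + \Psi(x)$. Since $-\Delta(\zeta(\cdot)/|\cdot|) = 4\pi\delta_0 + f$ with $f\in C^\infty_{\rm c}(\R^3)$, the $\cR_1$-periodic function $\Psi$ solves a periodic Poisson problem with smooth right-hand side and hence lies in $L^\infty_{\rm per}(\Gamma_1)$ by elliptic regularity; its contribution to $D_{\cR_1}(\rho,\rho)$ is at most $\|\Psi\|_{L^\infty}\|\rho\|_{L^1_{\rm per}(\Gamma_1)}^2$.

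For the singular part, $\cR_1$-periodicity of $\rho$ together with the change of variable $y' = y + R$ unfolds the lattice sum to
$$
\int_{\Gamma_1}\int_{\R^3}\frac{\zeta(x-y')}{|x-y'|}\rho(x)\rho(y')\,dx\,dy',
$$
whose integrand is supported in $\{|x-y'|\le r_0\}$, so that $y'$ is effectively confined to the bounded enlargement $\Gamma_1+B_{r_0}$. The Hardy-Littlewood-Sobolev inequality on $\R^3$ applied to the pair $(\rho\chi_{\Gamma_1},\rho\chi_{\Gamma_1+B_{r_0}})$, whose $L^{6/5}(\R^3)$-norms are both controlled by $\|\rho\|_{L^{6/5}_{\rm per}(\Gamma_1)}$ (since $\Gamma_1+B_{r_0}$ meets only finitely many fundamental cells), yields a bound by a constant times $\|\rho\|_{L^{6/5}_{\rm per}(\Gamma_1)}^2$. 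Combining the two contributions establishes the first inequality at $L=1$, hence uniformly in $L$.

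For the second inequality, I would combine the first with the uniform Sobolev embedding $\|v_L\|_{L^6_{\rm per}(\Gamma_L)}\le C\|v_L\|_{H^1_{\rm per}(\Gamma_L)}$, itself obtained from the $L=1$ embedding via the rescaling $\tilde v(y) = v_L(Ly)$ and $L\ge 1$. Then $\|v_L^2\|_{L^1_{\rm per}} = \|v_L\|_{L^2_{\rm per}}^2\le\|v_L\|_{H^1_{\rm per}}^2$, and Hölder interpolation gives $\|v_L^2\|_{L^{6/5}_{\rm per}} = \|v_L\|_{L^{12/5}_{\rm per}}^2\le\|v_L\|_{L^2_{\rm per}}^{3/2}\|v_L\|_{L^6_{\rm per}}^{1/2}\le C\|v_L\|_{H^1_{\rm per}}^2$; substituting into the first inequality produces $D_{\cR_L}(v_L^2,v_L^2)\le C\|v_L\|_{H^1_{\rm per}(\Gamma_L)}^4$, as required.
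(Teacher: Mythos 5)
Your proof is correct and follows essentially the same strategy as the paper: both split the periodic Coulomb kernel into Coulomb singularities plus a bounded remainder, control the singular part by Hardy--Littlewood--Sobolev and the remainder through the $L^1$-norm, and deduce the second inequality from the $L$-uniform Sobolev embedding together with the interpolation $\|v_L\|_{L^{12/5}_{\rm per}(\Gamma_L)} \le \|v_L\|_{L^2_{\rm per}(\Gamma_L)}^{3/4}\|v_L\|_{L^6_{\rm per}(\Gamma_L)}^{1/4}$. The only difference is organizational: you rescale the density to reduce the whole estimate to $L=1$ and absorb the resulting factor $L^{-1}$ using $L\ge 1$, whereas the paper works directly at scale $L$, bounding $G_{\cR_L}(x-y)=L^{-1}G_{\cR_1}(L^{-1}(x-y))$ on $\Gamma_L\times\Gamma_L$ by a finite, $L$-independent family of translated Coulomb kernels plus an $O(L^{-1})$ bounded term --- the same $L^{-1}$ gain you obtain from the homogeneity exponents.
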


\medskip

\begin{proof} It is well-known (see e.g. \cite{Cat98}) that 
$$
\forall x \in \Gamma_1, \quad G_{\cR_1}(x) = |x|^{-1} + g(x), 
$$
with $g \in L^\infty(\Gamma_1)$, and that for all $L \in \N^\ast$,
$$
\forall x \in \R^3, \quad G_{\cR_L}(x) = L^{-1} G_{\cR_1}(L^{-1}x).
$$
Let ${\mathcal I} = \left\{R \in {\mathcal R}_1 \, | \,  \exists (x,y) \in \overline{\Gamma}_1 \times \overline{\Gamma}_1 \mbox{ s.t. } x-y = R \right\}$. It holds
$$
\forall (x,y) \in \Gamma_L \times \Gamma_L, \quad 
0 \le G_{\cR_L}(x-y) \le \sum_{R \in {\mathcal I}} |x-y-LR|^{-1} + L^{-1} \|g\|_{L^\infty}.  
$$
Therefore, for all $L \in \N^\ast$,
\begin{eqnarray*}
D_{\cR_L}(\rho_L,\rho_L) & = & \int_{\Gamma_L} \int_{\Gamma_L} G_{\cR_L}(x-y) \rho_L(x) \rho_L(y) \, dx \, dy \\
& \le & \sum_{R \in {\mathcal I}} \int_{\R^3} \int_{\R^3} \frac{\chi_{\Gamma_L}(x)|\rho_L(x)| \, \chi_{\Gamma_L}(y) |\rho_L(y)|}{|x-y-LR|} \, dx \, dy + L^{-1} \|g\|_{L^\infty} \|\rho_L\|_{L^1_{\rm per}(\Gamma_L)}^2 \\
& \le & C' \|\chi_{\Gamma_L}\rho_L\|_{L^{6/5}(\R^3)}^2 + \|g\|_{L^\infty} \|\rho_L\|_{L^1_{\rm per}(\Gamma_L)}^2 \\
& = & C' \|\rho_L\|_{L^{6/5}_{\rm per}(\Gamma_L)}^2 + \|g\|_{L^\infty} \|\rho_L\|_{L^1_{\rm per}(\Gamma_L)}^2,
\end{eqnarray*}
where $C'$ is a constant independent of $L$ and $\rho_L$. Let $C_1$ be the Sobolev constant such that
$$
\forall v_1 \in H^1_{\rm per}(\Gamma_1), \quad \|v_1\|_{L^6_{\rm per}(\Gamma_1)} \le C_1 \|v_1\|_{H^1_{\rm per}(\Gamma_1)}.
$$
By an elementary scaling argument, it is easy to check that the inequality 
$$
\forall v_L \in H^1_{\rm per}(\Gamma_L), \quad \|v_L\|_{L^6_{\rm per}(\Gamma_L)} \le C_1 \|v_L\|_{H^1_{\rm per}(\Gamma_L)}
$$
holds for all $L \in \N^\ast$.
Thus, for all $v_L \in H^1_{\rm per}(\Gamma_L)$, we obtain
$$
\|v_L^2\|_{L^{6/5}_{\rm per}(\Gamma_L)}^2
= \|v_L\|_{L^{12/5}_{\rm per}(\Gamma_L)}^4 \le \|v_L\|_{L^{2}_{\rm per}(\Gamma_L)}^3 \|v_L\|_{L^{6}_{\rm per}(\Gamma_L)} \le C_1 \|v_L\|_{H^1_{\rm per}(\Gamma_L)}^4,
$$
which completes the proof of Lemma~\ref{lem:boundDL}.
\end{proof}

\medskip

\begin{lem} \label{lem:limitDL}
Let $\nu \in L^1(\R^3) \cap L^2(\R^3)$ and $\nu_L \in L^2_{\rm per}(\Gamma_L)$ defined by $\nu_L|_{\Gamma_L}=\nu|_{\Gamma_L}$ for all $L\in \mathbb{N}^*$. Then
\begin{equation} \label{eq:limDL}
\lim_{L \to \infty} D_{\cR_L}(\nu_L,\nu_L) = D(\nu,\nu).
\end{equation}
\end{lem}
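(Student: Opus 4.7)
The starting point is the Fourier-series representation of $D_{\cR_L}$ derived earlier in the paper,
$$
D_{\cR_L}(\nu_L,\nu_L) = \Bigl(\int_{\Gamma_L}G_{\cR_L}\Bigr)|c_0(\nu_L)|^2 + \sum_{k\in\cR_L^*\setminus\{0\}}\frac{4\pi}{|k|^2}|c_k(\nu_L)|^2.
$$
I dispose of the zero-mode term first. The scaling $G_{\cR_L}(x)=L^{-1}G_{\cR_1}(L^{-1}x)$ yields $\int_{\Gamma_L}G_{\cR_L}=L^{2}\int_{\Gamma_1}G_{\cR_1}$, while $\nu\in L^1(\R^3)$ gives $|c_0(\nu_L)|^2=|\Gamma_L|^{-1}|\int_{\Gamma_L}\nu|^2\le L^{-3}|\Gamma_1|^{-1}\|\nu\|_{L^1}^2$, so this contribution is $O(L^{-1})$. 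For the non-zero modes I introduce the truncation $\nu_L^{\rm cut}:=\nu\chi_{\Gamma_L}$, whose (unitary) Fourier transform satisfies $|\Gamma_L|^{1/2}c_k(\nu_L)=(2\pi)^{3/2}\widehat{\nu_L^{\rm cut}}(k)$. The sum then takes the shape of a Riemann sum with mesh $|\Gamma_L^*|=(2\pi)^3/|\Gamma_L|\to 0$:
$$
B_L := \sum_{k\in\cR_L^*\setminus\{0\}}\frac{4\pi}{|k|^2}|c_k(\nu_L)|^2 = 4\pi\sum_{k\in\cR_L^*\setminus\{0\}}\frac{|\widehat{\nu_L^{\rm cut}}(k)|^2}{|k|^2}|\Gamma_L^*|,
$$
and the target is $D(\nu,\nu)=4\pi\int_{\R^3}|\widehat\nu(k)|^2/|k|^2\,dk$.

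I would first prove $B_L\to D(\nu,\nu)$ for $\nu\in C_{\rm c}^\infty(\R^3)$, and then extend to $L^1\cap L^2$ by density. In the smooth case, for $L$ large enough $\nu$ is supported in $\Gamma_L$, so $\widehat{\nu_L^{\rm cut}}=\widehat\nu$ is Schwartz. Fixing $\eta>0$, split both sum and integral at $|k|=\eta$. Using $|\widehat\nu|\le\|\widehat\nu\|_\infty$, the near-singular part of the integral is bounded by $4\pi\|\widehat\nu\|_\infty^2\int_{|k|<\eta}|k|^{-2}dk=O(\eta)$, and the near-singular part of the sum by $4\pi\|\widehat\nu\|_\infty^2\sum_{0<|k|<\eta,\,k\in\cR_L^*}|\Gamma_L^*|/|k|^2=O(\eta)$ uniformly in $L$. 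The latter estimate follows from the classical bound $\sum_{\ell\in\cR_1^*\setminus\{0\},\,|\ell|<R}|\ell|^{-2}=O(R)$ as $R\to\infty$, applied with $R=L\eta$ after the change of variables $k=\ell/L$. On the complement $\{|k|\ge\eta\}$ the integrand is smooth and rapidly decaying, so a standard Riemann-sum/Poisson-summation argument gives the convergence of the truncated sum to the truncated integral as $L\to\infty$. Sending $L\to\infty$ first and $\eta\to 0$ afterwards proves the smooth case.

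For the extension to $\nu\in L^1(\R^3)\cap L^2(\R^3)$, pick $\nu_n\in C_{\rm c}^\infty(\R^3)$ with $\nu_n\to\nu$ in $L^1\cap L^2$, hence in $L^{6/5}$ by interpolation, and in $\cC$ via the HLS embedding $L^{6/5}\hookrightarrow\cC$. The triangle inequality gives
$$
|D_{\cR_L}(\nu_L,\nu_L)-D(\nu,\nu)|\le T_1^{L,n}+T_2^{L,n}+T_3^{n},
$$
with $T_2^{L,n}=|D_{\cR_L}((\nu_n)_L,(\nu_n)_L)-D(\nu_n,\nu_n)|\to 0$ as $L\to\infty$ by the smooth case, $T_3^{n}=|D(\nu_n,\nu_n)-D(\nu,\nu)|\to 0$ as $n\to\infty$ by continuity of $D$ on $\cC$, and $T_1^{L,n}=|D_{\cR_L}(\nu_L,\nu_L)-D_{\cR_L}((\nu_n)_L,(\nu_n)_L)|$ handled by polarization and Cauchy--Schwarz for the positive semidefinite form $D_{\cR_L}$, combined with Lemma~\ref{lem:boundDL} and the trivial inequalities $\|(\nu_n)_L-\nu_L\|_{L^p_{\rm per}(\Gamma_L)}\le\|\nu_n-\nu\|_{L^p(\R^3)}$ for $p=1,6/5$; this yields a bound that is uniform in $L$ and vanishes as $n\to\infty$. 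Choosing $n$ large first and then $L$ large concludes. The main technical obstacle is the uniform-in-$L$ control of the Riemann sum near the Coulomb singularity at $k=0$, which is precisely what the lattice-point counting estimate above provides.
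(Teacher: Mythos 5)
Your proof is correct, but it follows a genuinely different route from the paper's. The paper argues directly for a general $\nu\in L^1(\R^3)\cap L^2(\R^3)$: after discarding the zero mode exactly as you do, it splits the Fourier sum at a \emph{fixed large} radius $K$, treats the ball $B_K$ by noting that the normalized coefficients $\widetilde c_{k,L}(\nu_L)$ converge pointwise to $\widehat\nu(k)$ and are uniformly bounded by $(2\pi)^{-3/2}\|\nu\|_{L^1}$ (so the Riemann sums of $|\widehat\nu(k)|^2/|k|^2$ over $B_K$ converge), and controls the tail $|k|>K$ by Cauchy--Schwarz on the coefficients, yielding a bound of order $K^{-1/2}\|\nu\|_{L^1}\|\nu\|_{L^2}$ that is killed by letting $K\to\infty$ after $L\to\infty$; no density argument is needed. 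You instead split at a \emph{small} radius $\eta$ around the Coulomb singularity, prove the smooth compactly supported case via the lattice-point bound $\sum_{0<|\ell|<R,\ \ell\in\cR_1^*}|\ell|^{-2}=O(R)$ plus a standard Riemann-sum argument away from $k=0$, and then pass to $L^1\cap L^2$ by density, using the uniform-in-$L$ continuity estimate of Lemma~\ref{lem:boundDL} together with Cauchy--Schwarz for the nonnegative form $D_{\cR_L}$ and the continuity of $D$ on $\cC$ via $L^{6/5}\hookrightarrow\cC$. Both arguments are sound; the paper's is shorter and self-contained in one pass, exploiting the interplay of the $L^1$ bound (low frequencies) and the $L^2$ bound (high-frequency tail), while yours is more modular -- the explicit lattice-counting control of the singular region makes precise a point the paper handles somewhat tersely in its Riemann-sum step on $B_K$, and the density scheme reuses Lemma~\ref{lem:boundDL} as a uniform continuity statement, at the cost of an extra approximation layer and a Riemann-sum/Poisson-summation step on an unbounded domain.
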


\medskip

\begin{proof} Let  $g_1 := \dps |\Gamma_1|^{-1}\int_{\Gamma_1} G_1$ and $\Gamma^\ast_L$ be the first Brillouin zone of the lattice ${\mathcal R}_L$ (that is the Voronoi cell of the origin in the dual space). Note that ${\mathcal R}_L^\ast=L^{-1} {\mathcal R}_1^\ast$ and $\Gamma^\ast_L=L^{-1}\Gamma^\ast_1$. Let $K > 0$. We have
\begin{eqnarray}
D_{\cR_L}(\nu_L,\nu_L) &=& g_1 L^{-1} \left( \int_{\Gamma_L} \nu \right)^2
+ \sum_{k \in L^{-1}\cR_1^\ast \setminus \left\{0\right\}} \frac{4\pi}{|k|^2}
|c_{k,L}(\nu_L)|^2 \nonumber \\&=& g_1 L^{-1} \left( \int_{\Gamma_L} \nu \right)^2
+ 4\pi \sum_{k \in B_{K} \cap L^{-1}\cR_1^\ast \setminus \left\{0\right\}} |\Gamma_L^\ast| \frac{|\widetilde c_{k,L}(\nu_L)|^2}{|k|^2} \nonumber \\
&& + 4\pi \sum_{k \in B_{K}^c \cap L^{-1}\cR_1^\ast \setminus \left\{0\right\}} \frac{|c_{k,L}(\nu_L)|^2}{|k|^2} ,  \label{eq:Riemann}
\end{eqnarray}
where $B_K$ is the ball of radius $K$ centered at $0$, $B_K^c = \R^3 \setminus \overline{B}_K$, 
$$
c_{k,L}(\nu_L)= |\Gamma_L|^{-1/2} \int_{\Gamma_L} \nu_L(x)e^{-ik\cdot x} \, dx,
$$ 
and 
$$
\widetilde c_{k,L}(\nu_L) =  |\Gamma_L^\ast|^{-1/2} c_{k,L}(\nu_L) =  \frac{1}{(2\pi)^{3/2}} \int_{\Gamma_L} \nu(x) e^{-ik\cdot x} \, dx.
$$
As $\nu \in L^1(\R^3)$, $|\widetilde c_{k,L}(\nu_L)| \le (2\pi)^{-3/2} \|\nu\|_{L^1(\R^3)}$ for all $k$ and $L$, $\widehat \nu \in L^\infty(\R^3)$, and
$$
\forall k \in \R^3, \quad \widetilde c_{k,L}(\nu_L) \quad \mathop{\longrightarrow}_{L \to \infty} \quad \widehat \nu(k)
$$
Clearly the first term in the right hand side of (\ref{eq:Riemann}) goes to zero when $L$ goes to infinity. Besides,
$$
 \sum_{k \in B_{K} \cap L^{-1}\cR_1^\ast \setminus \left\{0\right\}} |\Gamma_L^\ast| \frac{|\widetilde c_{k,L}(\nu_L)|^2}{|k|^2}  \quad \mathop{\longrightarrow}_{L \to \infty} \quad  \int_{B_K} \frac{|\widehat\nu(k)|^2}{|k|^2}  \, dk.
$$
Lastly,
\begin{eqnarray*}
\sum_{k \in B_{K}^c \cap L^{-1}\cR_1^\ast \setminus \left\{0\right\}} \frac{| c_{k,L}(\nu_L)|^2}{|k|^2}  & \le & 
\left( \sum_{k \in B_{K}^c \cap L^{-1}\cR_1^\ast \setminus \left\{0\right\}} \frac{|c_{k,L}(\nu_L)|^2}{|k|^4}\right)^{1/2} \left( \sum_{k \in B_{K}^c \cap L^{-1}\cR_1^\ast \setminus \left\{0\right\}}  |c_{k,L}(\nu_L)|^2 \right)^{1/2} \\
& \le & \frac{1}{(2\pi)^{3/2}} 
\left( \sum_{k \in B_{K}^c \cap L^{-1}\cR_1^\ast \setminus \left\{0\right\}} |\Gamma_L^\ast| \frac{1}{|k|^4}\right)^{1/2} \|\nu\|_{L^1(\R^3)} \|\nu\|_{L^2(\R^3)} \\
&& \mathop{\longrightarrow}_{L \to \infty} \quad \frac{1}{(2\pi^2K)^{1/2}} \|\nu\|_{L^1(\R^3)} \|\nu\|_{L^2(\R^3)}.
\end{eqnarray*}
It is then easy to conclude that (\ref{eq:limDL}) holds true.
\end{proof}

\medskip

\begin{lem} \label{lem:CVDL} Let $(\rho_L)_{L \in \N^\ast}$ be a sequence of functions of $L^2_{\rm loc}(\R^3)$ such that 
  \begin{enumerate}
  \item for each $L \in \N^\ast$, $\rho_L \in L^2_{\rm per}(\Gamma_L)$;
  \item there exists $C \in \R_+$ such that for all $L \in \N^\ast$, 
    $$
    \left| \int_{\Gamma_L}\rho_L \right| \le C \quad \mbox{and} \quad D_{\cR_L}(\rho_L,\rho_L) \le C;
    $$
  \item there exists $\rho \in {\mathcal D}'(\R^3)$ such that $(\rho_L)_{L \in \N^\ast}$ converges to $\rho$ in ${\mathcal D}'(\R^3)$. 
  \end{enumerate}
Then $\rho \in {\mathcal C}$ and
\begin{equation} \label{eq:liminfD}
D(\rho,\rho) \le \liminf_{L\to \infty}  D_{\cR_L}(\rho_L,\rho_L).
\end{equation}

\medskip

\noindent
In addition, for any $p > 6/5$ and any sequence $(v_L)_{L \in \N^*}$ of functions of $L^p_{\rm loc}(\R^3)$ such that $v_L \in L^p_{\rm per}(\Gamma_L)$ for all $L \in \N^\ast$, which weakly converges to some $v \in L^p_{\rm loc}(\R^3)$ in $L^p_{\rm loc}(\R^3)$, it holds
\begin{equation} \label{eq:CVD}
\forall \phi \in C^\infty_c(\R^3), \quad \lim_{L \to \infty} 
D_{\cR_L}(\rho_L,v_L\phi) = D(\rho,v\phi).
\end{equation}
\end{lem}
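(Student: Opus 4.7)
The plan is to establish part 1 (lower semicontinuity and $\rho \in \mathcal{C}$) by a Fourier / Cauchy-Schwarz duality argument tested against compactly supported functions, and then to deduce part 2 (the convergence of the bilinear pairing) by polarization, combined with a direct proof that $D_{\cR_L}(g_L,g_L) \to D(g,g)$ for $g_L := v_L\phi$.

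For part 1, I would first expand
$$D_{\cR_L}(\rho_L,\rho_L) = \alpha_L + 4\pi \sum_{k \in \cR_L^\ast \setminus \{0\}} |\Gamma_L^\ast| \, \frac{|\widetilde c_{k,L}(\rho_L)|^2}{|k|^2},$$
where $\widetilde c_{k,L}(\rho_L) := (2\pi)^{-3/2} \int_{\Gamma_L} \rho_L(x) e^{-ik\cdot x}\,dx$ and $\alpha_L = O(L^{-1})$ thanks to the uniform bound on $|\int_{\Gamma_L}\rho_L|$. For $\phi \in C^\infty_{\rm c}(\R^3)$ and $L$ large enough that $\mbox{supp}(\phi) \subset \Gamma_L$, Parseval in $L^2_{\rm per}(\Gamma_L)$ gives
$$\langle \rho_L,\phi\rangle = \sum_{k \in \cR_L^\ast} |\Gamma_L^\ast|\,\widetilde c_{k,L}(\rho_L)\,\overline{\widehat\phi(k)},$$
the $k=0$ contribution being $O(L^{-3})$. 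Applying Cauchy-Schwarz to the $k \ne 0$ part (splitting $\widetilde c_{k,L}(\rho_L)/|k|$ from $|k|\,\widehat\phi(k)$) bounds $|\langle\rho_L,\phi\rangle|^2$ by the product of $(4\pi)^{-1}(D_{\cR_L}(\rho_L,\rho_L)-\alpha_L)$ and the Riemann sum $\sum_{k \in \cR_L^\ast} |\Gamma_L^\ast|\,|k|^2|\widehat\phi(k)|^2$, which converges to $\|\nabla\phi\|_{L^2(\R^3)}^2$ since $k \mapsto |k|^2|\widehat\phi(k)|^2$ is Schwartz and the lattice mesh is $O(L^{-1})$. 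Passing to the limit, together with $\langle\rho_L,\phi\rangle \to \langle\rho,\phi\rangle$, yields $|\langle\rho,\phi\rangle|^2 \le (4\pi)^{-1}\liminf_L D_{\cR_L}(\rho_L,\rho_L)\cdot\|\nabla\phi\|_{L^2(\R^3)}^2$. Density of $C^\infty_{\rm c}(\R^3)$ in $\dot H^1(\R^3)$ then identifies $\rho$ with its Riesz representative in $\mathcal{C}\cong \dot H^{-1}(\R^3)$ and delivers~(\ref{eq:liminfD}).

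For (\ref{eq:CVD}), the cornerstone auxiliary fact is $D_{\cR_L}(g_L,g_L)\to D(g,g)$, with $g_L := v_L\phi$ and $g := v\phi$. The sequence $(g_L)$ has uniformly compact support in $K := \mbox{supp}(\phi)$, is bounded in $L^p(\R^3)$, and, since $K$ is bounded and $p > 6/5$, is bounded in $L^{6/5}(\R^3)$ with $g_L \rightharpoonup g$ weakly in $L^p(\R^3)$. Using the expansion $G_{\cR_L}(z) = |z|^{-1} + L^{-1}g_{\rm reg}(z/L)$ with $g_{\rm reg}$ bounded near $0$ (as in the proof of Lemma~\ref{lem:boundDL}), one obtains $D_{\cR_L}(g_L,g_L) = D(g_L,g_L) + O(L^{-1})$. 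Now $D(g_L,g_L) = \int_K g_L \widetilde W_L$ with $\widetilde W_L := g_L \star|\cdot|^{-1}$, which is bounded in $H^1(\R^3)$ by Hardy-Littlewood-Sobolev and elliptic regularity; by Rellich-Kondrachov, $\widetilde W_L$ converges (along subsequences) strongly to $\widetilde W := g\star|\cdot|^{-1}$ in $L^{p'}(K)$, since $p'<6$. The weak-strong pairing against $g_L \rightharpoonup g$ in $L^p(K)$ gives $D(g_L,g_L)\to D(g,g)$, and uniqueness of the limit upgrades this to full-sequence convergence.

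Finally I would polarize. For every $t \in \R$, the sequence $(\rho_L + tg_L)$ still satisfies the hypotheses of the first statement: the bound on $D_{\cR_L}(\rho_L + tg_L, \rho_L + tg_L)$ follows from $|D_{\cR_L}(\rho_L,g_L)| \le \sqrt{D_{\cR_L}(\rho_L,\rho_L)D_{\cR_L}(g_L,g_L)}$, and $\rho_L + tg_L \to \rho + tg$ in $\mathcal{D}'(\R^3)$. Extracting a subsequence along which $D_{\cR_L}(\rho_L,\rho_L)\to A$ and $D_{\cR_L}(\rho_L,g_L)\to B$ (both bounded, so Bolzano-Weierstrass applies), applying~(\ref{eq:liminfD}) to $\rho_L + tg_L$ and using $D_{\cR_L}(g_L,g_L) \to D(g,g)$ yields
$$D(\rho,\rho) + 2t D(\rho,g) + t^2 D(g,g) \le A + 2tB + t^2 D(g,g), \qquad \forall t \in \R.$$
Since the affine inequality $(D(\rho,\rho)-A) + 2t(D(\rho,g)-B) \le 0$ must hold for every $t\in\R$, the coefficient of $t$ forces $B = D(\rho,g)$. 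Thus every convergent subsequence of $(D_{\cR_L}(\rho_L,g_L))$ has limit $D(\rho,g)$, and boundedness of the full sequence yields~(\ref{eq:CVD}). The main obstacle is the Rellich compactness underpinning $D_{\cR_L}(g_L,g_L) \to D(g,g)$: the assumption $p > 6/5$ enters here in an essential way through the dual exponent $p' < 6$.
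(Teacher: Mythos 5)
Your argument is correct, but it follows a genuinely different route from the paper's. For the first statement, the paper introduces the screened periodic potentials $W_L$ solving $-\Delta W_L = 4\pi(\rho_L - |\Gamma_L|^{-1}\int_{\Gamma_L}\rho_L)$, derives $L$-uniform bounds on $\|W_L\|_{L^6_{\rm per}(\Gamma_L)}$ and $\|\nabla W_L\|_{L^2_{\rm per}(\Gamma_L)}$ by scaled Sobolev--Poincar\'e--Wirtinger inequalities, extracts by a diagonal procedure a limit $W \in {\mathcal C}'$ with $-\Delta W = 4\pi\rho$ (whence $\rho \in {\mathcal C}$ and (\ref{eq:liminfD}) by local weak lower semicontinuity of $\|\nabla W_L\|_{L^2(B_R)}$), whereas you obtain the same conclusion with no extraction at all, by a Fourier--space Cauchy--Schwarz bound of $\langle \rho_L,\phi\rangle$ against $\|\nabla\phi\|_{L^2}$ and a Riemann-sum argument, then dualizing over $C^\infty_{\rm c}(\R^3)$; the two identifications of $\rho$ as an element of ${\mathcal C}$ rest on the same standard fact (the isomorphism $-\Delta:{\mathcal C}'\to{\mathcal C}$, equivalently the duality between ${\mathcal C}$ and the homogeneous Sobolev space ${\mathcal C}'$). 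For the second statement the paper gets (\ref{eq:CVD}) almost for free from the construction already made: it writes $D_{\cR_L}(\rho_L,v_L\phi)$ as $\int W_L v_L\phi$ plus a mean-value correction of order $L^{-1}$ and pairs the strong $L^{p'}_{\rm loc}$ convergence of $W_L$ (dual exponent $p'<6$, which is where $p>6/5$ enters) against the weak $L^p_{\rm loc}$ convergence of $v_L$; you instead prove the auxiliary quadratic convergence $D_{\cR_L}(v_L\phi,v_L\phi)\to D(v\phi,v\phi)$ (by the kernel expansion $G_{\cR_L}=|\cdot|^{-1}+O(L^{-1})$ on compacts, Hardy--Littlewood--Sobolev and Rellich, again using $p'<6$) and then recover the cross term by polarization together with (\ref{eq:liminfD}) applied to $\rho_L+t v_L\phi$. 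What each approach buys: the paper's potential $W_L$ does double duty, making part 2 immediate once part 1 is proved, while your part 1 is more elementary and extraction-free, at the price of the extra polarization machinery and the auxiliary lemma in part 2.

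One point you should state explicitly: when $6/5<p<2$ the perturbed sequence $\rho_L+t v_L\phi$ need not belong to $L^2_{\rm per}(\Gamma_L)$, so it does not literally satisfy the hypotheses of the first statement you invoke. This is harmless --- your Fourier proof of (\ref{eq:liminfD}) (like the paper's potential-theoretic one) only uses membership in $L^1_{\rm per}(\Gamma_L)\cap L^{6/5}_{\rm per}(\Gamma_L)$ together with the uniform bounds on $|\int_{\Gamma_L}\cdot|$ and on $D_{\cR_L}(\cdot,\cdot)$, and $v_L\phi$ is bounded in $L^1\cap L^{6/5}$ by H\"older on the fixed compact support --- but as written the polarization step appeals to a hypothesis that can fail, so the extension of part 1 to this slightly larger class should be recorded.
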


\medskip

\begin{proof} Let $W_L$ the unique solution in $H^2_{\rm per}(\Gamma_L)$ to
\begin{equation}\label{eq:defWL}
\left\{ \begin{array}{l}
\dps -\Delta W_L = 4\pi \left( \rho_L - |\Gamma_L|^{-1} \int_{\Gamma_L}\rho_L \right) \\
\dps W_L \mbox{ $\cR_L$-periodic}, \quad \int_{\Gamma_L} W_L = 0. \end{array} \right.
\end{equation}
It holds
\begin{equation}\label{eq:boundWL}
\frac{1}{4\pi} \int_{\Gamma_L} |\nabla W_L|^2 = D_{\cR_L}(\rho_L,\rho_L) - g_1L^{-1} \left( \int_{\Gamma_L}\rho_L \right)^2 \le C,
\end{equation}
where $g_1 := \dps |\Gamma_1|^{-1}\int_{\Gamma_1} G_{\cR_1} \geq 0$. Hence the sequence $(\|\nabla W_L\|_{L^2_{\rm per}(\Gamma_L)})_{L \in \N^*}$ is bounded.

By Sobolev and Poincaré-Wirtinger inequalities, we have
$$
\forall V_1 \in H^1_{\rm per}(\Gamma_1) \mbox{ s.t. } \int_{\Gamma_1} V_1 = 0, \quad  \|V_1\|_{L^6_{\rm per}(\Gamma_1)} \le C_1 \|V_1\|_{H^1_{\rm per}(\Gamma_1)} \le C'_1 \| \nabla V_1 \|_{L^2_{\rm per}(\Gamma_1)},
$$ 
and by a scaling argument, we obtain that for all $L \in \N^\ast$, 
$$
\forall V_L \in H^1_{\rm per}(\Gamma_L) \mbox{ s.t. } \int_{\Gamma_L} V_L = 0, \quad   \|V_L\|_{L^6_{\rm per}(\Gamma_L)}  \le C'_1 \| \nabla V_L \|_{L^2_{\rm per}(\Gamma_L)},
$$
where the constant $C'_1$ does not depend on $L$. Thus, the sequence $(\|W_L\|_{L^6_{\rm per}(\Gamma_L)})_{L \in \N^*}$ is bounded. Let $\widetilde C \in \R_+$ such that 
$$
\forall L \in \N^\ast, \quad \|W_L\|_{L^6_{\rm per}(\Gamma_L)} \le \widetilde C \quad \mbox{and} \quad \|\nabla W_L\|_{L^2_{\rm per}(\Gamma_L)} \le \widetilde C, 
$$
and let $(R_n)_{n \in \N}$ be an increasing sequence of positive real numbers such that $\lim_{n \to \infty} R_n = \infty$. 
Let $R>0$. For $L > 2R$, 
$$
\| W_L \|_{L^6(B_R)} \le \|W_L \|_{L^6_{\rm per}(\Gamma_L)} \le \widetilde C \quad \mbox{and} \quad \|\nabla W_L\|_{L^2(B_R)} \le \|\nabla W_L\|_{L^2_{\rm per}(\Gamma_L)} \le \widetilde C.
$$
We can therefore extract from $(W_L)_{L \in \N^\ast}$ a subsequence $(W_{L^0_n})_{n \in \N}$ such that $(W_{L^0_n}|_{B_{R_0}})_{n \in \N}$ converges weakly in $H^1(B_{R_0})$, strongly in $L^p(B_{R_0})$ for all $1 \le p < 6$, and almost everywhere in $B_{R_0}$ to some $W^0 \in H^1(B_{R_0})$, for which
$$
\| W^0 \|_{L^6(B_{R_0})} \le  \widetilde C \quad \mbox{and} \quad \|\nabla W^0\|_{L^2(B_{R_0})} \le \widetilde C.
$$
By recursion, we then extract from $(W_{L^k_n})_{n \in \N}$ a subsequence $(W_{L^{k+1}_n})_{n \in \N}$ such that $(W_{L^{k+1}_n}|_{B_{R_{k+1}}})_{n \in \N}$ converges weakly in $H^1(B_{R_{k+1}})$, strongly in $L^p(B_{R_{k+1}})$ for all $1 \le p < 6$, and almost everywhere in $B_{R_{k+1}}$ to some $W^{k+1} \in H^1(B_{R_{k+1}})$, for which
\begin{equation} \label{eq:boundW}
\| W^{k+1} \|_{L^6(B_{R_{k+1}})} \le  \widetilde C \quad \mbox{and} \quad \|\nabla W^{k+1}\|_{L^2(B_{R_{k+1}})} \le \widetilde C.
\end{equation}
Necessarily, $W^{k+1}|_{B_{R_k}} = W^k$. Let $L_n = L^n_n$ and let $W$ be the function of $H^1_{\rm loc}(\R^3)$ defined by $W|_{B_{R_k}}=W^k$ for all $k \in \N$ (this definition is consistent since $W^{k+1}|_{B_{R_k}} = W^k$). The sequence $(W_{L_n})_{n \in \N}$ converges to $W$ weakly in $H^1_{\rm loc}(\R^3)$, strongly in $L^p_{\rm loc}(\R^3)$ for all $1 \le p < 6$ and almost everywhere in $\R^3$. Besides, as (\ref{eq:boundW}) holds for all $k$, we also have
$$
\| W \|_{L^6(R^3)} \le  \widetilde C \quad \mbox{and} \quad \|\nabla W\|_{L^2(\R^3)} \le \widetilde C.
$$
Letting $n$ go to infinity in (\ref{eq:defWL}) with $L=L_n$, we get
$$
- \Delta W = 4\pi \rho.
$$
Introducing the dual
$$
{\mathcal C}' = \left\{ V \in L^6(\R^3) \; | \; \nabla V \in (L^2(\R^3))^3 \right\},
$$
of ${\mathcal C}$, we can reformulate the above results as $W \in {\mathcal C}'$ and $-\Delta W = 4\pi \rho$. As $-\Delta$~is an isomorphism from ${\mathcal C}'$ to ${\mathcal C}$, we necessarily have $\rho \in {\mathcal C}$. From (\ref{eq:boundWL}), we infer that for each $R > 0$, 
$$
\frac{1}{4\pi} \|\nabla W\|_{L^2(B_R)} \le \liminf_{L \to \infty} D_{\cR_L}(\rho_L,\rho_L).
$$
Letting $R$ go to infinity, we end up with (\ref{eq:liminfD}). By uniqueness of the limit, the whole sequence $(W_L)_{L \in \N^\ast}$ converges to $W$ weakly in $H^1_{\rm loc}(\R^3)$, and strongly in $L^p_{\rm loc}(\R^3)$ for all $1 \le p < 6$.

\medskip

Let $p > 6/5$, $(v_L)_{L \in \N}$ be a sequence of functions on $L^p_{\rm loc}(\R^3)$ such that $v_L \in L^p_{\rm per}(\Gamma_L)$ for all $L \in \N^\ast$, and converging to some $v \in L^p_{\rm loc}(\R^3)$ weakly in $L^p_{\rm loc}(\R^3)$, and $\phi \in C^\infty_c(\R^3)$. We have, for $L$ large enough,
\begin{eqnarray*}
D_{\cR_L}(\rho_L,v_L\phi) &=& \int_{\R^3} W_Lv_L\phi - g_1L^{-1} \left( \int_{\Gamma_L} \rho_L \right) \left( \int_{\Gamma_L} v_L\phi \right) \\
&=& \int_{{\rm Supp}(\phi)}(W_L\phi) v_L - g_1 L^{-1} \left( \int_{\Gamma_L} \rho_L \right) \left( \int_{{\rm Supp}(\phi)} v_L\phi \right) \\
& \dps \mathop{\longrightarrow}_{L \to \infty} & \int_{{\rm Supp}(\phi)} W\phi v = D(\rho,v\phi),
\end{eqnarray*}
which proves (\ref{eq:CVD}).
\end{proof}

\medskip

Let us introduce for each $L \in \N^\ast$ the bounded linear operator
\begin{eqnarray} \label{eq:defiL}
i_L \; : \; L^2(\R^3) & \rightarrow & L^2_{\rm per}(\Gamma_L) \\
v & \mapsto & \sum_{z \in {\mathcal R}_L} (\chi_{\Gamma_L}v)(\cdot-z) \nonumber
\end{eqnarray}
and its adjoint $i_L^\ast \in {\mathcal L}(L^2_{\rm per}(\Gamma_L),L^2(\R^3))$. Note that for all $v_L \in L^2_{\rm per}(\Gamma_L)$, $i_L^\ast v_L=\chi_{\Gamma_L}v_L$ and $i_Li_L^\ast=1_{L^2_{\rm per}(\Gamma_L)}$. As $C^\infty_c(\R^3) \subset H^1(\R^3)$, the domain of the self-adjoint operator $(H^0_{\rm per}-\epsilon^0_{\rm F})^{1/2}$, the function $(H^0_{\rm per}-\epsilon^0_{\rm F})^{1/2}\phi$ is in $L^2(\R^3)$. Using the same abuse of notation as above, we can also consider $H^0_{\rm per}$ as a self-adjoint operator on $L^2_{\rm per}(\Gamma_L)$ with domain $H^2_{\rm per}(\Gamma_L)$ and introduce the function $i_L^\ast(H^0_{\rm per}-\epsilon^0_{\rm F})^{1/2}i_L\phi$, which is well-defined in $L^2(\R^3)$. 

\medskip

\begin{lem} \label{lem:weakiL} Let $\phi \in C^\infty_c(\R^3)$. The sequence $(i_L^\ast(H^0_{\rm per}-\epsilon^0_{\rm F})^{1/2}i_L\phi)_{L \in \N^\ast}$ converges to $(H^0_{\rm per}-\epsilon^0_{\rm F})^{1/2}\phi$ in $L^2(\R^3)$.
\end{lem}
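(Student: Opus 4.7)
The plan is to use the integral representation of the operator square root, reducing matters to a resolvent-convergence statement that can be handled by standard periodic elliptic estimates. Set $A := H^0_{\rm per}-\epsilon^0_{\rm F}$, viewed indifferently as a nonnegative self-adjoint operator on $L^2(\R^3)$ (with domain $H^2(\R^3)$) or, for each $L\in\N^\ast$, on $L^2_{\rm per}(\Gamma_L)$ (with domain $H^2_{\rm per}(\Gamma_L)$); denote the latter realization by $A_L$. Since the coefficients of $A$ are $\cR_1$-periodic (hence $\cR_L$-periodic) and since $A$ is a second-order differential operator, for $L$ large enough that $\mbox{Supp}(\phi)\Subset\Gamma_L$ one has both $i_L\phi\equiv\phi$ on $\Gamma_L$ and $A_L(i_L\phi)=i_L(A\phi)$ in $L^2_{\rm per}(\Gamma_L)$.

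The key tool is the Bochner integral representation
$$
B^{1/2}\psi = \frac{1}{\pi}\int_0^\infty \lambda^{-1/2}B(B+\lambda)^{-1}\psi \, d\lambda, \qquad \psi \in D(B),
$$
valid for any nonnegative self-adjoint operator $B$; absolute convergence follows from $\|B(B+\lambda)^{-1}\psi\|\le\min(\|\psi\|,\lambda^{-1}\|B\psi\|)$. Applied to $A$ on $\phi\in C^\infty_{\rm c}(\R^3)\subset D(A)$ and to $A_L$ on $i_L\phi \in D(A_L)$, using $A_L(A_L+\lambda)^{-1}i_L\phi=(A_L+\lambda)^{-1}i_L(A\phi)$, this yields
$$
i_L^\ast A_L^{1/2}(i_L\phi) - A^{1/2}\phi = \frac{1}{\pi}\int_0^\infty \lambda^{-1/2}\bigl[ i_L^\ast(A_L+\lambda)^{-1}i_L(A\phi)-(A+\lambda)^{-1}(A\phi)\bigr] d\lambda.
$$
For $L$ large, $\|i_L\phi\|_{L^2_{\rm per}(\Gamma_L)}=\|\phi\|_{L^2(\R^3)}$ and $\|i_L(A\phi)\|_{L^2_{\rm per}(\Gamma_L)}=\|A\phi\|_{L^2(\R^3)}$, so the integrand is dominated uniformly in $L$ by $\lambda^{-1/2}\min(\|\phi\|_{L^2},\lambda^{-1}\|A\phi\|_{L^2}) \in L^1(0,\infty)$. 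By dominated convergence, the lemma reduces to showing that, for each $\lambda>0$, $i_L^\ast(A_L+\lambda)^{-1}i_Lf \to (A+\lambda)^{-1}f$ in $L^2(\R^3)$, where $f := A\phi \in L^2(\R^3)$ has compact support.

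For this resolvent convergence, let $u := (A+\lambda)^{-1}f\in H^2(\R^3)$ and $u_L := (A_L+\lambda)^{-1}(i_Lf)\in H^2_{\rm per}(\Gamma_L)$. The elementary bounds $\|u_L\|_{L^2_{\rm per}(\Gamma_L)}\le\lambda^{-1}\|f\|_{L^2}$ combined with the coercivity $C_{\rm W}\|\nabla u_L\|^2_{L^2_{\rm per}(\Gamma_L)}\le\langle A_L u_L,u_L\rangle+\|q\|_{L^\infty}\|u_L\|^2_{L^2_{\rm per}(\Gamma_L)}$, with $q := \tfrac{5}{3}C_{\rm TF}(\rho^0_{\rm per})^{2/3}+V^0_{\rm per}-\epsilon^0_{\rm F} \in L^\infty(\R^3)$, yield a uniform $H^1_{\rm per}(\Gamma_L)$-bound on $u_L$. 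A diagonal extraction analogous to the one in the proof of Lemma~\ref{lem:CVDL} produces a subsequence along which $i_L^\ast u_L$ converges weakly in $H^1_{\rm loc}(\R^3)$ and strongly in $L^2_{\rm loc}(\R^3)$ to some $u^\ast$; passing to the limit in the equation $(A+\lambda)u_L=f$ (valid on any compact subset of $\Gamma_L$ for $L$ large) gives $(A+\lambda)u^\ast=f$ in $\mathcal D'(\R^3)$, while Fatou's lemma together with the uniform $L^2_{\rm per}(\Gamma_L)$-bound ensures $u^\ast\in L^2(\R^3)$. Invertibility of $A+\lambda$ from $H^2(\R^3)$ onto $L^2(\R^3)$ identifies $u^\ast=u$, and uniqueness of the limit extends convergence to the whole sequence.

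Strong $L^2(\R^3)$-convergence of $(i_L^\ast u_L)$ to $u$ is finally obtained by Radon--Riesz. Weak convergence was just proven; for norm convergence, I test the equation $(A_L+\lambda)u_L=i_Lf$ against $u_L$ to get the energy identity $\langle A_L u_L,u_L\rangle+\lambda\|u_L\|^2_{L^2_{\rm per}(\Gamma_L)}=\int_{\R^3}f(i_L^\ast u_L)\to\int_{\R^3}fu=\langle Au,u\rangle+\lambda\|u\|^2$, by strong $L^2_{\rm loc}$-convergence and compactness of $\mbox{Supp}(f)$. Combined with the weak-lower-semicontinuity inequality $\liminf_L\langle(A_L+M)u_L,u_L\rangle\ge\langle(A+M)u,u\rangle$ valid for any fixed $M>\|q\|_{L^\infty}$ (for which $q+M\ge 0$ makes the quadratic form weakly lower-semicontinuous on $H^1_{\rm loc}$), a routine algebraic manipulation of these limits yields the separate convergences $\|u_L\|^2_{L^2_{\rm per}(\Gamma_L)}\to\|u\|^2_{L^2(\R^3)}$ and $\langle A_L u_L,u_L\rangle\to\langle Au,u\rangle$. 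I expect the main technical obstacle to lie precisely in this last passage from weak to strong convergence: the sign-indefinite potential $q$ prevents a direct application of weak lower-semicontinuity to $\langle A\cdot,\cdot\rangle$ and necessitates the shift-by-$M$ device (possibly combined, when $\lambda\le\|q\|_{L^\infty}$, with the resolvent identity $(A+\lambda)^{-1}=(A+M)^{-1}+(M-\lambda)(A+\lambda)^{-1}(A+M)^{-1}$ to reduce to the case where $M$ may be chosen larger than $\lambda$). Everything else reduces to the integral representation of the square root together with standard periodic elliptic arguments used elsewhere in the paper.
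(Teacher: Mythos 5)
Your overall strategy (the subordination formula $A^{1/2}\psi=\pi^{-1}\int_0^\infty\lambda^{-1/2}A(A+\lambda)^{-1}\psi\,d\lambda$ plus dominated convergence, reducing the lemma to strong $L^2(\R^3)$-convergence of $i_L^\ast(A_L+\lambda)^{-1}i_Lf$ to $(A+\lambda)^{-1}f$ for compactly supported $f$) is genuinely different from the paper's, which works instead with the Bloch--Floquet fibers of $H^0_{\rm per}$, identifies the weak limit through a Riemann-sum argument over $\Gamma_1^\ast\cap\cR_L^\ast$, and upgrades to strong convergence via the \emph{exact} identity $\|i_L^\ast A_L^{1/2}i_L\phi\|_{L^2(\R^3)}=\|A^{1/2}\phi\|_{L^2(\R^3)}$ (valid because ${\rm Supp}(\phi)\subset\Gamma_L$ and $A$ is a periodic differential operator). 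Most of your steps are sound (modulo small omissions: you should justify $A_L\ge 0$ on $L^2_{\rm per}(\Gamma_L)$, needed for $A_L^{1/2}$ and for $\|(A_L+\lambda)^{-1}\|\le\lambda^{-1}$, e.g.\ via $u^0_{\rm per}>0$ being a periodic eigenfunction at the bottom of the spectrum, and the identification $u^\ast=u$ needs the remark that elliptic regularity puts the distributional $L^2$ solution in $H^2(\R^3)$).

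The genuine gap is in your final weak-to-strong upgrade. The three facts you establish --- (i) $\langle A_Lu_L,u_L\rangle+\lambda\|u_L\|^2\to\langle Au,u\rangle+\lambda\|u\|^2$, (ii) $\liminf_L\langle(A_L+M)u_L,u_L\rangle\ge\langle(A+M)u,u\rangle$, (iii) $\liminf_L\|u_L\|^2\ge\|u\|^2$ --- do \emph{not} imply the separate convergences you claim; no algebraic manipulation can extract them. Indeed, writing $a_L=\langle A_Lu_L,u_L\rangle$, $b_L=\|u_L\|^2$, the values $a_L\to a-\tfrac12\lambda$, $b_L\to b+\tfrac12$ satisfy (i), (ii), (iii) and $a_L\ge0$, yet $b_L\not\to b$. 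The scenario this reflects is mass escaping to infinity: the shift-by-$M$ inequality (ii) is too weak because subtracting $M\limsup b_L$ reintroduces exactly the unknown escaping mass, so the argument is circular. What is missing is a tightness ingredient ruling out escaping mass, e.g.\ an IMS localization $u_L=\chi_Ru_L+\eta_Ru_L$ using $\langle A_Lw,w\rangle\ge0$ for the far piece (this is where the positivity of $A_L$ on the supercell, not just of $A$ on $\R^3$, must really be invoked) together with $O(R^{-2})$ localization errors, which yields $\liminf a_L\ge\langle Au,u\rangle$ and hence, with (i), $\|u_L\|^2\to\|u\|^2$; alternatively, a Combes--Thomas exponential-decay estimate for $(A_L+\lambda)^{-1}i_Lf$ with $f$ compactly supported gives tightness directly. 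The paper's route avoids this difficulty altogether because the norm identity at the level of the square root is exact, so no control of escaping mass is ever needed; in your resolvent-based route such control is unavoidable and must be supplied explicitly.
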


\medskip

\begin{proof} According to Bloch-Floquet theory~\cite{ReeSim4}, each $f \in L^2(\R^3)$ can be decomposed as
$$
f(x) = \frac{1}{|\Gamma_1^\ast|} \int_{\Gamma_1^\ast} f_k(x) \, e^{ik\cdot x}\, dk
$$
where $f_k$ is the function of $L^2_{\rm per}(\Gamma_1)$ defined for almost all $k \in \R^3$ by 
$$
f_k(x) = \sum_{R \in {\mathcal R}_1} f(x+R) e^{-ik \cdot (x+R)}. 
$$
Recall that 
$$
\forall (f,g) \in L^2(\R^3) \times L^2(\R^3), \quad (f,g)_{L^2(\R^3)} =  \frac{1}{|\Gamma_1^\ast|} \int_{\Gamma_1^\ast} (f_k,g_k)_{L^2_{\rm per}(\Gamma_1)} \, dk.
$$
The operator $H^0_{\rm per}$, considered as a self-adjoint operator on $L^2(\R^3)$, commutes with the translations of the lattice ${\mathcal R}_1$ and can therefore be decomposed as
$$
H^0_{\rm per} = \frac{1}{|\Gamma_1^\ast|} \int_{\Gamma_1^\ast} (H^0_{\rm per})_k \, dk
$$ 
where $(H^0_{\rm per})_k$ is the self-adjoint operator on $L^2_{\rm per}(\Gamma_1)$ with domain $H^2_{\rm per}(\Gamma_1)$ defined by
$$
(H^0_{\rm per})_k = -\Delta -2 i k \cdot \nabla + |k|^2 + \frac 53 (\rho^0_{\rm per})^{2/3}+V^0_{\rm per}.
$$
Let $\phi$ and $\psi$ be two functions of $C^\infty_c(\R^3)$. Simple calculations show that for $L$ large enough
\begin{equation} \label{eq:limphipsi}
(i_L^\ast(H^0_{\rm per}-\epsilon^0_{\rm F})^{1/2}i_L\phi,\psi)_{L^2(\R^3)} = \sum_{k \in \Gamma_1^\ast \cap {\mathcal R}_L^\ast} L^{-3} ((H^0_{\rm per}-\epsilon^0_{\rm F})_k^{1/2} \phi_k,\psi_k)_{L^2_{\rm per}(\Gamma_1)} , 
\end{equation}
and
\begin{equation} \label{eq:CVnorm}
\|i_L^\ast(H^0_{\rm per}-\epsilon^0_{\rm F})^{1/2}i_L\phi\|_{L^2(\R^3)}^2 
=  \|(H^0_{\rm per}-\epsilon^0_{\rm F})^{1/2}\phi\|_{L^2(\R^3)}^2.
\end{equation}
The sequence $(i_L^\ast(H^0_{\rm per}-\epsilon^0_{\rm F})^{1/2}i_L\phi)_{L \in \N^\ast}$ therefore is bounded in $L^2(\R^3)$, hence possesses a weakly converging subsequence. 

Besides, the function $k \mapsto ((H^0_{\rm per}-\epsilon^0_{\rm F})^{1/2}_k \phi_k,\psi_k)_{L^2_{\rm per}(\Gamma_1)}$ is continuous on $\overline{\Gamma_1^\ast}$ since
$$
((H^0_{\rm per}-\epsilon^0_{\rm F})^{1/2}_k \phi_k,\psi_k)_{L^2_{\rm per}(\Gamma_1)} =
((H^0_{\rm per}-\epsilon^0_{\rm F}+1)_k^{-1}(H^0_{\rm per}-\epsilon^0_{\rm F})^{1/2}_k \phi_k,(H^0_{\rm per}-\epsilon^0_{\rm F}+1)_k\psi_k)_{L^2_{\rm per}(\Gamma_1)} 
$$
with $k \mapsto \phi_k$ and $k \mapsto (H^0_{\rm per}-\epsilon^0_{\rm F}+1)_k\psi_k$ continuous from $\overline{\Gamma_1^\ast}$ to $L^2_{\rm per}(\Gamma_1)$ and $k \mapsto (H^0_{\rm per}-\epsilon^0_{\rm F}+1)_k^{-1}(H^0_{\rm per}-\epsilon^0_{\rm F})^{1/2}_k$ continuous from $\overline{\Gamma_1^\ast}$ to ${\mathcal L}(L^2_{\rm per}(\Gamma_1))$. Interpreting  (\ref{eq:limphipsi}) as a Riemann sum, we obtain 
\begin{eqnarray*}
\lim_{L \to \infty} (i_L^\ast(H^0_{\rm per}-\epsilon^0_{\rm F})^{1/2}i_L\phi,\psi)_{L^2(\R^3)} & = &  ((H^0_{\rm per}-\epsilon^0_{\rm F})^{1/2}\phi,\psi)_{L^2(\R^3)}.
\end{eqnarray*}
The above result allows to identify $(H^0_{\rm per}-\epsilon^0_{\rm F})^{1/2}\phi$ as the weak limit of the sequence $(i_L^\ast(H^0_{\rm per}-\epsilon^0_{\rm F})^{1/2}i_L\phi)_{L \in \N^\ast}$, and (\ref{eq:CVnorm}) shows that the convergence actually holds strongly in $L^2(\R^3)$.
\end{proof}

\subsection{Proof of Proposition~\ref{prop:EU}}

Let $(v_n)_{n \in \N}$ be a minimizing sequence for (\ref{eq:TFWgen}). As each of the three terms of $E_\cR^{\rm TFW}(\rho^{\rm nuc},\cdot)$ is non-negative, the sequence $(v_n)_{n \in \N}$ is clearly bounded in $H^1_{\rm per}(\Gamma)$, hence converges, up to extraction, to some $u \in H^1_{\rm per}(\Gamma)$, weakly in $H^1_{\rm per}(\Gamma)$, strongly in $L^p_{\rm per}(\Gamma)$ for each $1 \le p < 6$ and almost everywhere in $\R^3$. Passing to the liminf in the energy and to the limit in the constraint, we obtain that $u$ satisfies $E_\cR^{\rm TFW}(\rho^{\rm nuc},u) \le I_\cR(\rho^{\rm nuc},Q)$ and $\int_\Gamma u^2=Q$. Therefore, $u$ is a minimizer of~(\ref{eq:TFWgen}). As $|u| \in H^1_{\rm per}(\Gamma)$, $E_\cR^{\rm TFW}(\rho^{\rm nuc},|u|)=E_\cR^{\rm TFW}(\rho^{\rm nuc},u)$ and $\int_\Gamma |u|^2=\int_\Gamma u^2$, $|u|$ also is a minimizer of (\ref{eq:TFWgen}). Up to replacing $u$ with $|u|$, we can therefore assume that $u \ge 0$ in $\R^3$. Clearly, $-u$ also is a minimizer of (\ref{eq:TFWgen}).

\medskip

Working on the Euler equation (\ref{eq:Euler}), we obtain by elementary elliptic regularity arguments \cite{GT} that $u\in H^4_{\rm per}(\Gamma) \hookrightarrow C^{2}(\R^3) \cap L^{\infty}(\R^3)$, and it follows from Harnack's inequality \cite{GT} that $u>0$ in $\mathbb{R}^3$.

\medskip

Lastly, $v_0$ is a minimizer of (\ref{eq:TFWgen}) if and only if $\rho_0=v_0^2$ is a minimizer of 
\begin{equation} \label{eq:minrho}
\inf \left\{ {\mathcal E}^{\rm TFW}_\cR(\rho^{\rm nuc},\rho), \; \rho \in {\mathcal K}_{\cR,Q} \right\},
\end{equation}
where 
$$
{\mathcal E}^{\rm TFW}_\cR(\rho^{\rm nuc},\rho)= C_{\rm W} \int_{\Gamma} |\nabla \sqrt{\rho}|^2 + C_{\rm TF} \int_{\Gamma} \rho^{5/3} + \frac{1}{2}\mathcal{D}_\cR(\rho^{\rm nuc}-\rho, \rho^{\rm nuc}-\rho),
$$
and
$$
{\mathcal K}_{\cR,Q} = \left\{ \rho \ge 0,  \; \sqrt{\rho} \in H^1_{\rm per}(\Gamma), \; \int_\Gamma \rho = Q \right\}.
$$
The functional $\rho \mapsto {\mathcal E}^{\rm TFW}_\cR(\rho^{\rm nuc},\rho)$ being strictly convex on the convex set ${\mathcal K}$, (\ref{eq:minrho}) has a unique minimizer $\rho_0$ and it holds $\rho_0 = u^2 > 0$. Any minimizer $v_0$ of (\ref{eq:TFWgen}) satisfying $v_0^2 = \rho_0 > 0$, the only minimizers of (\ref{eq:TFWgen}) are $u$ and $-u$.

\subsection{Existence of a minimizer to (\ref{eq:minpbInu})}
\label{sec:minpblnu}

The existence of a minimizer to (\ref{eq:minpbInu}) is an obvious consequence of the following lemma.

\medskip

\begin{lem} \label{lem:boundV} It holds 
\begin{equation} \label{eq:coercE}
\exists \beta > 0 \quad \mbox{s.t.} \quad \forall \nu \in {\mathcal C}, \quad \forall v \in {\mathcal Q}_+, \quad \beta \|v\|_{H^1(\R^3)}^2 \le {\mathcal E}^\nu(v),
\end{equation}
\begin{equation} \label{eq:boundD}
\forall \nu \in {\mathcal C}, \quad \forall v \in {\mathcal Q}_+, \quad  \|u^0_{\rm per}v\|_{\mathcal C}^2 \le {\mathcal E}^\nu(v) + \|v^2\|_{\mathcal C}^2 +\|\nu\|_{\mathcal C}^2,
\end{equation}
and for each $\nu \in {\mathcal C}$, the functional ${\mathcal E}^\nu$ is weakly lower semicontiuous in the closed convex subset ${\mathcal Q}_+$ of ${\mathcal Q}$.
\end{lem}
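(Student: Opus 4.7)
The lemma consists of three independent claims, which I address in order. For the coercivity bound (\ref{eq:coercE}), my plan is to exploit the non-negativity on $\mathcal{Q}_+$ of each of the three summands of $\mathcal{E}^\nu(v)$: the first by (\ref{eq:boundH0per-epsilon0F}), the third because it equals $\tfrac{1}{2}\|2u^0_{\rm per}v+v^2-\nu\|_\mathcal{C}^2$, and the second thanks to Lemma~\ref{lem:convex} applied with $\gamma=10/3$, $a=u^0_{\rm per}(x)\in[m,M]$ and $b=v(x)\geq -a$. That lemma in fact yields the sharper pointwise bound $|u^0_{\rm per}+v|^{10/3}-|u^0_{\rm per}|^{10/3}-\tfrac{5}{3}|u^0_{\rm per}|^{4/3}(2u^0_{\rm per}v+v^2) \geq \tfrac{2}{3}m^{4/3} v^2$, obtained by subtracting the $\tfrac{5}{3}|u^0_{\rm per}|^{4/3}v^2$ already in the integrand from the lower bound $(\gamma-1)|u^0_{\rm per}|^{\gamma-2}v^2 = \tfrac{7}{3}|u^0_{\rm per}|^{4/3}v^2$ produced by the lemma. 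Writing $\langle(H^0_{\rm per}-\epsilon^0_{\rm F})v,v\rangle = \|\nabla v\|_{L^2}^2 + \int_{\R^3} W v^2$ with $W := \tfrac{5}{3}(\rho^0_{\rm per})^{2/3}+V^0_{\rm per}-\epsilon^0_{\rm F} \in L^\infty(\R^3)$, I then combine the two lower bounds $\langle\cdot\rangle\geq 0$ and $\langle\cdot\rangle \geq \|\nabla v\|_{L^2}^2 - \|W\|_\infty \|v\|_{L^2}^2$ in a convex combination with a small parameter $\eta\in(0,1]$, yielding $\mathcal{E}^\nu(v) \geq \eta\|\nabla v\|_{L^2}^2 + (\tfrac{2}{3}m^{4/3}-\eta\|W\|_\infty)\|v\|_{L^2}^2$, and choosing $\eta$ small enough makes both coefficients positive, proving (\ref{eq:coercE}).

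For (\ref{eq:boundD}), the non-negativity of the kinetic and Thomas-Fermi summands gives $\|2u^0_{\rm per}v+v^2-\nu\|_\mathcal{C}^2 \leq 2\mathcal{E}^\nu(v)$. Writing $2u^0_{\rm per}v = (2u^0_{\rm per}v+v^2-\nu) - (v^2-\nu)$ and applying the Hilbert-space inequality $\|a-b\|^2 \leq 2(\|a\|^2+\|b\|^2)$ twice, I obtain $\|2u^0_{\rm per}v\|_\mathcal{C}^2 \leq 4\mathcal{E}^\nu(v)+4\|v^2\|_\mathcal{C}^2+4\|\nu\|_\mathcal{C}^2$, and division by four (using $\|u^0_{\rm per}v\|_\mathcal{C}^2 = \tfrac{1}{4}\|2u^0_{\rm per}v\|_\mathcal{C}^2$) gives exactly (\ref{eq:boundD}).

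For the weak lower semicontinuity, take $(v_n)\subset \mathcal{Q}_+$ with $v_n\rightharpoonup v$ in $\mathcal{Q}$; since the map $v\mapsto (v,u^0_{\rm per}v)$ is an isometry from $\mathcal{Q}$ onto a closed subspace of $H^1(\R^3)\times \mathcal{C}$, this is equivalent to $v_n\rightharpoonup v$ in $H^1(\R^3)$ together with $u^0_{\rm per}v_n \rightharpoonup u^0_{\rm per}v$ in $\mathcal{C}$. Extracting a subsequence realising $\liminf \mathcal{E}^\nu(v_n)$, Rellich-Kondrachov gives $v_n\to v$ strongly in $L^p_{\rm loc}(\R^3)$ for $1\leq p<6$ and a.e. on $\R^3$, and the pointwise constraint $v\geq -u^0_{\rm per}$ passes to the limit. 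The first term of $\mathcal{E}^\nu$ is a non-negative, hence convex, continuous quadratic form on $H^1(\R^3)$, therefore weakly lower semicontinuous; the second term is weakly lower semicontinuous by Fatou's lemma, using the pointwise non-negativity of its integrand and the a.e. convergence. The main subtlety is the Coulomb term, for which I need $v_n^2\rightharpoonup v^2$ in $\mathcal{C}$: since $v_n$ is bounded in $H^1(\R^3)\hookrightarrow L^{12/5}(\R^3)$, the sequence $v_n^2$ is bounded in $L^{6/5}(\R^3)\hookrightarrow \mathcal{C}$, and the $L^1_{\rm loc}(\R^3)$ convergence $v_n^2 \to v^2$ forces any weakly convergent subsequence in $\mathcal{C}$ to have limit $v^2$, so the whole sequence converges weakly to $v^2$. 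Combined with $u^0_{\rm per}v_n\rightharpoonup u^0_{\rm per}v$ in $\mathcal{C}$, this gives $2u^0_{\rm per}v_n+v_n^2-\nu \rightharpoonup 2u^0_{\rm per}v+v^2-\nu$ in $\mathcal{C}$, and weak lower semicontinuity of the squared Hilbert norm concludes. The identification of the weak limit of $v_n^2$ in $\mathcal{C}$ is the only delicate point of the argument.
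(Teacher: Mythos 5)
Your proof is correct and follows essentially the same route as the paper's: the same pointwise lower bound $\tfrac{2}{3}m^{4/3}v^2$ from Lemma~\ref{lem:convex} combined with (\ref{eq:boundH0per-epsilon0F}) and the non-negativity of $D$ for (\ref{eq:coercE}), the same parallelogram-type estimate for (\ref{eq:boundD}), and the same term-by-term lower semicontinuity argument (non-negative continuous quadratic form, Fatou for the Thomas--Fermi term, and identification of the weak $\mathcal C$-limit of $v_n^2$ through local strong convergence). The only differences are cosmetic: the paper identifies the weak limits of $u^0_{\rm per}v_n$ and $v_n^2$ by subsequence extraction rather than via the isometric embedding of $\mathcal Q$ into $H^1(\R^3)\times\mathcal C$, and in your convex combination the coefficient of $\|v\|_{L^2}^2$ should read $(1-\eta)\tfrac{2}{3}m^{4/3}-\eta\|W\|_{L^\infty}$, which changes nothing since it is still positive for $\eta$ small.
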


\medskip

Indeed, if $(v_n)_{n \in \N}$ is a minimizing sequence for (\ref{eq:minpbInu}), we infer from (\ref{eq:coercE}) and (\ref{eq:boundD}) that $(v_n)_{n \in \N}$ is bounded in ${\mathcal Q}$. We can therefore extract from $(v_n)_{n \in \N}$ a subsequence $(v_{n_k})_{k \in \N}$ weakly converging in ${\mathcal Q}$ to some $v_\nu \in {\mathcal Q}$. As ${\mathcal Q}_+$ is convex and strongly closed in ${\mathcal Q}$, it is weakly closed in ${\mathcal Q}$. Hence $v_\nu  \in {\mathcal Q}_+$. Besides, ${\mathcal E}^\nu$ being weakly l.s.c. in ${\mathcal Q}_+$, we obtain 
$$
{\mathcal E}^\nu(v_\nu) \le \liminf_{k \to \infty}{\mathcal E}^\nu(v_{n_k}) = I^\nu.
$$ 
Therefore $v_\nu$ is a minimizer of (\ref{eq:minpbInu}).

\medskip

\begin{proof}[Proof of Lemma~\ref{lem:boundV}]
Using (\ref{eq:boundu0per}), (\ref{eq:boundH0per-epsilon0F}), Lemma~\ref{lem:convex}, and the non-negativity of $D$, we obtain that for all $\nu \in {\mathcal C}$ and all $v \in {\mathcal Q}_+$,
\begin{eqnarray*}
{\mathcal E}^\nu(v) & \ge & \frac{2}{3} m^{4/3} \|v\|_{L^2(\R^3)}^2, 
\end{eqnarray*}
and
\begin{eqnarray*}
{\mathcal E}^\nu(v) & \ge & \|\nabla v\|_{L^2(\R^3)}^2 - \left(\frac 53 M^{4/3}+\|V^0_{\rm per}\|_{L^\infty(\R^3)}\right) \|v\|_{L^2(\R^3)}^2.
\end{eqnarray*}
Therefore, there exists some constant $\beta > 0$ such that
$$
\forall \nu \in {\mathcal C}, \quad \forall v \in {\mathcal Q}_+, \quad {\mathcal E}^\nu(v) \ge \beta \|v\|_{H^1(\R^3)}^2.
$$
Besides, for all $\nu \in {\mathcal C}$ and all $v \in {\mathcal Q}_+$,
\begin{eqnarray*}
D(u^0_{\rm per} v,u^0_{\rm per} v)  & \le &  \frac 12 D(2u^0_{\rm per} v+v^2-\nu,2u^0_{\rm per} v+v^2-\nu) + \frac 12 D(v^2-\nu,v^2-\nu) \nonumber \\ & \le &  {\mathcal E}^\nu(v) + D(v^2,v^2)+D(\nu,\nu). 
\end{eqnarray*}
Hence (\ref{eq:boundD}). 

Let $v \in {\mathcal Q}_+$ and $(v_n)_{n \in \N}$ be a sequence of elements of ${\mathcal Q}_+$ weakly converging to $v$ in~${\mathcal Q}$. As $(v_n)_{n \in \N}$ is weakly converging, it is bounded in ${\mathcal Q}$, which means that $(v_n)_{n \in \N}$ and $(u^0_{\rm per}v_n)_{n \in \N}$ are bounded in $H^1(\R^3)$ and ${\mathcal C}$ respectively. We also notice that $(v_n^2)_{n \in \N}$ is bounded in $L^1(\R^3)\cap L^3(\R^3) \hookrightarrow L^{6/5}(\R^3) \hookrightarrow {\mathcal C}$.

Therefore, we can extract from $(v_n)_{n \in \N}$ a subsequence $(v_{n_k})_{ k\in \N}$ such that 
\begin{itemize}
\item $({\mathcal E}^\nu(v_{n_k}))_{k \in \N}$ converges to $I=\liminf_{n \to \infty} {\mathcal E}^\nu(v_n)$ in $\R_+$;
\item $(v_{n_k})_{k\in \N}$ converges to some $\widetilde v \in H^1(\R^3)$ weakly in $H^1(\R^3)$, strongly in $L^p_{\rm loc}(\R^3)$ for all $1 \le p < 6$ and almost everywhere in $\R^3$;
\item  $(u^0_{\rm per}v_{n_k})_{k \in \N}$ weakly converges in ${\mathcal C}$ to some $w \in {\mathcal C}$;
\item $(v_{n_k}^2)_{k \in \N}$ weakly converges in ${\mathcal C}$ to some $z \in {\mathcal C}$.
\end{itemize}
We can rewrite the last two items above as
$$
\forall V \in {\mathcal C}', \quad \int_{\R^3} u^0_{\rm per}v_{n_k} V \mathop{\longrightarrow}_{k \to \infty} \int_{\R^3} w V, \quad \mbox{and} \quad 
 \quad \int_{\R^3} v_{n_k}^2 V \mathop{\longrightarrow}_{k \to \infty} \int_{\R^3} z V.
$$
Together with the strong convergence of $(v_{n_k})_{k \in \N}$ to $\widetilde v$ in $L^2_{\rm loc}(\R^3)$, this leads to $u^0_{\rm per}\widetilde v=w \in {\mathcal C}$ and $z=\widetilde v^2$. This in turn implies that $(v_{n_k})_{k \in \N}$ weakly converges in ${\mathcal Q}$ to $\widetilde v$. Therefore $\widetilde v=v$. Finally, 
$(v_{n_k})_{k \in \N}$ converges to $v$ weakly in $H^1(\R^3)$ and almost everywhere in~$\R^3$ and $(2u^0_{\rm per}v_{n_k}+v_{n_k}^2-\nu)_{k \in \N}$ weakly converges to $2u^0_{\rm per}v+v^2-\nu$ in ${\mathcal C}$.

It follows from (\ref{eq:boundH0per-epsilon0F}) that 
$$
\langle (H^0_{\rm per}-\epsilon_{\rm F}^0) v,v \rangle_{H^{-1}(\R^3),H^1(\R^3)} \le \liminf_{k \to \infty} \langle (H^0_{\rm per}-\epsilon_{\rm F}^0) v_{n_k},v_{n_k}\rangle_{H^{-1}(\R^3),H^1(\R^3)}.
$$
By Fatou's Lemma,
\begin{eqnarray*}
&& \int_{\R^3} \left( |u^0_{\rm per}+v|^{10/3} - |u^0_{\rm per}|^{10/3} - \frac 53 |u^0_{\rm per}|^{4/3}(2u^0_{\rm per}v+v^2) \right) \\
&& \qquad
\le \liminf_{k \to \infty} 
\int_{\R^3} \left( |u^0_{\rm per}+v_{n_k}|^{10/3} - |u^0_{\rm per}|^{10/3} - \frac 53 |u^0_{\rm per}|^{4/3}(2u^0_{\rm per}v_{n_k}+v_{n_k}^2) \right).
\end{eqnarray*}
Lastly, 
$$
D(2u^0_{\rm per}v+v^2-\nu,2u^0_{\rm per}v+v^2-\nu) \le \liminf_{k \to \infty} D(2u^0_{\rm per}v_{n_k}+v_{n_k}^2-\nu,2u^0_{\rm per}v_{n_k}+v_{n_k}^2-\nu).
$$
Consequently, 
$$
{\mathcal E}^\nu(v) \le \liminf_{k \to \infty} {\mathcal E}^\nu(v_{n_k}) = \liminf_{n \to \infty} {\mathcal E}^\nu(v_n),
$$
which proves that ${\mathcal E}^\nu$ is weakly l.s.c. in ${\mathcal Q}_+$.
\end{proof}

\medskip

Clearly, the functional ${\mathcal E}^\nu$ is $C^1$  in ${\mathcal Q}$ and it holds
\begin{eqnarray*}
\forall h \in {\mathcal Q}, \quad \langle {{\mathcal E}^\nu}'(v), h \rangle_{{\mathcal Q}',{\mathcal Q}} &=& 2 \bigg(
\langle (H^0_{\rm per}-\epsilon^0_{\rm F}) v,h \rangle_{H^{-1}(\R^3),H^1(\R^3)} \\
&& + \frac{5}{3} \int_{\R^3} \left( |u^0_{\rm per}+v|^{7/3} - |u^0_{\rm per}|^{7/3} - |u^0_{\rm per}|^{4/3}v \right) h \nonumber \\
&&  + D(2u^0_{\rm per}v + v^2 - \nu,(u^0_{\rm per}+v)h) \bigg).
\end{eqnarray*}
The minimization set ${\mathcal Q}_+$ being convex, $v_\nu$ satisfies the Euler equation
\begin{equation} \label{eq:Eulerconvex}
\forall v \in {\mathcal Q}_+, \quad \langle {{\mathcal E}^\nu}'(v_\nu),(v-v_\nu) \rangle_{{\mathcal Q}',{\mathcal Q}} \ge 0.
\end{equation}
Let $u_\nu=u^0_{\rm per}+v_\nu$ and 
$$
V = V^0_{\rm per}-\epsilon^0_{\rm F} + \frac{5}{3}|u_{\nu}|^{5/3} + (2u^0_{\rm per}v_\nu+v_\nu^2-\nu) \star |\cdot|^{-1}.
$$
The function $u_\nu$ satisfies $u_\nu \in H^1_{\rm loc}(\R^3)$, $u_\nu \ge 0$ in $\R^3$, and 
\begin{eqnarray*}
\forall \phi \in C^\infty_c(\R^3), \quad \int_{\R^3} \nabla u_\nu \cdot \nabla \phi + \int_{\R^3} V u_\nu \phi & = & \frac 12  \langle {{\mathcal E}^\nu}'(v_\nu), \phi \rangle_{{\mathcal Q}',{\mathcal Q}} \\ & = & \frac 12  \langle {{\mathcal E}^\nu}'(v_\nu),(v_\nu+\phi-v_\nu) \rangle_{{\mathcal Q}',{\mathcal Q}}.
\end{eqnarray*}
This implies that for all $\phi \in C^\infty_c(\R^3)$ such that $\phi \ge 0$ in $\R^3$, 
$$
\int_{\R^3} \nabla u_\nu \cdot \nabla \phi + \int_{\R^3} V u_\nu \phi \ge 0,
$$
since $v_\nu+\phi \in {\mathcal Q}_+$. Therefore, $u_\nu$ is a non-negative supersolution of $-\Delta u + V u = 0$, with 
$V \in L^{18/5}_{\rm loc}(\R^3)$. It follows from Harnack's inequality (see Theorem~5.2 of \cite{Trud73}) that either $u_\nu$ is identically equal to zero in $\R^3$, or for each bounded domain $\Omega$ of $\R^3$, there exists $\eta > 0$ such that $v_\nu \geq -u^0_{\rm per}+\eta$ in $\Omega$. As the first case is excluded since $-u_{per}^0 \notin {\mathcal Q}_+$, (\ref{eq:Eulerconvex}) implies ${{\mathcal E}^\nu}'(v_\nu)=0$, which means that $v_\nu$ is a solution in ${\mathcal Q}_+$ to the elliptic equation~(\ref{eq:eulerpb}).

\medskip

Remarking that 
$$
{\mathcal E}^\nu(v_\nu) \le {\mathcal E}^\nu(0)=\frac 12 D(\nu,\nu) = \frac 12 \|\nu\|_{\mathcal C}^2,
$$
and using (\ref{eq:coercE}), (\ref{eq:boundD}) and Lemma~\ref{lem:boundDL}, we finally get the estimate (\ref{eq:boundvnu}).

\subsection{Uniqueness of the minimizer to (\ref{eq:minpbInu})}

Noticing that
$$
{\mathcal Q}_+ = \left\{ v \in H^1(\R^3) \; | \; (u^0_{\rm per}+v)^2-\rho^0_{\rm per} \in {\mathcal C}, \; u^0_{\rm per} +v \ge 0 \right\}, 
$$
we obtain that $v_\star$ is a minimizer to (\ref{eq:minpbInu}) if and only if $\rho_\star=(u^0_{\rm per}+v_\star)^2$ is a minimizer to
\begin{equation} \label{eq:minG}
\inf \left\{ {\mathcal G}(\rho), \; \rho \in {\mathcal K} \right\}
\end{equation}
where
\begin{eqnarray*}
{\mathcal G}(\rho) &=& J(\rho) + \int_{\R^3} \left(\rho^{5/3}-(\rho^0_{\rm per})^{5/3}-\frac 53 (\rho^0_{\rm per})^{2/3} (\rho-\rho^0_{\rm per}) \right) + \frac 12 D(\rho-\rho^0_{\rm per}-\nu,\rho-\rho^0_{\rm per}-\nu),
\end{eqnarray*}
$$
J(\rho) = \langle (H^0_{\rm per}-\epsilon^0_{\rm F})(\sqrt{\rho}-u^0_{\rm per}),(\sqrt{\rho}-u^0_{\rm per})\rangle_{H^1(\R^3),H^{-1}(\R^3)}.
$$
and 
$$
{\mathcal K} = \left\{ \rho \ge 0 \; | \; \sqrt{\rho}-u^0_{\rm per} \in H^1(\R^3), \; \rho-\rho^0_{\rm per} \in {\mathcal C} \right\}.
$$
To see that ${\mathcal K}$ is convex and that ${\mathcal G}$ is strictly convex on ${\mathcal K}$, we first introduce the set
$$
\widetilde{\mathcal K} = \left\{ \rho \ge 0 \; | \; \sqrt{\rho}-u^0_{\rm per} \in H^1(\R^3) \cap {\mathcal E}'(\R^3) \right\}
$$
where ${\mathcal E}'(\R^3)$ denotes the space of the compactly supported distributions, and observe that for all $\rho \in \widetilde{\mathcal K}$,
$$
J(\rho)=
\int_{\R^3} \left( |\nabla \sqrt{\rho}|^2-|\nabla u^0_{\rm per}|^2 + \left(\frac 53 (\rho^0_{\rm per})^{2/3}+ V^0_{\rm per}-\epsilon^0_{\rm F}\right) (\rho-\rho^0_{\rm per}) \right).
$$
Reasoning as in the proof of the convexity of the functional $\rho \mapsto \int_{\R^3} |\nabla \sqrt{\rho}|^2$ on the convex set $\left\{\rho \ge 0 \; |\; \sqrt\rho \in H^1(\R^3)\right\}$ (see e.g. \cite{LiebLoss}), we obtain that $\widetilde{\mathcal K}$ is convex and that ${\mathcal J}$ is convex on $\widetilde{\mathcal K}$. It then follows that ${\mathcal G}$ is strictly convex on $\widetilde{\mathcal K}$. We finally conclude by a density argument.

As ${\mathcal G}$ is strictly convex on the convex set ${\mathcal K}$, (\ref{eq:minG}) has at most one solution. Therefore, $\rho_\nu=(u^0_{\rm per}+v_\nu)^2$ is the unique solution to (\ref{eq:minG}), and $v_\nu$ is the unique solution to (\ref{eq:minpbInu}).

\subsection{Properties of the unique minimizer of (\ref{eq:minpbInu})}

The Euler equation (\ref{eq:eulerpb}) can be rewritten as
\begin{equation} \label{eq:elliptic1}
- \Delta v_\nu + V_\nu u^0_{\rm per} = f + (\nu \star |\cdot|^{-1})u^0_{\rm per},
\end{equation}
where
$$
f = (\epsilon^0_{\rm F}-V^0_{\rm per}) v_\nu -\frac 53 \left( |u^0_{\rm per}+v_\nu|^{7/3} - |u^0_{\rm per}|^{7/3} \right) -  \left((2u^0_{\rm per}v_\nu + v_\nu^2 - \nu)\star |\cdot|^{-1}\right)v_\nu,
$$
and where $V_\nu = (2u^0_{\rm per}v_\nu+v_\nu^2) \star |\cdot|^{-1}$ satisfies 
\begin{equation} \label{eq:elliptic2}
-\Delta V_\nu = 4\pi (2u^0_{\rm per}v_\nu+v_\nu^2).
\end{equation}
We know that $v_\nu \in H^1(\R^3) \hookrightarrow {\mathcal C}'$ and that $V_\nu \in {\mathcal C}'$ since $(2u^0_{\rm per}v_\nu+v_\nu^2) \in {\mathcal C}$. 
Adding up (\ref{eq:elliptic1}) and (\ref{eq:elliptic2}), we obtain that $W_\nu=v_\nu+V_\nu$ is a solution in ${\mathcal C}'$ to
\begin{equation} \label{eq:equationg}
- \Delta W_\nu + u^0_{\rm per} W_\nu = \widetilde f + (\nu \star |\cdot|^{-1})u^0_{\rm per},
\end{equation}
where $\widetilde f = f + (8\pi +1) u^0_{\rm per}v_\nu+4\pi v_\nu^2 \in L^2(\R^3)$. Since $u^0_{\rm per}$ satisfies (\ref{eq:boundu0per}), the elliptic equation 
$$
- \Delta w + u^0_{\rm per} w = \widetilde f
$$
has a unique variational solution in $H^1(\R^3)$, which we denote by $w_\nu$. Clearly $w_\nu \in~H^2(\R^3)$. The function $\widetilde w_\nu=W_\nu-w_\nu \in {\mathcal C}'$ then is solution to 
\begin{equation} \label{eq:equationw}
- \Delta \widetilde w_\nu + u^0_{\rm per} \widetilde w_\nu = (\nu \star |\cdot|^{-1})u^0_{\rm per}.
\end{equation}
Introducing $\widetilde \rho_\nu = -(4\pi)^{-1} \Delta \widetilde w_\nu \in {\mathcal C}$, (\ref{eq:equationw}) also reads
$$
4\pi \frac{\widetilde \rho_\nu}{u^0_{\rm per}} =  (\nu-\widetilde \rho_\nu) \star |\cdot|^{-1}.
$$
Therefore,
$$
4\pi \int_{\R^3} \frac{\widetilde \rho_\nu^2}{u^0_{\rm per}} = D(\nu-\widetilde \rho_\nu,\widetilde \rho_\nu) < \infty,
$$
which proves that $\widetilde \rho_\nu \in L^2(\R^3)$, hence that $(\nu-\widetilde \rho_\nu) \star |\cdot|^{-1} \in L^2(\R^3)$. As
$$
(\nu-(2u^0_{\rm per}v_\nu+v_\nu^2)) \star |\cdot|^{-1} = \nu \star |\cdot|^{-1} - V_\nu =   (\nu-\widetilde \rho_\nu) \star |\cdot|^{-1}+ \widetilde w_\nu - V_\nu =  (\nu-\widetilde \rho_\nu) \star |\cdot|^{-1}+v_\nu - w_\nu,
$$
we obtain
$$
\Phi^0_\nu = (\nu-(2u^0_{\rm per}v_\nu+v_\nu^2)) \star |\cdot|^{-1} \in L^2(\R^3).
$$
Introducing $\rho^0_\nu = \nu-(2u^0_{\rm per}v_\nu + v_\nu^2)$, the above statement reads
$$
\int_{\R^3} \frac{|\widehat{\rho^0_\nu}(k)|^2}{|k|^4} \, dk < \infty.
$$
Therefore,
\begin{eqnarray*}
\frac{1}{|B_r|} \int_{B_r} |\widehat{\rho^0_\nu}(k)| \, dk &\le& 
\frac{1}{|B_r|} \left( \int_{B_r} |k|^4 \, dk \right)^{1/2} \left( \int_{B_r} \frac{|\widehat{\rho^0_\nu}(k)|^2}{|k|^4} \, dk \right)^{1/2} \\
&=& 3 \left(\frac{r}{28\pi}\right)^{1/2}  \left( \int_{B_r} \frac{|\widehat{\rho^0_\nu}(k)|^2}{|k|^4} \, dk \right)^{1/2} \mathop{\longrightarrow}_{r \to 0} 0.
\end{eqnarray*}
Lastly, rewritting (\ref{eq:elliptic1}) as 
$$
-\Delta v_\nu = f + \Phi^0_\nu  u^0_{\rm per},
$$
we conclude that $v_\nu \in H^2(\R^3)$.

\subsection{End of the proof of Theorem~\ref{Th:defect}}

We have proven in the previous two sections that:
\begin{enumerate}
\item (\ref{eq:minpbInu}) has a unique minimizer $v_\nu$;
\item if $(v_n)_{n \in \N}$ is a minimizing sequence for~(\ref{eq:minpbInu}), we can extract from $(v_n)_{n \in \N}$ a subsequence $(v_{n_k})_{k \in \N}$ which 
converges to $v_\nu$, weakly in $H^1(\R^3)$, and strongly in $L^p_{\rm loc}(\R^3)$ for all $1 \le p < 6$, and such that $(u^0_{\rm per}v_{n_k})_{k \in \N}$ converges 
to $u^0_{\rm per}v_\nu$ weakly in ${\mathcal C}$.
\end{enumerate}
By uniqueness of the limit, this implies that any minimizing sequence $(v_n)_{n \in \N}$ for~(\ref{eq:minpbInu}) converges to $v_\nu$, weakly in $H^1(\R^3)$, and strongly 
in $L^p_{\rm loc}(\R^3)$ for all $1 \le p < 6$, and that  $(u^0_{\rm per}v_n)_{n \in \N}$ converges weakly to $u^0_{\rm per}v_\nu$ in ${\mathcal C}$. 
Lastly, the existence of a minimizing sequence for~(\ref{eq:minpbInu}) satisfying (\ref{eq:CCmin}) is a straightforward consequence of Lemma~\ref{lem:density}.

\subsection{Thermodynamic limit  with a charge constraint}

Let $\nu \in L^1(\R^3) \cap L^2(\R^3)$. Clearly, $v_{\nu,q,L}$ is a minimizer to (\ref{eq:minTFWdefL}) if and only if $u^0_{\rm per}+v_{\nu,q,L}$ is a minimizer to (\ref{eq:TFWgen}) with $\cR=\cR_L$, $\rho^{\rm nuc} = \rho^{\rm nuc}_{\rm per} + \nu_L$ and $Q=ZL^3+q$ such that $u^0_{\rm per}+v_{\nu,q,L} \ge 0$ in $\R^3$. It follows from Proposition~\ref{prop:EU} that  (\ref{eq:minTFWdefL}) has a unique minimizer $v_{\nu,q,L}$, which satisfies $v_{\nu,q,L} \in H^4_{\rm per}(\Gamma_L) \hookrightarrow C^2(\R^3) \cap L^\infty(\R^3)$ and $u^0_{\rm per}+v_{\nu,q,L} > 0$ in $\R^3$, and the Euler equation (\ref{eq:eulerpbL}) for some $\mu_{\nu,q,L} \in \R$. 

Let $\alpha = |\Gamma_1|^{-1} \int_{\Gamma_1}u^0_{\rm per}$. For $L$ large enough, $\alpha^2+q/|\Gamma_L| \ge 0$ and the constant function $z_L=-\alpha + \sqrt{\alpha^2+q/|\Gamma_L|}$ satisfies $z_L \ge -u^0_{\rm per}$ everywhere in $\R^3$ and
$$
\int_{\Gamma_L} (2u^0_{\rm per}z_L+z_L^2) = q.
$$
Using Lemma~\ref{lem:convex}, Lemma~\ref{lem:limitDL}, and the fact that $|z_L| \le CL^{-3}$ for some constant $C$ indenpendent of $L$, we obtain 
\begin{eqnarray}
{\mathcal E}^\nu_L(v_{\nu,q,L}) & \le & {\mathcal E}^{\nu}_L(z_L) \nonumber \\
&=& \int_{\Gamma_L}  \left( |u^0_{\rm per}+z_L|^{10/3} - |u^0_{\rm per}|^{10/3} - \frac{10}3 |u^0_{\rm per}|^{7/3}z_L \right)  +\int_{\Gamma_L} (V^0_{\rm per}-\epsilon^0_{\rm F})z_L^2 \nonumber \\ && +\frac 12 D_{\cR_L}
\left( 2u^0_{\rm per}z_L+z_L^2-\nu_L,2u^0_{\rm per}z_L+z_L^2-\nu_L \right) \quad \mathop{\longrightarrow}_{L \to \infty} \quad D(\nu,\nu). \label{eq:bound22}
\end{eqnarray}
Besides, reasoning as in Section~\ref{sec:minpblnu}, we obtain
\begin{equation} \label{eq:bound21}
\forall v_L \in {\mathcal Q}_{+,L}, \quad {\mathcal E}^\nu_L(v_L) \ge \beta \|v_L\|_{H^1_{\rm per}(\Gamma_L)}^2,
\end{equation}
where the constant $\beta > 0$ is the same as in (\ref{eq:coercE}), and
\begin{eqnarray}
\forall v_L \in {\mathcal Q}_{+,L}, \quad 
D_{\cR_L}(u^0_{\rm per} v_L,u^0_{\rm per} v_L)  &\le& {\mathcal E}^\nu_L(v_L) + \frac 12 D_{\cR_L}(v_L^2-\nu_L,v_L^2-\nu_L) \nonumber \\ &\le &
{\mathcal E}^\nu_L(v_L) +  D_{\cR_L}(v_L^2,v_L^2)+  D_{\cR_L}(\nu_L,\nu_L).
\label{eq:bound23}
\end{eqnarray}
We infer from (\ref{eq:bound22}) and (\ref{eq:bound21}) that for each $q \in \R$, there exists $C_q \in \R_+$ such that
\begin{equation} \label{eq:boundvnuqL}
\forall L \in \N^\ast, \quad \|v_{\nu,q,L}\|_{H^1_{\rm per}(\Gamma_L)} \le C_q.
\end{equation}
By a diagonal extraction process similar to the one used in the proof of Lemma~\ref{lem:CVDL}, we can extract from $(v_{\nu,q,L})_{L \in \N^\ast}$ a subsequence $(v_{\nu,q,L_k})_{k \in \N}$ which converges to some $u_\nu \in H^1(\R^3)$, weakly in $H^1_{\rm loc}(\R^3)$, strongly in $L^p_{\rm loc}(\R^3)$ for all $1 \le p < 6$ and almost everywhere in $\R^3$ and such that
$$
\lim_{k \to \infty} {\mathcal E}^\nu_{L_k} (v_{\nu,q,L_k}) = \liminf_{L \to \infty} {\mathcal E}^\nu_{L} (v_{\nu,q,L}).
$$
In particular $u_\nu \ge - u^0_{\rm per}$ almost everywhere in $\R^3$. 

\medskip

Let us now prove that $u^0_{\rm per}u_\nu \in {\mathcal C}$. First, we notice that it follows from (\ref{eq:bound22}), (\ref{eq:bound23}) and Lemma~\ref{lem:boundDL} that there exists a constant $\widetilde C_q$ such that
\begin{equation} \label{eq:boundDuvL}
D_{\cR_L}(u^0_{\rm per} v_{\nu,q,L},u^0_{\rm per} v_{\nu,q,L}) \le \widetilde C_q.
\end{equation}
Besides,
$$
\left|\int_{\Gamma_L} u^0_{\rm per} v_{\nu,q,L}\right| = \left|\frac 12 \left( q - \int_{\Gamma_L} v_{\nu,q,L}^2 \right)\right| \le \frac{1}{2} \left( |q| + C_q^2 \right),
$$
and $(u^0_{\rm per}v_{\nu,q,L_k})_{k \in \N}$ converges to $u^0_{\rm per}u_\nu$ strongly in $L^2_{\rm loc}(\R^3)$, hence in the distributional sense. It therefore follows from Lemma~\ref{lem:CVDL} that $u^0_{\rm per}u_\nu \in {\mathcal C}$. Thus, $u_\nu \in {\mathcal Q}_+$.

\medskip

As (\ref{eq:eulerpbL}) holds in $H^{-1}_{\rm per}(\Gamma_L)$, we can take $u^0_{\rm per}$ as a test function. Using (\ref{eq:uper}), we obtain
\begin{eqnarray*}
\mu_{\nu,q,L} \left( Z L^3 + \int_{\Gamma_L}  v_{\nu,q,L}u^0_{\rm per}\right)& = & \int_{\Gamma_L} \frac{5}{3} \left( |u^0_{\rm per}+v_{\nu,q,L}|^{7/3} - |u^0_{\rm per}|^{7/3} - |u^0_{\rm per}|^{4/3}v_{\nu,q,L} \right) u^0_{\rm per} 
\\ &&  + D_{\cR_L} \left( (2u^0_{\rm per}v_{\nu,q,L} + v_{\nu,q,L}^2 - \nu_L),(u^0_{\rm per}+v_{\nu,q,L})u^0_{\rm per} \right)  .
\end{eqnarray*}
Using (\ref{eq:boundvnuqL}), (\ref{eq:boundDuvL}) and Lemma~\ref{lem:convex}, we obtain
\begin{eqnarray*}
&&  \left|\int_{\Gamma_L} \frac{5}{3} \left( |u^0_{\rm per}+v_{\nu,q,L}|^{7/3} - |u^0_{\rm per}|^{7/3} - |u^0_{\rm per}|^{4/3}v_{\nu,q,L} \right) u^0_{\rm per} \right| \le C'_q L^{3/2}, \\
&& \left| D_{\cR_L} \left( (2u^0_{\rm per}v_{\nu,q,L} + v_{\nu,q,L}^2 - \nu_L),(u^0_{\rm per}+v_{\nu,q,L})u^0_{\rm per} \right)\right| \le C'_q L^{5/2}, \\
&& \left|   \int_{\Gamma_L}  v_{\nu,q,L}u^0_{\rm per}\right| \le  \frac{1}{2} \left( |q| + C_q^2 \right),
\end{eqnarray*}
for some constant $C'_q$ independent of $L$, which allows us to conclude that $(\mu_{\nu,q,L})_{L \in \N^\ast}$ goes to zero when $L$ goes to infinity.

\medskip

Note that using Lemma~\ref{lem:CVDL}, we can pass to the limit in the Euler equation (\ref{eq:eulerpbL}) in the distributional sense, and prove that $u_\nu$ satisfies
\begin{eqnarray}
&& (H^0_{\rm per} - \epsilon^0_F) u_\nu + \frac{5}{3} \left( |u^0_{\rm per}+u_\nu|^{7/3} - |u^0_{\rm per}|^{7/3} - |u^0_{\rm per}|^{4/3}u_\nu \right) \nonumber \\
&& \qquad\qquad\qquad + \left((2u^0_{\rm per}u_\nu + u_\nu^2 - \nu)\star |\cdot|^{-1}\right)(u^0_{\rm per}+u_\nu) = 0. 
\label{eq:eulerpbunu}
\end{eqnarray}

\medskip

We are now going to prove that ${\mathcal E}^\nu(u_\nu) \le {\mathcal E}^\nu(v_\nu)$, which implies that $u_\nu=v_\nu$ and, by uniqueness of the limit, that the whole sequence $(v_{\nu,q,L})_{L \in \N^\ast}$ converges to $v_\nu$ weakly in $H^1_{\rm loc}(\R^3)$, and strongly in $L^p_{\rm loc}(\R^3)$ for all $1 \le p < 6$.

Let $\epsilon > 0$. From Lemma~\ref{lem:density}, there exists $v_{\nu,q}^\epsilon \in {\mathcal Q}_+\cap {\mathcal C}^2_c(\R^3)$ such that 
$$
\int_{\Gamma_L} (2u^0_{\rm per}v^\epsilon_{\nu,q}+(v^\epsilon_{\nu,q})^2) = q
$$ 
and
$$
{\mathcal E}^\nu(v_\nu) \le {\mathcal E}^\nu(v_{\nu,q}^\epsilon) \le {\mathcal E}^\nu(v_{\nu,q})+\epsilon.
$$
For $L$ large enough, the ${\mathcal R}_L$-periodic function $v^\epsilon_{\nu,q,L}$ defined by $v^\epsilon_{\nu,q,L}|_{\Gamma_L}=v^\epsilon_{\nu,q}|_{\Gamma_L}$ is in the minimization set of (\ref{eq:minTFWdefL}). Using Lemma~\ref{lem:limitDL} and the fact that $v^\epsilon_{\nu,q}$ is compactly supported, we have for $L$ large enough $v^\epsilon_{\nu,q,L} \in {\mathcal Q}_{+,L}$ and 
\begin{eqnarray*}
{\mathcal E}^\nu_L(v_{\nu,q,L}) \le  {\mathcal E}^\nu_L(v_{\nu,q,L}^\epsilon)  &=&   \langle (H^0_{\rm per}-\epsilon^0_{\rm F})v_{\nu,q}^\epsilon,v_{\nu,q}^\epsilon\rangle_{H^{-1}(\R^3),H^1(\R^3)}\\ && + \int_{\R^3} \left( |u^0_{\rm per}+v_{\nu,q}^\epsilon|^{10/3} - |u^0_{\rm per}|^{10/3} - \frac 53 |u^0_{\rm per}|^{4/3}(2u^0_{\rm per}v_{\nu,q}^\epsilon+(v_{\nu,q}^\epsilon)^2) \right) \nonumber \\
&& + \frac 12 D_{\cR_L}\left(2u^0_{\rm per}v^\epsilon_{\nu,q,L}+(v^\epsilon_{\nu,q,L})^2-\nu_L,2u^0_{\rm per}v^\epsilon_{\nu,q,L}+(v^\epsilon_{\nu,q,L})^2-\nu_L\right) \\
&& \mathop{\longrightarrow}_{L \to \infty} {\mathcal E}^\nu(v_{\nu,q}^\epsilon). 
\end{eqnarray*}
Therefore, for each $\epsilon > 0$,
$$
{\mathcal E}^\nu_L(v_{\nu,q,L}) \le {\mathcal E}^\nu(v_\nu) +2 \epsilon,
$$
for $L$ large enough, so that
\begin{equation} \label{eq:limsup}
\limsup_{L \to \infty}{\mathcal E}^\nu_L(v_{\nu,q,L}) \le {\mathcal E}^\nu(v_\nu).
\end{equation}
We are now going to prove that
\begin{equation} \label{eq:liminf}
{\mathcal E}^\nu(u_\nu) \le \liminf_{L \to \infty}{\mathcal E}^\nu_{L}(v_{\nu,q,L}). 
\end{equation}
For each $k \in \N$, we denote by 
$$
\widetilde v_k:=i_{L_k}^\ast v_{\nu,q,L_k} \quad \mbox{and} \quad w_k := i_{L_k}^\ast(H^0_{\rm per}-\epsilon^0_{\rm F})^{1/2}v_{\nu,q,L_k},
$$
where the operator $i_{L_k}$ is defined by (\ref{eq:defiL}).
As $\|\widetilde v_k\|_{L^2(\R^3)} = \|v_{\nu,q,L_k}\|_{L^2_{\rm per}(\Gamma_{L_k})}$ and 
$$
\|w_k\|_{L^2(\R^3)}^2 = \langle (H^0_{\rm per}-\epsilon^0_{\rm F})v_{\nu,q,L_k},v_{\nu,q,L_k}\rangle_{H^{-1}_{\rm per}(\Gamma_{L_k}),H^{-1}_{\rm per}(\Gamma_{L_k})},
$$ 
we can extract from $(\widetilde v_k)_{k \in \N}$ and $(w_k)_{k \in \N}$ subsequences $(\widetilde v_{k_n})_{n \in \N}$ and $(w_{k_n})_{n \in \N}$ which weakly converge in $L^2(\R^3)$ to some $\widetilde v \in L^2(\R^3)$ and $w \in L^2(\R^3)$ respectively, and such that 
$$
\lim_{n \to \infty} {\mathcal E}^\nu(v_{\nu,q,L_{k_n}}) = \liminf_{L \to \infty} {\mathcal E}^\nu(v_{\nu,q,L}).
$$
As $(v_{\nu,q,L_k})_{k \in \N}$ converges to $u_\nu$ strongly in $L^2_{\rm loc}(\R^3)$, we have $\widetilde v = u_\nu$. Let us now prove that $w=(H^0_{\rm per}-\epsilon^0_{\rm F})^{1/2}u_\nu$. For each $\phi \in C^\infty_c(\R^3)$, we infer from Lemma~\ref{lem:weakiL} that
\begin{eqnarray*}
(w,\phi)_{L^2(\R^3)}
 &=& \lim_{n \to \infty} (i_{L_{k_n}}^\ast(H^0_{\rm per}-\epsilon^0_{\rm F})^{1/2} v_{\nu,q,L_{k_n}}, \phi)_{L^2(\R^3)}
\\ &=& \lim_{n \to \infty} (i_{L_{k_n}}^\ast(H^0_{\rm per}-\epsilon^0_{\rm F})^{1/2}i_{L_{k_n}} \widetilde v_{k_n}, \phi)_{L^2(\R^3)}
\\ &=& \lim_{n \to \infty} (\widetilde v_{k_n},i_{L_{k_n}}^\ast (H^0_{\rm per}-\epsilon^0_{\rm F})^{1/2} i_{L_{k_n}} \phi)_{L^2(\R^3)}\\ &=& ( u_\nu ,(H^0_{\rm per}-\epsilon^0_{\rm F})^{1/2} \phi)_{L^2(\R^3)} = ((H^0_{\rm per}-\epsilon^0_{\rm F})^{1/2} u_\nu , \phi)_{L^2(\R^3)}. 
\end{eqnarray*}
As a consequence, $w=(H^0_{\rm per}-\epsilon^0_{\rm F})^{1/2} u_\nu$.

\medskip

Using the weak convergence of $w_{k_n}$ to $w=(H^0_{\rm per}-\epsilon^0_{\rm F})^{1/2}u_\nu$, Fatou's Lemma and Lemma~\ref{lem:CVDL}, we thus obtain 
\begin{eqnarray*}
{\mathcal E}^\nu(u_\nu) & = & \|(H^0_{\rm per}-\epsilon^0_{\rm F})^{1/2}u_\nu\|_{L^2(\R^3)}^2 \\
&& + \int_{\R^3} \left( |u^0_{\rm per}+u_\nu|^{10/3} - |u^0_{\rm per}|^{10/3} - \frac 53 |u^0_{\rm per}|^{4/3}(2u^0_{\rm per}u_\nu+u_\nu^2) \right) \nonumber \\
&& + \frac 12 D\left(2u^0_{\rm per}u_\nu+u_\nu^2-\nu,2u^0_{\rm per}u_\nu+u_\nu^2-\nu\right) \\ &\le& \liminf_{n \to \infty}{\mathcal E}^\nu(v_{\nu,q,L_{k_n}}) =  \liminf_{L \to \infty}{\mathcal E}^\nu(v_{\nu,q,L}).
\end{eqnarray*}
Hence (\ref{eq:liminf}). Gathering (\ref{eq:limsup}) and (\ref{eq:liminf}), we obtain that ${\mathcal E}^\nu(u_\nu)\le {\mathcal E}^\nu(v_\nu)$ and therefore that $u_\nu=v_\nu$ since $u_\nu \in {\mathcal Q}_+$ and (\ref{eq:minpbInu}) has a unique minimizer.

\subsection{Thermodynamic limit  without a charge constraint}

Let $(v_n)_{n \in \N}$ be a minimizing sequence for (\ref{eq:minTFWdefLf}). For all $\eta>0$, for $n$ large enough, 
$$
\beta \|v_n\|_{H^1_{\rm per}(\Gamma_L)}^2 \le {\mathcal E}^\nu_L(v_n) \le {\mathcal E}^\nu_L(0) + \eta = \frac 12 D_{\cR_L}(\nu_L,\nu_L)+ \eta.
$$
Thus, $(v_n)_{n \in \N}$ is bounded in $H^1_{\rm per}(\Gamma_L)$. Extracting a converging subsequence and passing to the liminf in the energy, we obtain a solution $v_{\nu,L}$ to (\ref{eq:minTFWdefLf}), such that
\begin{equation} \label{eq:boundvL}
\beta \|v_{\nu,L}\|_{H^1_{\rm per}(\Gamma_L)}^2 \le \frac 12 D_{\cR_L}(\nu_L,\nu_L).
\end{equation}
We also get
\begin{equation} \label{eq:boundDuvLD}
D_{\cR_L}(u^0_{\rm per} v_{\nu,L},u^0_{\rm per} v_{\nu,L}) \le \widetilde C,
\end{equation}
for some constant $\widetilde C$ independent of $L$.

Clearly, $u^0_{\rm per}+v_{\nu,L}$ is a non-negative solution to
$$
\inf \left\{ E^{\rm TFW}_{\cR_L}(\rho^{\rm nuc}_{\rm per}+\nu_L,w_L), \; w_L \in H^1_{\rm per}(\Gamma_L) \right\}.
$$
Reasoning as in the proof of Proposition~\ref{prop:EU}, we obtain that $u^0_{\rm per}+v_{\nu,L}$ is the only non-negative solution to the above problem, and therefore that $v_{\nu,L}$ is the unique solution to~(\ref{eq:minTFWdefLf}). Besides, $v_{\nu,L} \in H^4_{\rm per}(\Gamma_L)$, $u^0_{\rm per}+v_{\nu,L} > 0$ in $\R^3$, and $v_{\nu,L}$ is solution to the Euler equation~(\ref{eq:eulerpbL2}), which holds in $H^{-1}_{\rm per}(\Gamma_L)$. Taking $u^0_{\rm per}$ as a test function, we get
\begin{eqnarray*}
&& \!\!\!\!\!\!\!\!\!\!
\int_{\Gamma_L} \frac{5}{3} \left( |u^0_{\rm per}+v_{\nu,L}|^{7/3} - |u^0_{\rm per}|^{7/3} - |u^0_{\rm per}|^{4/3}v_{\nu,L} \right) u^0_{\rm per} 
\\ &&  + D_{\cR_L} \left( (2u^0_{\rm per}v_{\nu,L} + v_{\nu,L}^2 - \nu_L),v_{\nu,L}u^0_{\rm per} \right)+ D_{\cR_L} \left( (2u^0_{\rm per}v_{\nu,L} + v_{\nu,L}^2 - \nu_L),(u^0_{\rm per})^2 \right)  =0 .
\end{eqnarray*}
We now remark that the third term can be rewritten as
\begin{eqnarray} 
D_{\cR_L} \left( (2u^0_{\rm per}v_{\nu,L} + v_{\nu,L}^2 - \nu_L),(u^0_{\rm per})^2 \right) &=& g_1 ZL^2 \left( \int_{\Gamma_L} (2u^0_{\rm per}v_{\nu,L} + v_{\nu,L}^2 - \nu_L) \right) \nonumber \\
&& + 
\int_{\Gamma_L} (2u^0_{\rm per}v_{\nu,L} + v_{\nu,L}^2 - \nu_L) W^0_{\rm per},
\label{eq:D01}
\end{eqnarray}
where, as above, $g_1=|\Gamma_1|^{-1} \int_{\Gamma_1}G_{\cR_1}$ and where $W^0_{\rm per}$ is the unique solution in $H^2_{\rm per}(\Gamma_1)$ to 
$$
\left\{ \begin{array}{l} \dps -\Delta W^0_{\rm per} = 4\pi \left( \rho^0_{\rm per} - |\Gamma_1|^{-1} Z \right) \\
\dps W^0_{\rm per} \mbox{ $\cR_1$-periodic}, \quad \int_{\Gamma_1} W^0_{\rm per} = 0 .
\end{array} \right.
$$
We finally obtain
\begin{eqnarray*}
g_1 ZL^2 \left( \int_{\Gamma_L} (\nu-(2u^0_{\rm per}v_{\nu,L} + v_{\nu,L}^2)) \right) & = & \int_{\Gamma_L} \frac{5}{3} \left( |u^0_{\rm per}+v_{\nu,L}|^{7/3} - |u^0_{\rm per}|^{7/3} - |u^0_{\rm per}|^{4/3}v_{\nu,L} \right) u^0_{\rm per} 
\\ &&  + D_{\cR_L} \left( (2u^0_{\rm per}v_{\nu,L} + v_{\nu,L}^2 - \nu_L),v_{\nu,L}u^0_{\rm per} \right) \\ && + \int_{\Gamma_L} (2u^0_{\rm per}v_{\nu,L} + v_{\nu,L}^2 - \nu_L) W^0_{\rm per}.
\end{eqnarray*}
As the right hand side is bounded by $CL^{3/2}$ for a constant $C$ independent of $L$, it holds
$$
\lim_{L \to \infty}  \int_{\Gamma_L} (\nu-(2u^0_{\rm per}v_{\nu,L} + v_{\nu,L}^2)) = 0.
$$

\medskip

Proceeding {\it mutatis mutandis} as in the previous section, it can be shown that the sequence $(v_{\nu,L})_{L \in \N^\ast}$ converges weakly in $H^1_{\rm loc}(\R^3)$ and strongly in $L^p_{\rm loc}(\R^3)$ for all $1\le p < 6$, towards the unique solution $v_\nu$ to (\ref{eq:minpbInu}).

\end{document}